
\documentclass[11pt]{llncs}
\usepackage{tabularx,booktabs,multirow,delarray,array}
\usepackage{graphicx,amssymb,amsmath,amssymb}
\usepackage{latexsym}
\usepackage[absolute]{textpos}

\usepackage[in]{fullpage}

\usepackage{lineno}


\def\calP{\mathcal{P}}
\def\calH{\mathcal{H}}

\def\calV{\mathcal{V}}
\def\calC{\mathcal{C}}
\def\calB{\mathcal{B}}
\def\st{$s$-$t$}

\newcommand{\vd}{\mbox{$V\!D$}}

\newenvironment{proof}{\noindent {\textbf{Proof:}}\rm}{\hfill $\Box$\rm}
\newtheorem{observation}{Observation}


\begin{document}

\title{Bicriteria Rectilinear Shortest Paths among Rectilinear Obstacles in the Plane}

\author{
Haitao Wang
}

\institute{
Department of Computer Science\\
Utah State University, Logan, Utah 84322, USA\\
\email{haitao.wang@usu.edu}
}

\maketitle

\pagenumbering{arabic}
\setcounter{page}{1}

\vspace*{-0.2in}
\begin{abstract}
Given a rectilinear domain $\calP$ of $h$ pairwise-disjoint rectilinear obstacles with a total of $n$ vertices in the plane, we study the problem of computing bicriteria rectilinear shortest paths between two points $s$ and $t$ in $\calP$. Three types of bicriteria rectilinear paths are considered: minimum-link shortest paths, shortest minimum-link paths, and minimum-cost paths where the cost of a path is a non-decreasing function of both the number of edges and the length of the path. The one-point and two-point path queries are also considered. Algorithms for these problems have been given previously. Our contributions are threefold.
First, we find a critical
error in all previous algorithms. Second, we correct the error
in a not-so-trivial way. Third, we further improve the algorithms so
that they are even faster than the previous (incorrect) algorithms
when $h$ is relatively small.
For example, for the minimum-link shortest paths, we obtain the following results. Our algorithm computes a minimum-link shortest \st\ path
in $O(n+h\log^{3/2} h)$ time.
For the one-point queries, we build a data structure of size $O(n+
h\log h)$ in $O(n+h\log^{3/2} h)$ time for a source point $s$, such that given any query point $t$, a minimum-link shortest \st\ path can be computed in $O(\log n)$ time.
For the two-point queries, with $O(n+h^2\log^2 h)$ time and space preprocessing,
a minimum-link shortest \st\ path can be computed in $O(\log n+\log^2 h)$ time for any two query points $s$ and $t$; alternatively, with $O(n+h^2\cdot \log^{2} h \cdot 4^{\sqrt{\log h}})$ time and $O(n+h^2\cdot \log h \cdot 4^{\sqrt{\log h}})$ space preprocessing, we can answer each two-point query in $O(\log n)$ time. Note that $h^2\cdot \log^{2} h \cdot 4^{\sqrt{\log h}}=O(h^{2+\epsilon})$ for any $\epsilon>0$. These results are particularly interesting when $h$ is relatively small. For example, if $h=O(n^{1/2-\epsilon})$ for any $\epsilon>0$, then all above results match the best results for the problems in simple rectilinear polygons, which are optimal. The complexities for the other two types of paths are slightly worse, but still linearly depend on $n$ (in addition to $g(h)$ for some functions $g(h)$ of $h$).
\end{abstract}

\section{Introduction}
\label{sec:intro}

Let $\calP$ be a {\em rectilinear domain} with a total of $h$ holes and $n$ vertices in the plane, i.e., $\calP$ is a multiply-connected region whose boundary is a union of $n$ axis-parallel line segments, forming $h+1$ closed polygonal cycles (i.e., $h$ holes plus an outer boundary). A simple rectilinear polygon is a special case of a rectilinear domain with $h=0$. A {\em rectilinear} path is a path consisting of only horizontal and vertical line segments.

For a rectilinear path $\pi$, we define its {\em length} as the total sum of the lengths of the segments of $\pi$, and we define its {\em link distance} as the number of edges of $\pi$ (each edge is also called a {\em link}). We use the {\em measure} of $\pi$ to refer to both its length and its link distance.
For any two points $s$ and $t$ in $\calP$, a {\em shortest rectilinear path} from $s$ to $t$ is a rectilinear path connecting $s$ to $t$ in $\calP$ with the minimum length, and a {\em minimum-link rectilinear path} is a rectilinear \st\ path with the minimum link distance. Among all shortest rectilinear \st\ paths, the one with the minimum link distance is called a {\em minimum-link shortest \st\ path}; among all minimum-link \st\ paths, the one with the minimum length is called a {\em shortest minimum-link \st\ path}.
We define the {\em cost} of $\pi$ as a non-decreasing function $f$ of both the length and the link distance of $\pi$. We assume that given the number of links of $\pi$ and the length of $\pi$, its cost can be computed in constant time. Depending on the context, the measure of $\pi$ may also refer to its cost. A {\em minimum-cost} path from $s$ to $t$ is a rectilinear \st\ path in $\calP$ with the minimum cost (with respect to the cost function $f$).

All the three types of paths discussed above (i.e.,  minimum-link
shortest paths, shortest minimum-link paths, and minimum-cost paths)
are called {\em bicriteria shortest paths}. In order to differentiate
between ``bicriteria shortest paths'' and ``shortest paths'', we will
use {\em optimal paths} to refer to these bicriteria shortest paths.
Since some observations and algorithmic schemes may be applicable to
all three types of optimal paths, unless otherwise stated, a statement
made to ``optimal paths'' should be applicable to all three types of
optimal paths.

In this paper, we study the problem of computing all three types of optimal paths between two points $s$ and $t$ in $\calP$. Their one-point and two-point queries are also considered.

\subsection{Previous Work}

These problems have been studied before. The following results are applicable to all three types of optimal paths.

Yang et al.~\cite{ref:YangOn92} first presented an $O(nr + n\log n)$ time algorithm, where $r$ is the number of extreme edges of $\calP$ (an edge $e$ of $\calP$ is {\em extreme} if its two adjacent edges lie on the same side of the line containing $e$; $r=\Omega(n)$ in the worst case). Later, Yang et al.~\cite{ref:YangRe95} proposed an algorithm of $O(n\log^2 n)$ time and $O(n\log n)$ space and another algorithm of $O(n\log^{3/2} n)$ time and space; Chen et. al.~\cite{ref:ChenOn01} improved the algorithm to $O(n\log^{3/2} n)$ time and $O(n\log n)$ space.

The {\em one-point optimal path query problem}, where $s$ is the
source and $t$ is a query point, was also studied. Based on the
algorithm of Yang et al.~\cite{ref:YangRe95}, Chen et.
al.~\cite{ref:ChenOn01} built a data structure of $O(n\log n)$ size in
$O(n\log^{3/2}n)$ time such that for each query point $t$, the measure
of the optimal \st\ path can be computed in $O(\log n)$ time and an
actual path can be output in additional time linear in the number of edges of the path. For simplicity, in the following, when we say that the query time of a data structure for finding a path is $O(g(n))$, we mean that the measure of the path can be computed in $O(g(n))$ time and an actual path can be output in additional time linear in the number of edges of the path. 

The {\em two-point optimal path query problem}, i.e., both $s$ and $t$ are query points, was also studied by Chen et. al.~\cite{ref:ChenOn01}, where a data structure of $O(n^2\log^2 n)$ size was built in $O(n^2 \log^{2}n)$ time such that each two-point query can be answered in $O(\log^2 n)$ time.

\subsection{Our Results}

We provide a comprehensive study on these problems. Our contributions are threefold.

First, we show that all the algorithms in the previous work mentioned
above are incorrect. More specifically, we find a critical error in
the algorithm of Yang et al.~\cite{ref:YangRe95}. Since the algorithms
and data structures of Chen et. al.~\cite{ref:ChenOn01} are all based
on the method of Yang et al.~\cite{ref:YangRe95}, the above results of
Chen et. al.~\cite{ref:ChenOn01} are not correct either. A similar
error also appears in the algorithm of  \cite{ref:YangOn92}. We should
point out that the technique of Chen et. al.~\cite{ref:ChenOn01},
which follows the similar idea in Chen et al.~\cite{ref:ChenSh00} for
computing $L_1$ shortest paths in arbitrary polygonal domains, would
work if it were based on a correct algorithm (for example, it still works in our new algorithm).

Second, we fix the error of Yang et al.~\cite{ref:YangRe95} in a
not-so-trivial way. However, the complexities are not the same as
before for all three types of optimal paths. Specifically, for
computing a minimum-link shortest path, our corrected algorithm runs
in $O(n\log^{3/2} n)$ time and $O(n\log n)$ space (with the help of
the technique of Chen et. al.~\cite{ref:ChenOn01} to
reduce a factor of $\log^{1/2}n$).
For the other two types of optimal paths, however, the complexities have one more $O(n)$ factor, i.e., $O(n^2\log^{3/2} n)$ time and $O(n^2\log n)$ space.

Third, we further improve the algorithms in the way that the complexities only linearly depend on $n$ (in addition to $g(h)$ for some functions $g(h)$ of $h$). For computing a minimum-link shortest path, our algorithm runs in $O(n+h\log^{3/2} h)$ time and $O(n+h\log h)$ space. For computing other two types of optimal paths, our algorithm runs in $O(n+h^2\log^2 h)$ time and $O(n+h^2\log h)$ space.
We also obtain data structures for one-point and two-point queries.
Our results are summarized in Table~\ref{tab:results}. Note that for two-point queries, we give two data structures for each problem with tradeoff between the preprocessing and the query time. We also consider the two-point query problem for minimum-link paths (without considering the lengths) since the problem was not studied before (but the one-point query problem has already been studied, as discussed below).

Our results are particularly interesting when $h$ is
relatively small. For example if $h=O(n^{1/2-\epsilon})$ for any
$\epsilon>0$, then for finding a single optimal path of any type, our
algorithm runs in $O(n)$ time, and our data structures for the
minimum-link shortest path and minimum-link path queries are also
optimal.

\begin{table}[t]
\begin{center}
{
\footnotesize
\begin{tabularx}{0.95\textwidth}{lllll}
\toprule
\addlinespace[1ex]
  & & One-Point Queries\ \ \  & \multicolumn{2}{c}{Two-Point Queries} \\
\addlinespace[1ex]
\midrule[0.01in]
 \multirow{3}{*}{Min-Link Shortest Paths\ \ } &Preprocess Time\ \ &$O(n + h\log^{3/2} h)$ & $O(n +  h^2\log^{2} h)$& $O(n+h^2\log^2 h4^{\sqrt{\log h}})$\\
 & Space & $O(n + h\log h)$ & $O(n+  h^2\log^{2} h)$  & $O(n+h^2\log h4^{\sqrt{\log h}})$ \\
 & Query Time & $O(\log n)$ \  & $O(\log n+\log^2 h)$ \ \  & $O(\log n)$\\
\midrule[0.008in]
 \multirow{3}{*}{Shortest Min-Link Paths\ \ } &Preprocess Time\ \ &$O(n + h^2\log^{3/2} h)$ & $O(n +  h^3\log^{2} h)$& $O(n+h^3\log^2 h4^{\sqrt{\log h}})$\\
 & Space & $O(n + h^2\log h)$ & $O(n+  h^3\log^{2} h)$  & $O(n+h^3\log h4^{\sqrt{\log h}})$ \\
 & Query Time & $O(\log n+\log^2 h)$ \  & $O(\log n+\log^3 h)$ \ \  & $O(\log n+\log^2 h)$\\
 \midrule[0.008in]
 \multirow{3}{*}{Minimum-Cost Paths\ \ } &Preprocess Time\ \ &$O(n + h^2\log^{3/2} h)$ & $O(n +  h^3\log^{2} h)$& $O(n+h^3\log^2 h4^{\sqrt{\log h}})$\\
 & Space & $O(n + h^2\log h)$ & $O(n+  h^3\log^{2} h)$  & $O(n+h^3\log h4^{\sqrt{\log h}})$ \\
 & Query Time & $O(\log n+h\log h)$ \  & $O(\log n+h\log^2 h)$ \ \  & $O(\log n+h\log h)$\\
 \midrule[0.008in]
 \multirow{3}{*}{Minimum-Link Paths\ \ } &Preprocess Time\ \ & & $O(n +  h^2\log^{2} h)$& $O(n+h^2\log^2 h4^{\sqrt{\log h}})$\\
 & Space &  & $O(n+  h^2\log^{2} h)$  & $O(n+h^2\log h4^{\sqrt{\log h}})$ \\
 & Query Time &  & $O(\log n+\log^2 h)$ \ \  & $O(\log n)$\\
\bottomrule
\end{tabularx}
\vspace*{0.1in}
\caption{\footnotesize Summary of our data structures on one-point and two-point optimal path queries. Note that $\log^{2} h \cdot 4^{\sqrt{\log h}}=O(h^{\epsilon})$ for any $\epsilon>0$.
}
\label{tab:results}
}
\end{center}
\vspace{-0.4in}
\end{table}

It is easy to see that the minimum-link shortest paths and the shortest minimum-link paths are special cases of minimum-cost paths, and we discuss them separately mainly because our results for the two special cases are better that those for the minimum-cost paths.
In fact, as the cost function $f$ is quite general, our algorithm for computing minimum-cost paths may find many applications. We give two examples below.

Polishchuk and Mitchell~\cite{ref:PolishchukkLink05} gave an
$O(kn\log^2 n)$ time algorithm for computing a shortest \st\ path with
at most $k$ links for a given integer $k$, which improves the $O(kn^2)$ time algorithm in \cite{ref:YangOn92}.
As indicated in~\cite{ref:PolishchukkLink05}, the problem can be solved using any algorithm that can find a minimum-cost path with the cost function defined as $f(a,b)=a$ if $b\leq k$ and $f(a,b)=\infty$ otherwise, where $a$ and $b$ are the length and the link distance of the path, respectively.
Partially due to this reason, Polishchuk and
Mitchell~\cite{ref:PolishchukkLink05} already suspected that there is
a misunderstanding on the algorithms of
\cite{ref:ChenOn01,ref:YangRe95} for computing minimum-cost paths. We
thus confirm their suspicion. On the other hand, applying our new (and
correct) algorithm for minimum-cost paths can solve the problem in
$O(n+h^2\log^{3/2} h)$ time, which is faster than the algorithm
in~\cite{ref:PolishchukkLink05} when $h$ is sufficiently small or when $k$ is relatively large.

As a dual problem, finding a minimum-link \st\ path with length at
most a given value $l$ was also studied in~\cite{ref:YangOn92}, where a
worst-case $O(n^2(r+\log n))$ time algorithm was given with
$r$ as the number of extreme edges of $\calP$. The problem can also be solved using any minimum-cost path algorithm by defining the cost function as $f(a,b)=b$ if $a\leq l$ and $f(a,b)=\infty$ otherwise. Hence, applying our algorithm for minimum-cost paths can solve the problem in $O(n+h^2\log^{3/2} h)$, which improves the algorithm of~\cite{ref:YangOn92} since it holds that $r\geq h$.

\subsection{Other Related Work}

If $\calP$ is a simple rectilinear polygon (i.e., $h=0$), then there always exists a rectilinear \st\ path that has both the minimum length and the minimum link distance for any two points $s$ and $t$ in $\calP$~\cite{ref:deBergOn91,ref:HershbergerCo94}. de Berg~\cite{ref:deBergOn91} built a data structure of $O(n\log n)$ size in $O(n\log n)$ time that can find such a path in $O(\log n)$ time for any two-point query. The preprocessing time and space were both reduced to $O(n)$ by Schuierer~\cite{ref:SchuiererAn96} (with $O(\log n)$ query time).

If $\calP$ is a general rectilinear domain with $h\neq 0$, then there may not exist a rectilinear path that is both a minimum-link path and a shortest path~\cite{ref:YangOn92}. The problems of finding only minimum-link paths or only shortest paths have been studied extensively.
Imai and Asano~\cite{ref:ImaiEf86} presented an $O(n\log n)$ time and space algorithm for finding a minimum-link \st\ path in $\calP$, and the space was reduced to $O(n)$~\cite{ref:DasGe91,ref:MitchellMi14,ref:SatoA87}. Recently, Mitchell et al.~\cite{ref:MitchellAn15} proposed an $O(n+h\log h)$ time and $O(n)$ space algorithm for the problem, after $\calP$ is triangulated (which can be done in $O(n\log n)$ time or $O(n+h\log^{1+\epsilon}h)$ time for any $\epsilon>0$~\cite{ref:Bar-YehudaTr94}). The algorithms in~\cite{ref:DasGe91,ref:MitchellMi14,ref:MitchellAn15} also construct an $O(n)$ size data structure that can answer each one-point minimum-link path query in $O(\log n)$ time.

For computing shortest \st\ paths in $\calP$, Clarkson et al.~\cite{ref:ClarksonRe87} gave an algorithm of $O(n\log^2 n)$  time and $O(n\log n)$ space, and as a tradeoff between time and space, they modified their algorithm so that it runs in $O(n\log^{3/2}n)$ time and space~\cite{ref:ClarksonRe88}. Wu et al.~\cite{ref:WuRe87} proposed an $O(n\log r +r^2\log t)$ time algorithm, where $r$ is the number of extreme edges of $\calP$, and the algorithm was later improved to $O(n\log r+r\log^{3/2}r)$ time~\cite{ref:YangRe95}. Mitchell~\cite{ref:MitchellAn89,ref:MitchellL192} solved the problem in $O(n\log n)$ time and $O(n)$ space, and Chen and Wang~\cite{ref:ChenA11,ref:ChenL113STACS} reduced the time to $O(n+h\log h)$ after $\calP$ is triangulated.

If $\calP$ is an arbitrary polygonal domain (i.e., not rectilinear), then the results from \cite{ref:ChenA11,ref:ChenL113STACS,ref:ClarksonRe87,ref:ClarksonRe88,ref:MitchellAn89,ref:MitchellL192} are also applicable to finding arbitrary shortest paths under $L_1$ metric. In addition, the algorithms in \cite{ref:ChenA11,ref:ChenL113STACS,ref:MitchellAn89,ref:MitchellL192} can be used to compute an $O(n)$ size data structure so that each one-point $L_1$ shortest path query can be answered in $O(\log n)$ time. For two-point $L_1$ shortest path queries, Chen et al.~\cite{ref:ChenSh00} constructed a data structure of size $O(n^2\log n)$ in $O(n^2\log^2 n)$ time that can answer each query in $O(\log^2 n)$ time. Recently, Chen et al.~\cite{ref:ChenTw16} reduced the query time to $O(\log n)$ by building a data structure of size  $O(n+h^2\cdot \log h \cdot 4^{\sqrt{\log h}})$ in $O(n+h^2\cdot \log^2 h \cdot 4^{\sqrt{\log h}})$ time.

To find a minimum-link \st\ path between two points $s$ and $t$ in an arbitrary polygonal domain $\calP$, Mitchell~\cite{ref:MitchellMi92} gave an $O(E\alpha(n)\log^2 n)$ time algorithm, where $\alpha(n)$ is the  inverse of Ackermann's function and $E$ is the size of the visibility graph of $\calP$ and $E=\Theta(n^2)$ in the worst case.
The one-point query problem was also studied in~\cite{ref:MitchellMi92}.

In the following, unless otherwise stated, a path always refers to a rectilinear path.

\subsection{Our Techniques}

Given two points $s$ and $t$ in the rectilinear domain $\calP$, to
find an optimal \st\ path, the algorithm of Yang et
al.~\cite{ref:YangRe95} first built a ``path-preserving'' graph $G$ of size $O(n\log n)$ by using the idea of Clarkson et al.~\cite{ref:ClarksonRe87}. Then, it
is shown that $G$ contains an \st\ path $\pi_G(s,t)$ that is homotopic
to an optimal \st\ path $\pi(s,t)$ in $\calP$ with the same
length, and further, $\pi(s,t)$ can be obtained from $\pi_G(s,t)$ by performing certain ``dragging'' operations. Motivated by this observation, Yang et
al.~\cite{ref:YangRe95} computed an optimal \st\ path by applying
Dijkstra's algorithm on $G$ and simultaneously performing the
dragging operations. We find a critical error in their way of applying
Dijkstra's algorithm. We fix the error
by using a ``path-based'' Dijkstra's algorithm and maintaining some additional  information, and we prove that our algorithm is correct. Due to that we need to maintain more information on computing shortest minimum-link paths and minimum-cost paths, our algorithm for them runs slower than that for computing minimum-link shortest paths.

To further reduce the running time (for small $h$), our main idea is to use a reduced graph $G_r$ of size $O(h\log h)$ instead of $G$. We show
that $G_r$ contains an \st\ path $\pi_{G_r}(s,t)$ that is homotopic
to an optimal \st\ path $\pi(s,t)$ in $\calP$ with the same
length, and further, $\pi(s,t)$ can be obtained from $\pi_{G_r}(s,t)$ by performing the dragging operations as in~\cite{ref:YangRe95} and a new kind of operations, called {\em through-corridor-path generating operations}.
The graph $G_r$ is built based on a corridor structure of $\calP$,
which was used to find minimum-link paths in~\cite{ref:MitchellAn15}.
More specifically, we decompose $\calP$ into $O(h)$ {\em junction rectangles} and $O(h)$ {\em corridors}. Each
corridor is a simple rectilinear polygon. Although each corridor may have $\Theta(n)$
vertices, we show that we only need to consider at most four points of each corridor to build the graph $G_r$. To this end, we make use of
the histogram partitions of rectilinear simple polygons~\cite{ref:SchuiererAn96}.

To solve the one-point queries, the approach of Chen et al.~\cite{ref:ChenOn01}
is to ``insert'' the query point $t$ to the graph $G$ to obtain
a set $V_g(t)$ of $O(\log n)$  vertices (called ``gateways'') of $G$ such that an optimal path can be obtained by performing the dragging operations from the gateways. We follow the similar scheme but on our reduced graph $G_r$, where only $O(\log h)$ gateways are necessary. Further, we also need to  utilize the techniques of Schuierer~\cite{ref:SchuiererAn96} for simple rectilinear polygons.

For the two-point queries, the approach of Chen et al.~\cite{ref:ChenOn01}
inserts both query points $s$ and $t$ to the graph $G$ to obtain a
 set $V_g(s)$ of $O(\log n)$ gateways for $s$ and a
set $V_g(t)$ of $O(\log n)$ gateways of $G$ for $t$, so that an optimal
\st\ path can be obtained by performing the dragging operations from
these gateways. The query time becomes $O(\log^2 n)$ because every pair of
points $(p,q)$ with $p\in V_g(s)$ and $q\in V_g(t)$ needs to be considered.
We again use the same scheme but on the graph $G_r$ with only $O(\log h)$ gateways for both $s$ and $t$, which reduces the query time to $O(\log n+ \log^2 h)$. To further reduce the query time to
$O(\log n)$, we follow the scheme in~\cite{ref:ChenTw16} for solving two-point $L_1$ shortest path queries in arbitrary polygonal domains. The main idea is to build a larger graph by adding more vertices
to $G_r$ such that $O(\sqrt{\log h})$ gateways are sufficient for each query point.

The rest of the paper is organized as follows. We introduce some notation and concepts in Section~\ref{sec:pre}.
In Section~\ref{sec:old}, we review the algorithm  given by Yang, Lee, and Wong~\cite{ref:YangRe95} (we refer to it as the YLW algorithm), point out the error, and correct it. 
In Section~\ref{sec:new}, we further improve the algorithm for finding a single optimal \st\ path. The one-point and two-point path query problems are discussed in Sections~\ref{sec:onepoint} and \ref{sec:twopoint}, respectively.

\section{Preliminaries}
\label{sec:pre}

In this section, we define notation and review some concepts.
Some terminologies are borrowed from the previous work, e.g.,
\cite{ref:ChenTw16,ref:ChenSh00,ref:ClarksonRe87,ref:YangRe95}

For any two points $p$ and $q$ of $\calP$,
if the line segment $\overline{pq}$ is in $\calP$, then we say that $p$ is {\em visible} to $q$.
Consider a vertical line $l$ and a point $p\in \calP$. Let $p'$ be the point on $l$ whose $y$-coordinate is the same as that of $p$. We call $p'$ the {\em horizontal projection} of $p$ on $l$. If $p$ is visible to $p'$, then we say that $p$ is {\em horizontally visible} to $l$.

For any two points $p$ and $q$, we use $R_{pq}$ to denote the rectangle with $\overline{pq}$ as a diagonal. A path in $\calP$ is {\em L-shaped} if it consists of a horizontal segment and a vertical segment (each of them may be empty).
A path is {\em U-shaped} if it consists of three segments $s_1$, $s_2$, and $s_3$ such that $s_1$ and $s_3$ are on the same side of the line containing $s_2$ (e.g., see Fig.~\ref{fig:staircase}). A path is called a {\em staircase path} if it does not contain a U-shaped subpath. Note that a staircase path is a shortest path.

\begin{figure}[t]
\begin{minipage}[t]{\linewidth}
\begin{center}
\includegraphics[totalheight=1.0in]{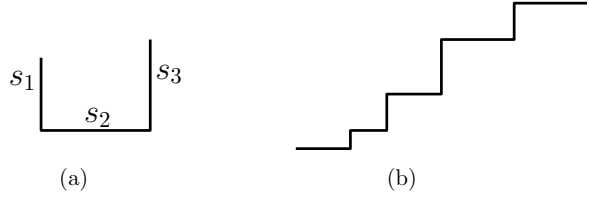}
\caption{\footnotesize (a) a U-shaped path; (b) a staircase path.
}
\label{fig:staircase}
\end{center}
\end{minipage}
\vspace*{-0.15in}
\end{figure}

Let $\calV$ denote the set of all vertices of $\calP$. We let $\calV$ also include the two points $s$ and $t$. We review a  path-preserving graph $G(\calV)$ on $\calV$, which was originally from \cite{ref:ClarksonRe87} and has been used elsewhere, e.g., \cite{ref:ChenTw16,ref:ChenSh00,ref:LeeSh91,ref:YangRe95}. The vertex set of $G(\calV)$ consists of the points of $\calV$ and {\em Steiner points} on some vertical lines, called {\em cut-lines}. The cut-lines and the Steiner points are defined as follows.

Let $v_m$ be the point of $\calV$ with the median $x$-coordinate. The vertical
line $l_m$ through $v_m$ is a {\em cut-line}. For each point $v\in \calV$, if $v$ is horizontally visible to $l_m$, then the horizontal projection of $v$ on $l_m$ is a {\em Steiner point}. Let $\calV_l$ (resp., $\calV_r$) be the points of $\calV$ on the left (resp., right) side of $l_m$.
The cut-lines and Steiner points on the left and right sides of $l_m$ are defined on $\calV_l$ and $\calV_r$, recursively. We use a binary tree $T(\calV)$ to represent the above recursive procedure, called {\em cut-line tree}.
Each node $u\in T(\calV)$ corresponds to a cut-line $l(u)$ and a subset $V(u)\subseteq \calV$. If $u$ is the root, then $l(u)$ is $l_m$ and $V(u)=\calV$. The left and right subtrees of the root are defined recursively on $\calV_l$ and $\calV_r$. Hence, $T(\calV)$ has $O(n)$ nodes and each point of $\calV$ can define a Steiner point on at most $O(\log n)$ cut-lines. Therefore, there are $O(n\log n)$ Steiner points in total.

The vertex set of $G(\calV)$ consists of all points of $\calV$ and all Steiner points defined above. The edges of the graph are defined as follows. First, if a point $v\in \calV$ defines a Steiner point $v'$ on a cut-line, then $G(\calV)$ has an edge $\overline{vv'}$. Second, for any two adjacent Steiner points $p_1$ and $p_2$ on each cut-line, if the two points are visible to each other, then $G(\calV)$ has an edge $\overline{p_1p_2}$.

Clearly, $G(\calV)$ has $O(n\log n)$ nodes and $O(n\log n)$ edges. Each edge of the graph is either horizontal or vertical. Each edge of $G(\calV)$ has a weight that is the length of the corresponding line segment.
The graph $G(\calV)$ can be built in $O(n\log^{2} n)$
time~\cite{ref:ClarksonRe87,ref:LeeSh91,ref:YangRe95}\footnote{The
graph $G(\calV)$ introduced in \cite{ref:ClarksonRe87} also includes
Steiner points on horizontal cut-lines and projection points of
$\calV$ on the boundary of $\calP$. However, in our problem, since
$\calP$ is rectilinear, by the similar analysis as in
\cite{ref:ClarksonRe87}, we can show that our graph $G(\calV)$ is also
a path-preserving graph. We will give analysis details when we prove a
similar observation on our reduced graph $G_r$ in
Section~\ref{sec:reduced} (i.e., Lemma~\ref{lem:target} and Corollary~\ref{coro:target}).}. The
following lemma will be useful later.

\begin{lemma}\label{obser:rectangle}{\em \cite{ref:ClarksonRe87,ref:LeeSh91,ref:YangRe95}}
For any two points $p$ and $q$ in $\calV$, if $R_{pq}$ is empty (i.e., $R_{pq}$ is in $\calP$), then $G(\calV)$ contains a staircase path from $p$ to $q$.
\end{lemma}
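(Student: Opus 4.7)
The plan is to argue by induction on the number of vertices in $\calV$, following the recursive structure of the cut-line tree $T(\calV)$. Fix $p, q \in \calV$ with $R_{pq} \subseteq \calP$, and let $l_m$ be the root cut-line of $T(\calV)$. Without loss of generality, assume $p$ is weakly to the upper-left of $q$, so that a ``staircase'' from $p$ to $q$ means a path monotone in both coordinates.

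The induction splits into two cases according to the relative position of $p,q$ with respect to $l_m$. \textbf{Case 1.} $p$ and $q$ lie on the same open side of $l_m$. Then both belong to $\calV_l$ or both to $\calV_r$; the cut-line subtree on that side produces a subgraph of $G(\calV)$ that is exactly $G(\calV_l)$ (resp.\ $G(\calV_r)$). Since $R_{pq}$ is empty in $\calP$, the inductive hypothesis applied to the smaller set gives a staircase $p$-$q$ path, which also lies in $G(\calV)$. \textbf{Case 2.} $p$ and $q$ lie on opposite sides of $l_m$ (allowing one to lie on $l_m$). Let $p'$ and $q'$ be the horizontal projections of $p$ and $q$ on $l_m$. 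Because $l_m$ crosses $R_{pq}$ and $R_{pq}\subseteq \calP$, the horizontal segments $\overline{pp'}$ and $\overline{qq'}$ lie in $\calP$; hence $p, q$ are horizontally visible to $l_m$, and $p', q'$ are Steiner points on $l_m$ with edges $\overline{pp'}, \overline{qq'} \in G(\calV)$. I would then argue that the consecutive Steiner points on $l_m$ between $p'$ and $q'$ give rise to a vertical chain of edges in $G(\calV)$, yielding a staircase path $p \to p' \to \cdots \to q' \to q$.

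The main technical step, and the one I expect to require the most care, is justifying that this vertical chain on $l_m$ really lies in $G(\calV)$. The construction only places an edge between two adjacent Steiner points on a cut-line if they are visible to each other. I would argue as follows: the vertical segment $\overline{p'q'}$ lies inside $R_{pq}$, hence inside $\calP$; in particular every Steiner point on $l_m$ between $p'$ and $q'$ lies in $\overline{p'q'}$, so any two consecutive such Steiner points are joined by a subsegment of $\overline{p'q'}\subseteq\calP$ and therefore mutually visible. Thus the edges between consecutive Steiner points along $\overline{p'q'}$ all belong to $G(\calV)$. Concatenating the two horizontal projection edges with the vertical chain produces an $x$- and $y$-monotone path from $p$ to $q$, which is by definition a staircase.

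Finally, I would dispatch the degenerate configurations. If $p$ and $q$ share an $x$-coordinate, then either both lie on $l_m$ (so Case~2 with $p'=p, q'=q$ applies directly) or both lie strictly on one side (so Case~1 applies). If they share a $y$-coordinate, then in Case~2 the projections coincide, $p'=q'$, and the ``vertical chain'' degenerates to a single point, giving the L-shaped path $p\to p'\to q$; this is still a (degenerate) staircase. The base case $|\calV|\le 2$ is trivial since the only $p,q$ pair is connected directly by a horizontal, vertical, or zero-length edge. Together these cases close the induction and establish the lemma.
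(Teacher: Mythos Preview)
Your argument is correct and follows the standard inductive proof along the cut-line tree that appears in the cited references. The paper itself does not supply a proof of this lemma; it is stated with citations to \cite{ref:ClarksonRe87,ref:LeeSh91,ref:YangRe95} and used as a black box, so there is no in-paper proof to compare against. Your Case~2 verification that $\overline{pp'}$, $\overline{qq'}$, and $\overline{p'q'}$ all lie inside $R_{pq}\subseteq\calP$ (hence the projection edges and the vertical chain of adjacent-Steiner-point edges on $l_m$ are present in $G(\calV)$) is exactly the crux of the classical argument, and your Case~1 reduction is justified because the recursive construction on $\calV_l$ (resp.\ $\calV_r$) contributes to $G(\calV)$ precisely the subgraph $G(\calV_l)$ (resp.\ $G(\calV_r)$). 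One small remark: your base case $|\calV|\le 2$ is not literally ``connected directly by a single edge''; rather it already falls under Case~2 (with one of $p,q$ on $l_m$), so you could simply take $|\calV|=1$ as the trivial base and let the induction handle $|\calV|=2$.
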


For any path $\pi$ in $\calP$, let $L_1(\pi)$ denote its length and
let $L_d(\pi)$ denote its link distance.
For any two points $a$ and $b$ on $\pi$, if the context is clear, we often use
$\pi(a,b)$ to denote the subpath of $\pi$ between $a$ and $b$.
For any two points $p$ and
$q$ in the plane, we say that $q$ is to the {\em northeast} of $p$ if
$q$ is in the first quadrant (including its boundary) with respect to
$p$.  Similarly, we define {\em northwest, southwest}, and {\em
southeast}.

\section{The YLW Algorithm and Our Correction}
\label{sec:old}

In this section, we first review the YLW algorithm \cite{ref:YangRe95} and then point out the error. Finally, we will fix the error and prove the correctness our new algorithm.

\subsection{The YLW Algorithm}

The YLW algorithm is
essentially based on the following observation.

\begin{lemma}\label{lem:homo} {\em (Yang et al.~\cite{ref:YangRe95})}
For any optimal path $\pi$ from $s$ to $t$ in $\calP$, there is path $\pi_G$ in $G(\calV)$ such that $L_1(\pi_G)=L_1(\pi)$ and $\pi_G$ is  homotopic to $\pi$ (i.e., $\pi_G$ can be continuously dragged to $\pi$ without going outside of $\calP$).
\end{lemma}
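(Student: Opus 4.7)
My plan is to prove the lemma by decomposing the optimal path $\pi$ into its maximal staircase subpaths and applying Lemma~\ref{obser:rectangle} to each piece. First, I would argue that without loss of generality every turn of $\pi$ may be assumed to lie at a point of $\calV$, i.e., either at $s$, $t$, or at a reflex vertex of $\calP$. Any corner of $\pi$ that is not pinned against such a vertex can be translated parallel to itself, along the nearest obstacle edge or across free space, until it is, without changing the length, the link count, or the homotopy class of $\pi$; for an optimal path (minimum-link shortest, shortest minimum-link, or minimum-cost) such slides are free, so we may canonicalize $\pi$ to this form.

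Next, I would decompose the canonicalized $\pi$ into its maximal staircase subpaths $\pi_1, \pi_2, \ldots, \pi_k$, meeting at points $p_0 = s, p_1, \ldots, p_{k-1}, p_k = t$, all of which lie in $\calV$ by the previous step. Each $\pi_i$ is monotone in both coordinates between $p_{i-1}$ and $p_i$. The crucial observation is that a monotone staircase cannot backtrack to sidestep an obstacle, so the axis-aligned rectangle $R_{p_{i-1}p_i}$ swept out by $\pi_i$ must be entirely contained in $\calP$. Lemma~\ref{obser:rectangle} then provides a staircase path $\pi_{G,i}$ in $G(\calV)$ from $p_{i-1}$ to $p_i$.

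Finally, I would glue the $\pi_{G,i}$ in order to obtain $\pi_G$, and verify the two required properties. Length is immediate: every staircase path between $p_{i-1}$ and $p_i$ has $L_1$ length $|x(p_{i-1}) - x(p_i)| + |y(p_{i-1}) - y(p_i)|$, which equals $L_1(\pi_i)$, so $L_1(\pi_G) = \sum_i L_1(\pi_{G,i}) = \sum_i L_1(\pi_i) = L_1(\pi)$. For homotopy, both $\pi_i$ and $\pi_{G,i}$ lie inside the same obstacle-free rectangle $R_{p_{i-1}p_i}$, a simply connected region, so they are homotopic within it; concatenating these local homotopies yields a homotopy between $\pi$ and $\pi_G$ inside $\calP$. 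The main obstacle I expect is the canonicalization step: in particular, one must carefully handle paths whose turns rest along long collinear obstacle edges and argue that such a corner can indeed be slid to an actual reflex vertex of $\calV$ without disturbing either the rest of the path or its homotopy class — once this is settled, the monotonicity-implies-empty-rectangle argument and the gluing are essentially automatic.
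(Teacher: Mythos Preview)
Your proposal has a genuine gap at the step you call ``the crucial observation.'' A staircase (monotone in both coordinates) subpath $\pi_i$ from $p_{i-1}$ to $p_i$ does \emph{not} force the bounding rectangle $R_{p_{i-1}p_i}$ to be obstacle-free. A monotone staircase can thread its way around one or several obstacles lying inside $R_{p_{i-1}p_i}$ while never backtracking; monotonicity only prevents the path from leaving the rectangle, not obstacles from sitting inside it. Hence Lemma~\ref{obser:rectangle} cannot be applied directly to the pair $(p_{i-1},p_i)$, and both the length and homotopy arguments that follow collapse: the graph path you get from Lemma~\ref{obser:rectangle} would live in a rectangle that is not actually contained in $\calP$, so neither its existence nor its homotopy to $\pi_i$ is guaranteed.

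This is exactly the difficulty the paper's proof works to overcome. After isolating a staircase subpath, the paper first \emph{pushes} it so that every segment contains a point of $\calV$ (not that every turn is at $\calV$, which is a different and harder-to-justify normalization than the one you propose). This breaks the staircase into L-shaped pieces between consecutive $\calV$-points $p',q'$. Even then $R_{p'q'}$ need not be empty, so the paper invokes a further refinement: inside $R_{p'q'}$ there is a monotone chain $p'=v_0,v_1,\ldots,v_t=q'$ of points of $\calV$ with each $R_{v_{i-1}v_i}$ empty, and the L-shaped path is homotopic to the concatenation of the diagonals $\overline{v_{i-1}v_i}$. Only at this finest level does Lemma~\ref{obser:rectangle} apply. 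Your argument is missing both of these refinement steps.
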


We briefly review the proof of Lemma~\ref{lem:homo} because it will help to understand the algorithm and also help us to prove the correctness of our new algorithm given later.

Let $\pi$ be any optimal path from $s$ to $t$. It is shown (Lemma 2.1
\cite{ref:YangRe95}) that $\pi$ can be divided into a sequence of
staircase subpaths, and the two endpoints of each such subpath are in
$\calV$. Hence, it is sufficient to prove the lemma for any
staircase subpath of $\pi$. In the following, we consider a staircase
subpath $\pi(p,q)$ of $\pi$ with $p$ and $q$ as the two endpoints. We
further obtain a {\em pushed staircase path} as follows. Without loss
of generality, we assume $q$ is to the northeast of $p$ and the
segment of $\pi(p,q)$ incident to $p$ is horizontal. We push the first
vertical segment of $\pi(p,q)$ rightwards until either it hits a
vertex of $\calV$ or it becomes collinear with the second vertical segment
of $\pi(p,q)$. If the latter case happens, then we merge the two
vertical segments and keep pushing the merged vertical segment
rightwards. If the first case happens, then we push the next
horizontal segment upwards in a similar way. The procedure stops until
we arrive at the segment incident to $q$. Let $\pi'$ denote the
resulting path.
Observe that $L_1(\pi')=L_1(\pi(p,q))$, $\pi'$ is homotopic to $\pi(p,q)$, and $\pi'$ is also a staircase path. $\pi'$ is called a {\em pushed staircase path}~\cite{ref:YangRe95}. Also note that each segment of $\pi'$ contains at least one vertex of $\calV$.

\paragraph{Remark.}
There are eight types of pushed staircase paths from $p$ to $q$ depending on which quadrant of $p$ the point $q$ lies in and also depending on whether the first segment of the path incident to $p$ is horizontal or vertical.
\paragraph{}

The vertices of $\calV$ partition $\pi'$ into subpaths.
To prove the lemma, it is sufficient to show the following {\em claim}: for any subpath $\pi'(p',q')$ of $\pi'$ between any two adjacent vertices $p'$ and $q'$ of $\calV$ on $\pi'$, there is a path $\pi_G(p',q')$ connecting $p'$ and $q'$ in $G(\calV)$ with the same length and the two paths are homotopic. Because every segment of $\pi'$ contains at least one vertex of $\calV$, $\pi'(p',q')$ must be an L-shaped path. Without loss of generality, we assume $q'$ is to the northwest of $p'$.
If the rectangle $R_{p'q'}$ is empty (this includes the case where $\pi'(p',q')$ is a single segment), then by Lemma~\ref{obser:rectangle}, the above claim is true. Otherwise, as shown in \cite{ref:YangRe95} (Lemma 4.5), there are some points of $\calV$ in $R_{p'q'}$ that can be ordered as $p'=v_0,v_1,\ldots,v_t=q'$ with $R_{v_{i-1}v_{i}}$ being empty and $v_i$ to the northwest of $v_{i-1}$ for each $1\leq i\leq t$, and further, $\pi'(p',q')$ is homotopic to the concatenation of $\overline{v_{i-1}v_i}$ for all $1\leq i\leq t$.
By Lemma~\ref{obser:rectangle}, for each $1\leq i\leq t$, $G(\calV)$ contains a staircase path connecting $v_{i-1}$ and $v_i$ and the path is in $R_{v_{i-1}v_i}$ (and thus is homotopic to $\overline{v_{i-1}v_i}$). Therefore, by concatenating the staircase paths from $v_{i-1}$ to $v_i$ for all $i=1,2,\ldots,t$, we obtain a staircase path from $p'$ to $q'$ and the path is homotopic to $\pi'(p',q')$. Note that the staircase path has the same length as $\pi'(p',q')$ since $\pi'(p',q')$ is an L-shaped path (and thus is also a shortest path).
The above claim thus follows.

\begin{figure}[t]
\begin{minipage}[t]{\linewidth}
\begin{center}
\includegraphics[totalheight=1.2in]{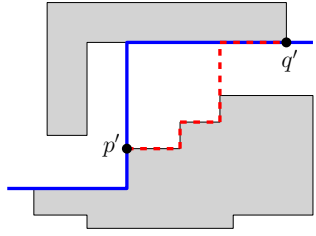}
\caption{\footnotesize Converting $\pi_G(p',q')$ (the dashed red path) to $\pi'(p',q')$ (the solid blue path between $p'$ and $q'$).
}
\label{fig:Lshaped}
\end{center}
\end{minipage}
\vspace*{-0.15in}
\end{figure}

This proves Lemma~\ref{lem:homo}. The proof actually constructs the path $\pi_G$ in $G(\calV)$ corresponding to the optimal path $\pi$, and $\pi_G$ is called a {\em target path}. 
Yang et al.~\cite{ref:YangRe95} also showed that $\pi$ can be obtained from $\pi_G$ by applying certain {\em dragging} operations during searching the graph $G(\calV)$. Before describing the details of the operation, we first give some intuition on how $\pi$ can be obtained from $\pi_G$. Based on the above constructive proof for Lemma~\ref{lem:homo}, we only need to show that for each L-shaped path $\pi'(p',q')$, it can be obtained from the corresponding staircase path $\pi_G(p',q')$ in $G(\calV)$. Without loss of generality, we assume that $q'$ is to the northeast of $p'$ and the segment incident to $p'$ in $\pi'(p',q')$ is vertical. Because $\pi_G(p',q')$ is homotopic to $\pi'(p',q')$, we can convert $\pi_G(p',q')$ to $\pi'(p',q')$ as follows (e.g., see Fig.~\ref{fig:Lshaped}). Starting from $p'$, for each horizontal segment of  $\pi_G(p',q')$, drag it upwards until either it hits the horizontal segment of $\pi'(p',q')$ or it becomes collinear with the next horizontal segment of $\pi_G(p',q')$. In the former case, we have obtained $\pi'(p',q')$. In the latter case, we continue to drag the new horizontal segment upwards in the same way as before.

In the sequel, we briefly review the dragging queries~\cite{ref:YangRe95}. This will make our paper self-contained and also help us to explain our new algorithm as well as the optimal path queries given later.

The YLW algorithm intends to search a target path in $G(\calV)$. The algorithm starts from $s$. When a vertex $p$ of $G(\calV)$ is processed, at most eight paths from $s$ to $p$ are stored at $p$ such that their last pushed staircase subpaths containing $p$ are different. Later the algorithm will advance these paths from $p$ to each neighboring vertex $q$ of $p$ in $G(\calV)$. Let $\pi(s,p)$ be such a path stored at $p$ and we want to advance it from $p$ to $q$ to obtain a path $\pi(s,q)$ from $s$ to $q$. Without loss of generality, we assume $p$ is to the northeast of $p'$, where $p'$ is the start point of the last staircase path of $\pi(s,p)$ containing $p$, and we also assume that the last segment $w$ of $\pi(s,p)$ is horizontal (i.e., $w$ is incident to $p$). Other cases are similar. Let $\pi'(s,q)=\pi(s,p)\cup\overline{pq}$. We obtain $\pi(s,q)$ from $\pi(s,q')$ by a dragging operation on $w$ as follows.
We say that $w$ is {\em fixed} if it borders an obstacle that is above $w$, in which case $w$ cannot be dragged upwards anymore, and is {\em floating} otherwise.
Since $\overline{pq}$ is an edge of $G(\calV)$, it is either vertical or horizontal.

\begin{figure}[t]
\begin{minipage}[t]{\linewidth}
\begin{center}
\includegraphics[totalheight=1.0in]{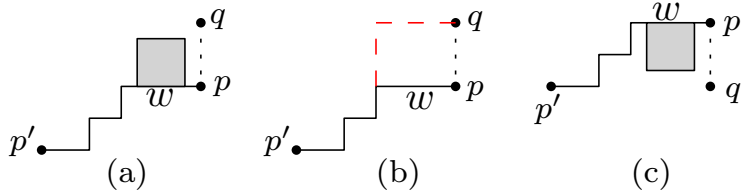}
\caption{\footnotesize Illustrating the dragging operations. The horizontal segment incident to $p$ is $w$.
}
\label{fig:drag}
\end{center}
\end{minipage}
\vspace*{-0.15in}
\end{figure}

\begin{enumerate}
\item
If $q$ is to the right of $p$, then $\pi(s,q)=\pi'(s,q)$ and the last segment of $\pi(s,q)$ is $w\cup\overline{pq}$.

\item
If $q$ is to the left of $p$, then we ignore the path $\pi(s,q)$ (i.e., the path will not be considered).


\item
If $q$ is above $p$ (i.e., $\overline{pq}$ is vertical), then depending on whether $w$ is fixed there are two subcases.

\begin{enumerate}
\item
If $w$ is fixed (e.g., see Fig.~\ref{fig:drag}(a)), then $\pi(s,q)=\pi'(s,q)$ and $\overline{pq}$ becomes the last segment of $\overline{pq}$. Note that $\overline{pq}$ is fixed if and only if it borders an obstacle that is on its right side.

\item
If $w$ is floating, then there are further two subcases.

\begin{enumerate}
\item
If $w$ can be dragged upwards to $q$ without hitting a point of $\calV$ (e.g., see Fig.~\ref{fig:drag}(b)), then we drag $w$ to $q$ and obtain $\pi(s,q)$, which has the dragged $w$ as its last segment.
\item
Otherwise, we ignore the path $\pi(s,q)$.

\end{enumerate}

\end{enumerate}
\item
If $q$ is below $p$, then there are further two subcases.
\begin{enumerate}
\item
If $w$ borders an obstacle below it (e.g., see Fig.~\ref{fig:drag}(c)), then a new U-shaped path is generated and a new pushed staircase subpath is also generated with $w$ being the first segment. We have $\pi(s,q)=\pi'(s,q)$ with $\overline{pq}$ as the last segment.

\item
Otherwise, we ignore the path $\pi(s,q)$.
\end{enumerate}
\end{enumerate}

If we apply the dragging operations on a target path from $s$ to $t$, then an optimal path can be eventually produced. This can be seen from the intuition we discussed earlier (refer to Lemma 4.6 of \cite{ref:YangRe95} for details). This motivates the YLW algorithm, which we describe below on finding  a  minimum-link shortest \st\ path (other two types of optimal paths are similar).

The YLW algorithm works by applying Dijkstra's algorithm according to the measure vector $(L_1(\pi),L_d(\pi))$ for a path $\pi$. Initially, all vertices of $G(\calV)$ are in a priority queue $Q$ with measure vectors $(\infty,\infty)$ except that the measure vector for $s$ is $(0,0)$. As long as $Q$ is not empty, the algorithm removes from $Q$ the vertex $p$ with the smallest measure vector (lexicographically, i.e., for two vectors $(a_1,b_1)$ and $(a_2,b_2)$, the first one is smaller than the second one if and only if $a_1<a_2$, or $a_1=a_2$ and $b_1<b_2$) and advance the paths stored at $p$ to each of $p$'s neighbor $q$ by using the dragging operations. Let $\pi(s,q)$ be a path obtained for $q$. There may be other paths that are already stored at $q$ and the types of the last staircase subpaths of these paths are also stored (recall that there are eight types of pushed staircase subpaths). The YLW algorithm relies on the following two rules to determine whether the new obtained path $\pi(s,q)$ should be stored at $q$, and if yes, whether some paths stored at $q$ should be removed. Let $\pi'(s,q)$ be any path that has already been stored at $q$.

\begin{enumerate}
\item[Rule($a$)]
If the measure vectors of $\pi(s,q)$ and $\pi'(s,q)$ are not the same, then  discard the one whose measure vector is strictly larger.

\item[Rule($b$)]
If $\pi(s,q)$ and $\pi'(s,q)$ have the same measure vector and of the same type, compare their last segments. If their last segments overlap, discard the path whose last segment is longer.
\end{enumerate}

It is claimed in \cite{ref:YangRe95} that once the point $t$ is processed, among all paths stored at $t$, the one with the smallest measure vector is an optimal \st\ path.

\begin{figure}[t]
\begin{minipage}[t]{\linewidth}
\begin{center}
\includegraphics[totalheight=1.3in]{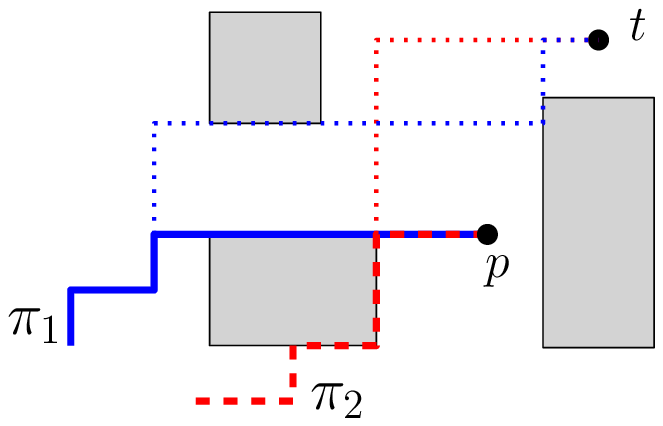}
\caption{\footnotesize Illustrating a counter example for the YLW algorithm.
}
\label{fig:counterex}
\end{center}
\end{minipage}
\vspace*{-0.15in}
\end{figure}

We find that the algorithm is not correct, mainly due to Rule($a$). Figure~\ref{fig:counterex} illustrates a counterexample. Assume that both $\pi_1$ and $\pi_2$ are paths from $s$ to $p$ with $L_1(\pi_1)=L_1(\pi_2)$ and $L_d(\pi_1)+1=L_d(\pi_2)$. Thus, the measure vector of $\pi_1$ is strictly smaller than that of $\pi_2$. According to Rule($a$), we should discard $\pi_2$.
Observe that we can obtain an \st\ path from $s$ to $t$ using $\pi_2$
without having any extra link. However, to obtain an \st\ path using $\pi_1$, we need at least two more links. Therefore, $\pi_2$ can lead to a better \st\ path than $\pi_1$, and thus, we should not discard $\pi_2$.
Notice that the reason this happens is that although the measure vector of $\pi_1$ is strictly smaller than that of $\pi_2$, the last segment of $\pi_2$ is shorter than that of $\pi_1$ (and thus it may be ``freely'' dragged upwards higher than that of $\pi_1$).

In fact, the most essential reason for this error to happen might be
the following. If $\pi$ is a shortest \st\ path, then for any two
points $p$ and $q$ in $\pi$, the subpath of $\pi$ between $p$ and $q$
is also a shortest path from $p$ to $q$. However, this may not be the
case for minimum-link paths. Namely, if $\pi$ is a minimum-link \st\
path, then it is possible that for two points $p$ and $q$ in $\pi$,
the subpath of $\pi$ between $p$ and $q$ is not a minimum-link path
from $p$ to $q$. Due to this reason, one can verify that the $O(nr +
n\log n)$ time algorithm given by Yang et al.~\cite{ref:YangOn92} for
computing optimal paths is not correct either. Indeed, the approach in \cite{ref:YangOn92} also applies Dijkstra's algorithm on a graph to search the optimal paths using the measure vectors like $(L_1(\pi),L_d(\pi))$.

\subsection{Our New Algorithm}
\label{sec:corrected}

To fix the error, we need to fix Rule($a$).
We first consider the minimum-link shortest paths. We replace Rule($a$) by the following Rule($a_1$), but still keep Rule($b$). (Recall that $\pi'(s,q)$ denotes any path that has already been stored at $q$.)

\begin{enumerate}
\item[Rule($a_1$)]
Let $\pi_1$ be one of $\pi'(s,q)$ and $\pi(s,q)$, and $\pi_2$ the other. If  $L_1(\pi_1)<L_1(\pi_2)$, or $L_1(\pi_1)=L_1(\pi_2)$ but $L_d(\pi_1)\leq L_d(\pi_2)-2$, then we discard $\pi_2$.
\end{enumerate}

By Rule($a_1$), we may need to store two paths $\pi_1$ and $\pi_2$ at
$q$ even if the measure vector of one path is strictly smaller than
that of the other, in which case $L_1(\pi_1)=L_1(\pi_2)$ and
$L_d(\pi_1)=L_d(\pi_2)\pm 1$.
Hence, unlike the YLW algorithm, each vertex $q$ of $G(\calV)$ may store paths with different measure vectors. Therefore, we cannot apply the same ``vertex-based'' Dijkstra's algorithm  as before. Instead, we propose a ``path-based'' Dijkstra's algorithm. Roughly speaking, we will process individual paths instead of vertices. Specifically, in the beginning there is only one path from $s$ to $s$ itself in the priority queue $Q$. In general, as long as $Q$ is not empty, we remove from $Q$ the path $\pi$ with the smallest measure vector. Assume that the endpoint of $\pi$ is $p$. Then, we advance $\pi$ from $p$ to each of $p$'s neighbors $q$. If $\pi(s,q)$ is stored at $q$ by our rules (i.e., both Rule($a_1$) and Rule($b$)), then we (implicitly) insert $\pi(s,q)$ to $Q$. The algorithm stops once $Q$ is empty. Since we process paths following the increasing measure order, the algorithm will eventually stop. Finally, among all paths stored at $t$, we return the one with the smallest measure as the optimal solution.

We will prove the correctness of the algorithm in Section~\ref{sec:correctold}. In terms of the running time, the YLW algorithm maintains at most eight paths at each vertex $p$ of $G(\calV)$. To see this, due to the Rule($b$), for each type of staircase paths, $p$ maintains at most one path. In our new algorithm, the paths maintained at $p$ always have the same length but their link distances differ by at most one. Hence, again due to Rule($b$), there are at most sixteen paths maintained at $p$. Clearly, this does not affect both the time and the space complexities of the algorithm asymptotically. Thus, the algorithm still runs in $O(n\log^2 n)$ time and $O(n\log n)$ space, as the YLW algorithm.

In addition, using another path-preserving graph $G^*(\calV)$ of
$O(n\log^{1/2}n)$  vertices and $O(n\log^{3/2}n)$ edges~\cite{ref:ClarksonRe88}, Yang et al.~\cite{ref:YangRe95} proposed
another $O(n\log^{3/2} n)$ time and space algorithm (see Section 4.2 of
\cite{ref:YangRe95}). Further, Chen et al.~\cite{ref:ChenOn01} reduced the space of the algorithm to $O(n\log n)$ with the same $O(n\log^{3/2} n)$ time (similar technique was also used in \cite{ref:ChenSh00}).
By applying the techniques of both \cite{ref:YangRe95} and
\cite{ref:ChenOn01} to our new method, we can also obtain an algorithm of $O(n\log^{3/2} n)$ time and $O(n\log n)$ space. We omit the details.

We proceed on the problem of finding a minimum-cost \st\ path. Recall that we have a cost function $f$. For any path $\pi$, we use $f(\pi)$ to denote the cost of the path. 
Our algorithm is the same as above with the following changes. First,
the paths $\pi$ in the priority $Q$ are prioritized by $f(\pi)$. Second, we replace both Rule($a_1$) and Rule($b$) by the following rule.

\begin{enumerate}
\item[Rule($a_2$)]
Let $\pi_1$ be one of $\pi'(s,q)$ and $\pi(s,q)$, and $\pi_2$ the other.
If the last segments of $\pi_1$ and $\pi_2$ are exactly the same and $f(\pi_1)\leq f(\pi_2)$, then we discard $\pi_2$.
\end{enumerate}

\begin{figure}[t]
\begin{minipage}[t]{\linewidth}
\begin{center}
\includegraphics[totalheight=1.5in]{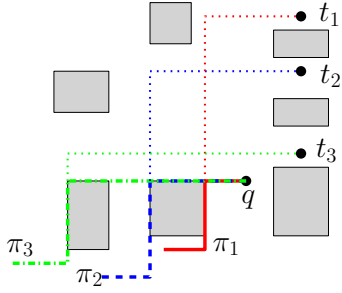}
\caption{\footnotesize Illustrating an example on why we need Rule($a_2$).
}
\label{fig:threepaths}
\end{center}
\end{minipage}
\vspace*{-0.15in}
\end{figure}

We give some intuition on why we use the above rule. Refer to Fig.~\ref{fig:threepaths}, where there are three paths $\pi_1$, $\pi_2$, and $\pi_3$ from $s$ to $q$. Let $s_i$ be the last segment of $\pi_i$ for each $1\leq i\leq 3$, and we assume that they overlap with $|s_1|<|s_2|<|s_3|$, where $|s_i|$ is the length of each $s_i$. We also assume that $L_d(\pi_1)=L_d(\pi_2)=L_d(\pi_3)$ and $L_1(\pi_1)>L_1(\pi_2)>L_1(\pi_3)$. In this case, we have to keep all three paths because any of them may lead to the best path from $s$ to $t$. For example, for each $1\leq i\leq 3$, the path $\pi_i$ may lead to the best path from $s$ to $t_i$. One can generalize the example so that a total of $\Omega(n)$ paths may need to be stored at $p$. However, $O(n)$ is the upper bound since the last segment of each such path starts from a different vertex of $G(\calV)$ in the horizontal line through $q$ and there are $O(n)$ such vertices. For this reason, their are $O(n^2\log n)$ paths stored in all $O(n\log n)$ vertices of $G(\calV)$. Hence, the running time of the algorithm becomes $O(n^2\log^2 n)$ and the space becomes $O(n^2\log n)$.
One may want to use some other rules to reduce the number of paths that need to be stored at $q$, e.g., Rule($b$); however, in the worst case, the number of paths stored at $q$ is still $\Theta(n)$.

We point out a detail about the algorithm implementation. Suppose we
have computed a new path $\pi(s,q)$ at $q$ and we want to apply
Rule$(a_2)$. Then, we need to know whether there is a path $\pi'(s,q)$
already stored at $q$ whose last segment is exactly the same as that
of $\pi(s,q)$. If we check every path stored at $q$, then this step
would cost $O(n)$ time, resulting in an overall $O(n^3\log n)$ time.
We can actually implement this step in $O(1)$ time, as follows. Let
$\overline{pq}$ be the last segment of $\pi(s,q)$ and suppose
$\overline{pq}$ is horizontal. Observe that $p$ must be a vertical
projection of a point in $\calV$ on the horizontal line through $q$.
We use an array $A$ of size $|\calV|$ such that $A[i]$ corresponds to
the $i$-th vertex of $\calV$ in the order of increasing
$x$-coordinate. Hence, if $p$ is the projection of the $i$-th vertex
of $\calV$, then we can simply check the path stored at $A[i]$ in
$O(1)$ time, and if $\pi(s,p)$ should be stored, then we simply store it at $A[i]$. Similarly, we also build another array for the horizontal projections of the vertices of $\calV$ on the vertical line through $q$. In this way, the overall running time of the algorithm is $O(n^2\log^2 n)$. The space complexity is still $O(n^2\log n)$ because the total size of the arrays at each vertex of the graph is $O(n)$.

Further, as for the minimum-link shortest paths, by using the graph $G^*(\calV)$ and the techniques in \cite{ref:ChenOn01,ref:YangRe95}, we can reduce the running time by a  factor of $\sqrt{\log n}$. 
We omit the details.

For computing a shortest minimum-link \st\ path, we use the same
algorithm as above for the minimum-cost paths but with the following
changes. First, we use the measure vector $(L_d(\pi),L_1(\pi))$
instead. Second, we use the following rule, which is similar to
Rule($a_2$).

\begin{enumerate}
\item[Rule($a_3$)]
Let $\pi_1$ be one of $\pi'(s,q)$ and $\pi(s,q)$, and $\pi_2$ the other.
If the last segments of $\pi_1$ and $\pi_2$ are exactly the same and the measure vector of $\pi_1$ is no larger than that of $\pi_2$, then we discard $\pi_2$.
\end{enumerate}

The time and space complexities are the same as the above for the
minimum-cost paths.

If we are looking for a minimum-link \st\ path (without considering the length), then we can use the following rule. 

\begin{enumerate}
\item[Rule($a_4$)]
Let $\pi_1$ be one of $\pi'(s,q)$ and $\pi(s,q)$, and $\pi_2$ the other.
If $L_d(\pi_1)\leq L_d(\pi_2)-2$, then we discard $\pi_2$. We also  discard $\pi_2$ if the following is true: $L_d(\pi_1)=L_d(\pi_2)$, the last segments of $\pi_1$ and $\pi_2$ overlap, and the last segment of $\pi_1$ is no longer than that of $\pi_2$.
\end{enumerate}

The rule makes sure that we only need to keep at most eight paths at
any vertex $q$ of $G(\calV)$: for each of the following four
directions of $q$: left, right, above, below, there are two paths
whose last segments are from that direction and their link distances differ by at most one.
Hence, similar to the minimum-link shortest paths, we can find a
minimum-link path in $O(n\log^{3/2}n)$ time and $O(n\log n)$ space. As
discussed in Section~\ref{sec:intro}, the problem of finding a
single minimum-link path and its one-point query problem have been solved
optimally~\cite{ref:MitchellAn15} (after $\calP$ is triangulated).
We discussed the above result mainly because we will use it to answer the
two-point queries in Section~\ref{sec:twopoint}.

The correctness of all above algorithms is proved in
Section~\ref{sec:correctold}.

\subsection{The Correctness of Our Algorithm}
\label{sec:correctold}

We first show the correctness of the algorithm for computing a
minimum-link shortest path. The analysis for other paths is very similar.

Let $\pi(s,t)$ be a minimum-link shortest \st\ path in $\calP$. Let
$\pi_{G(\calV)}(s,t)$ be the corresponding target path from $s$ to $t$ in the
graph $G(\calV)$.
For any vertex $p$ in the target path, let $\pi(s,p)$ be the path in
$\calP$ from $s$ to $p$ obtained by applying the dragging
operations on the subpath of $\pi_{G(\calV)}(s,t)$ from $s$ to $p$. To
prove the correctness of our algorithm, it is sufficient to show that
the paths of $\pi(s,p)$ for the vertices $p$ of $\pi_{G(\calV)}(s,t)$
from $s$ to $t$ will be computed and advanced following the vertex
order of  $\pi_{G(\calV)}(s,t)$ during our algorithm.
According to our analysis before, we only need to prove it for any
L-shaped subpath $\pi(p,q)$ between two adjacent vertices $p$ and $q$ of
$\pi(s,t)$.

We assume the path $\pi(s,p)$ has been computed and stored at $p$, and is about to
advance. Initially this is trivially true when $p=s$. Let $\pi_{G(\calV)}(p,q)$
be the subpath of $\pi_{G(\calV)}(s,t)$ between $p$ and $q$, and let
$p=v_0,v_1,\ldots,v_k=q$ be the vertices of $\pi_{G(\calV)}(p,q)$ in
order from $p$ to $q$. Recall that $\pi_{G(\calV)}(p,q)$ is a
staircase path. Without loss of generality, we assume $q$ is to the
northeast of $p$. If the path $\pi(s,v_1)$ is stored at $v_1$, then
our algorithm is correct. Otherwise, there must be a path
$\pi'(s,v_1)$ stored at $v_1$ that causes $\pi(s,v_1)$ not to be
stored. According to Rules $(a_1)$ and $(b)$, at least one of the
following cases must happen: (1) $L_1(\pi'(s,v_1))<L_1(\pi(s,v_1))$; (2)
$L_1(\pi'(s,v_1))=L_1(\pi(s,v_1))$ but $L_d(\pi'(s,v_1))\leq
L_d(\pi(s,v_1))-2$; (3) the measure vectors of the two paths are exactly the same, the last
staircase subpaths of both paths are of the same type, and the last
segment of $\pi'(s,v_1)$ is shorter than or equal to that of $\pi(s,v_1)$.

If Case (1) happens, then consider the following path $\pi'(s,t)$ from $s$ to $t$ (e.g., see Fig.~\ref{fig:correct}): the concatenation of $\pi'(s,v_1)$, a vertical segment from $v_1$ to a point $v_1'$ on the horizontal segment of the L-shaped subpath $\pi(p,q)$, and the subpath $\pi(v_1',t)$ of $\pi(s,t)$ between $v_1'$ and $t$. Note that $L_1(\pi(s,t))=L_1(\pi(s,v_1))+|\overline{v_1v_1'}|+L_1(\pi(v_1',t))$. Because $L_1(\pi'(s,v_1))<L_1(\pi(s,v_1))$ (i.e., Case (1)) and $L_1(\pi'(s,t))=L_1(\pi'(s,v_1))+|\overline{v_1v_1'}|+L_1(\pi(v_1',t))$, we obtain
$L_1(\pi'(s,t))< L_1(\pi(s,t))$, contradicting with that $\pi(s,t)$ is a shortest path.

\begin{figure}[t]
\begin{minipage}[t]{\linewidth}
\begin{center}
\includegraphics[totalheight=1.3in]{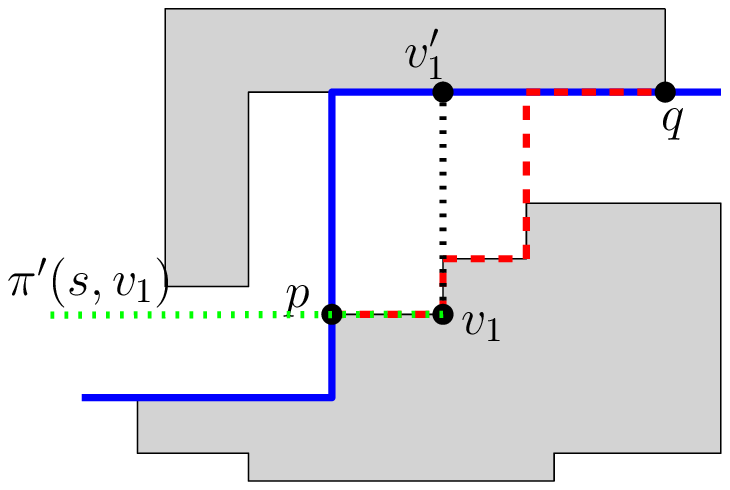}
\caption{\footnotesize Illustrating the definition of $v_1'$. The blue
solid path is $\pi(s,t)$, the red dashed path is
$\pi_{G(\calV)}(p,q)$, and the green dotted path is $\pi(s,v'_1)$.
}
\label{fig:correct}
\end{center}
\end{minipage}
\vspace*{-0.15in}
\end{figure}

If Case (2) happens, then we still consider the path $\pi'(s,t)$ obtained above. Observe that $L_d(\pi(s,t))\geq L_d(\pi(s,v_1))+L_d(\pi(v_1',t))-1$, where the minus 1 is due to that we may be able to drag the last segment of $\pi(s,v_1)$ so that it overlaps with the first segment $\pi(v_1',t)$ (and thus save one link). Also note that $L_d(\pi'(s,t))\leq L_d(\pi'(s,v_1))+L_d(\pi(v_1',t))+1$, where the plus 1 is due to the segment $\overline{v_1v_1'}$. Since $L_d(\pi'(s,v_1))\leq L_d(\pi(s,v_1))-2$ (i.e., Case (2)), we obtain that $L_d(\pi'(s,t))\leq L_d(\pi(s,t))$. Further, because $L_1(\pi'(s,v_1))=L_1(\pi(s,v_1))$, we obtain $L_1(\pi'(s,t))= L_1(\pi(s,t))$. This implies that using $\pi'(s,v_1)$ we can also obtain a minimum-link shortest \st\ path, and thus, $\pi(s,v_1)$ can be safely ignored.

If Case (3) happens, then similar to the proof of Lemma 4.7 in \cite{ref:YangRe95}, using  $\pi'(s,v_1)$ we can also obtain a minimum-link shortest \st\ path, and thus, $\pi(s,v_1)$ can be safely ignored.

The above proves that in any case our algorithm stores necessary paths at $v_1$ that can be used to eventually obtain a minimum-link shortest \st\ path. By the similar argument, we can show that this is true for $v_i$ for all $i=2,3,\ldots,k$. This establishes the correctness of our algorithm.

We proceed to show the correctness of our algorithm for computing a minimum-cost \st\ path. We follow the above analysis scheme and focus on proving that the path $\pi(s,v_1)$ will be stored at $v_1$ if necessary.
If $\pi(s,v_1)$ is not stored at $v_1$, then according to Rules $(a_2)$, this only happens because there is another path $\pi'(s,v_1)$ stored at $v_1$ such that the last segments of $\pi(s,v_1)$ and $\pi'(s,v_1)$ are exactly the same and $f(\pi'(s,v_1))\leq f(\pi(s,v_1))$.

First of all, we know that the last segment of $\pi(s,v_1)$ (i.e.,
$\overline{pv_1}$) is horizontal and we can drag it upwards freely
until the horizontal segment $e$ of the L-shaped path $\pi(p,q)$ to
obtain the path $\pi(s,t)$ (i.e., by concatenating with $\pi(v_1',t)$).
Since the last segments of $\pi(s,t)$ and
$\pi'(s,t)$ are exactly the same, regardless of whether $\pi(s,t)$ and
$\pi'(s,t)$ are of the same type, we can also drag the last segment of
$\pi'(s,v_1)$ upwards freely until $e$, so that we can obtain another
\st\ path $\pi'(s,t)$. Further, the above dragging on the last segment
of $\pi'(s,t)$ does not introduce any extra link and the amount of
length it introduces is the same as that introduced by dragging the
last segment of $\pi(s,v_1)$. As $f(\pi'(s,v_1))\leq f(\pi(s,v_1))$
and the cost function $f$ is non-decreasing in both the length and the
link distance of the
path, we can obtain that $f(\pi'(s,t))\leq f(\pi(s,t))$. Hence, we can
also obtain a minimum-cost \st\ path by using $\pi'(s,v_1)$, and thus
$\pi(s,v_1)$ can be safely ignored without being stored at $v_1$.
This establishes the correctness of the algorithm.

The correctness of our algorithm for computing shortest minimum-link paths follows the similar analysis as the above case for minimum-cost paths. We omit the details.

Finally, we show the correctness for computing a minimum-link \st\ path. We again follow the above scheme. If the path $\pi(s,v_1)$ is not stored at $v_1$, then according to Rule $(a_4)$, there must be another path $\pi'(s,v_1)$ stored at $v_1$ such that one of the following two cases happens: (1) $L_d(\pi'(s,v_1))\leq L_d(\pi(s,v_1))-2$; (2) $L_d(\pi'(s,v_1))= L_d(\pi(s,v_1))$, the last segments of both paths overlap, and the last segment of $\pi'(s,v_1)$ is no longer than that of the last segment of $\pi(s,v_1)$.

If Case (1) happens, then as in the analysis for minimum-link shortest paths, we consider the path $\pi'(s,t)$ obtained from $\pi'(s,v_1)$ by adding a vertex segment $\overline{v_1v_1'}$. We have shown above that $L_d(\pi'(s,t))\leq L_d(\pi(s,t))$ and thus it is safe to ignore $\pi(s,v_1)$. If Case (2) happens, we can follow the proof of Lemma 4.7 of \cite{ref:YangRe95} (or the similar analysis as the above for the minimum-cost paths) to show that $\pi'(s,v_1)$ can also lead to a minimum-link \st\ path.


\section{The Improved Algorithm}
\label{sec:new}

In this section, we improve our algorithm proposed in
Section~\ref{sec:old}, so that  in addition to $O(n)$, the
complexities of our improved algorithm only depend on $h$, i.e., the number
of holes of $\calP$. We first review the corridor structure of
$\calP$~\cite{ref:MitchellAn15} and the histogram partitions of
rectilinear simple polygons~\cite{ref:SchuiererAn96}.

\subsection{The Corridor Structure of $\calP$}

For ease of exposition, we make a general position assumption that no two edges of $\calP$ are collinear.
The {\em vertical visibility decomposition} of $\calP$, denoted by
$\vd(\calP)$, is obtained by extending each vertical edge of $\calP$
until it hits the boundary of $\calP$ (e.g., see Fig.~\ref{fig:vd}).
Each cell of $\vd(\calP)$ is a rectangle.
Each extension segment is called a {\em diagonal} of $\vd(\calP)$.

\begin{figure}[t]
\begin{minipage}[t]{0.49\linewidth}
\begin{center}
\includegraphics[totalheight=1.7in]{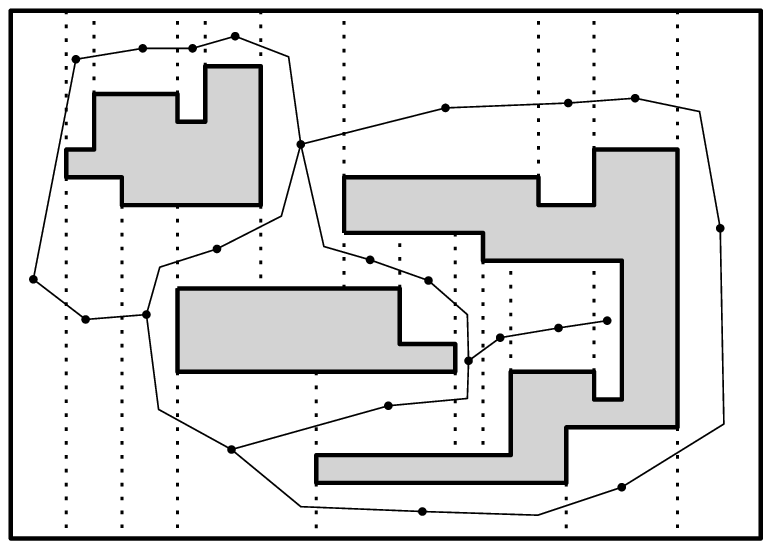}
\caption{\footnotesize Illustrating the vertical visibility decomposition $\vd(\calP)$ and its dual graph $G_{vd}$.
}
\label{fig:vd}
\end{center}
\end{minipage}
\hspace{0.05in}
\begin{minipage}[t]{0.49\linewidth}
\begin{center}
\includegraphics[totalheight=1.7in]{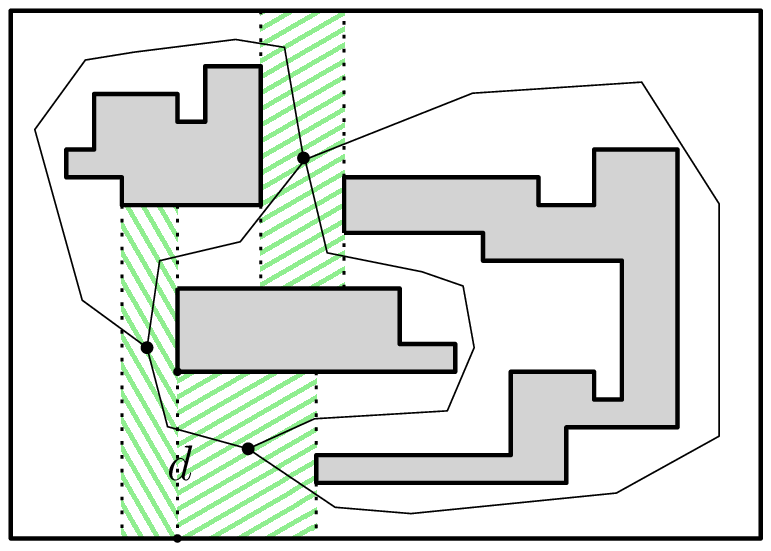}
\caption{\footnotesize Illustrating the corridor structure and the corridor graph $G_{cor}$ of three vertices. There are three junction rectangles, which are highlighted. Each connected white region is a corridor, which corresponds to an edge of $G_{cor}$. The diagonal $d$ forms a degenerated corridor.
}
\label{fig:reduceG}
\end{center}
\end{minipage}

\vspace*{-0.15in}
\end{figure}

The corridor structure of $\calP$ has been introduced before, e.g., see~\cite{ref:MitchellAn15}.
Let $G_{vd}$ be the dual graph of $\vd(\calP)$ (e.g., see Fig.~\ref{fig:vd}), i.e., each node of
$G_{vd}$ corresponds to a cell of $\vd(\calP)$ and two nodes have an
edge if the corresponding cells share an edge. Based on $G_{vd}$, we
obtain a {\em corridor graph} $G_{cor}$ as follows. First, we keep
removing every degree-one node from $G_{vd}$ along with its incident
edge until no such nodes remain. Second, we keep contracting every
degree-two node from $G_{vd}$ (i.e., remove the node and replace its
two incident edges by a single edge) until no such nodes remain. The
graph thus obtained is $G_{cor}$, which has $O(h)$ nodes and $O(h)$
edges~\cite{ref:MitchellAn15}. Refer to
Fig.~\ref{fig:reduceG} for an example. The cells of $\vd(\calP)$
corresponding to the nodes of $G_{cor}$ are called {\em junction
rectangles}. If we remove all junction rectangles from $\calP$, each
connected region is a simple rectilinear polygon, which is called a
{\em corridor}. Each corridor has two diagonals each of which is on a vertical side of a junction rectangle, and we call
them the {\em doors} of the corridor. For convenience, if
a diagonal $d$ bounds two junction rectangles (e.g., see
Fig.~\ref{fig:reduceG}), then we consider $d$ itself as a
``degenerate'' corridor whose two doors are both $d$. With the
degenerated corridors, each vertex of $\calP$ lies in a unique corridor.

The decomposition $\vd(\calP)$ can be computed in $O(n+h\log^{1+\epsilon}h)$ time for any $\epsilon>0$~\cite{ref:Bar-YehudaTr94}. After $\vd(\calP)$ is known, the corridor structure of $\calP$ (i.e., computing all corridors and junction rectangles) can be obtained in $O(n)$ time.


\subsection{The Histogram Partitions}

The histogram partition is a decomposition of a simple rectilinear polygon~\cite{ref:SchuiererAn96}. We will need to build the histogram partitions on the corridors of $\calP$. Below we review the partition and we follow the terminologies of \cite{ref:SchuiererAn96}.

A simple rectilinear polygon $H$ is called a {\em histogram} if its
boundary can be divided into an $x$- or $y$-monotone chain and a single line
segment; the single segment is called the base of $H$ (e.g., see
Fig.~\ref{fig:histogram}).

\begin{figure}[t]
\begin{minipage}[t]{0.49\linewidth}
\begin{center}
\includegraphics[totalheight=1.3in]{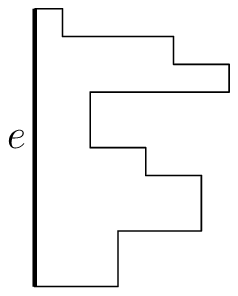}
\caption{\footnotesize Illustrating a histogram with base $e$.
}
\label{fig:histogram}
\end{center}
\end{minipage}
\hspace{0.05in}
\begin{minipage}[t]{0.49\linewidth}
\begin{center}
\includegraphics[totalheight=1.5in]{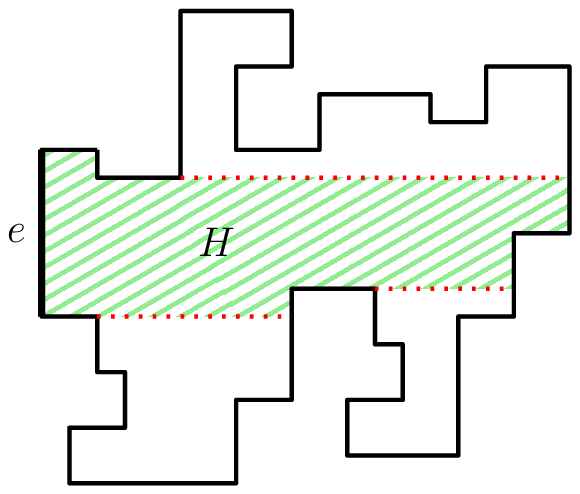}
\caption{\footnotesize Illustrating the maximal histogram $H$, which has three windows shown with (red) dotted segments.
}
\label{fig:histpar}
\end{center}
\end{minipage}

\vspace*{-0.15in}
\end{figure}

Consider a simple rectilinear polygon $Q$ (e.g., a corridor $\calC$ of the corridor structure of $\calP$) and let $e$ be an edge of $Q$ (e.g., a door of $\calC$). A histogram partition of $Q$ with respect to $e$, denoted by $\calH(Q,e)$, is defined as follows. Let $H$ be the {\em maximal histogram} with base $e$ in $Q$, i.e., there is no other histogram in $Q$ with base $e$ that can properly contain it (e.g., see Fig.~\ref{fig:histpar}). A {\em window} of $H$ is a maximal segment on the boundary of $H$ that is contained in the interior of $Q$ except its two endpoints (e.g., see Fig.~\ref{fig:histpar}). For each window $w$ of $H$, it divides $H$ into two subpolygons, and we let $Q(w)$ denote the one that does not contain $e$. If $H$ does not have a window, then we are done with the histogram partition of $Q$. Otherwise, for each window $w$, we perform the above partition on $Q(w)$ recursively with respect to $w$.

For any points $p$ and $q$ in $Q$, it is known that there exists a
path from $p$ to $q$ in $Q$ that is both a shortest path and a
minimum-link
path~\cite{ref:deBergOn91,ref:HershbergerCo94,ref:SchuiererAn96}, and
we call it a {\em smallest path}.


\subsection{A Reduced Path Preserving Graph}
\label{sec:reduced}
Recall that our algorithm in Section~\ref{sec:old} use a graph $G(\calV)$, which is built on the vertices of $\calV$ and has $O(n\log n)$ nodes and edges.
In this section, as a major tool for reducing the complexities of our algorithm, we propose a {\em reduced graph} of only $O(h\log h)$ nodes and edges.

The rest of this section is organized as follows. In Section~\ref{sec:backbone}, we introduce a set $\calB$ of {\em backbone points}, based on which we will define the reduced graph $G(\calB)$ in Section~\ref{sec:reduce}. In Section~\ref{sec:graphconstruct}, we compute $G(\calB)$. In Section~\ref{sec:compute}, we give an algorithm to compute optimal paths by using $G(\calB)$, and Section~\ref{sec:correct} proves its correctness. The algorithm in Section~\ref{sec:compute} is for the special case where both $s$ and $t$ are in junction rectangles. Section~\ref{sec:general} generalizes the approach to other cases.

\subsubsection{The Backbone Points}
\label{sec:backbone}

We introduce a set $\calB$ of $O(h)$ {\em backbone} points on
the doors of the corridors of $\calP$, which will be used to define our reduced graph later.

Consider a corridor $\calC$ of the corridor structure of $\calP$. Let $d_1$ and $d_2$ be the two doors of $\calC$. Note that both doors are vertical. The region of $\calC$ excluding the two doors is called the {\em interior} of $\calC$. If there exist a point $p_1\in d_1$ and a point $p_2\in d_2$ such that $\overline{p_1p_2}$ is horizontal and $\overline{p_1p_2}$ in $\calC$ then we say that $\calC$ is an {\em open corridor}; otherwise, it is {\em closed} (e.g., see Fig.~\ref{fig:opencan} and Fig.~\ref{fig:close}).

\begin{figure}[t]
\begin{minipage}[t]{0.49\linewidth}
\begin{center}
\includegraphics[totalheight=1.3in]{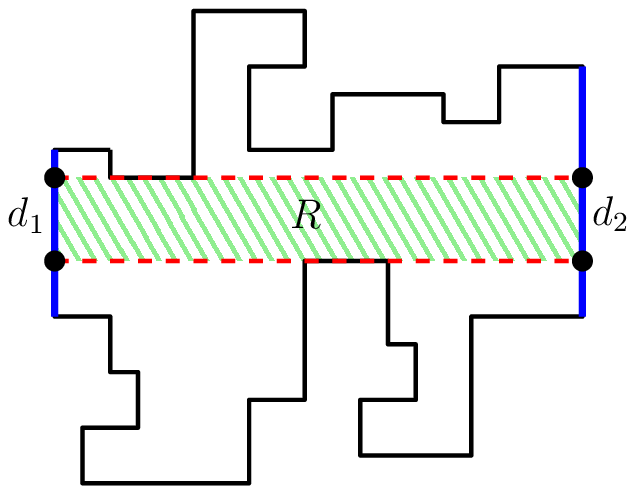}
\caption{\footnotesize Illustrating an open corridor: the canal $R$ and the two bridges are highlighted. The four points on the two doors are backbone points.
}
\label{fig:opencan}
\end{center}
\end{minipage}
\hspace{0.05in}
\begin{minipage}[t]{0.49\linewidth}
\begin{center}
\includegraphics[totalheight=1.3in]{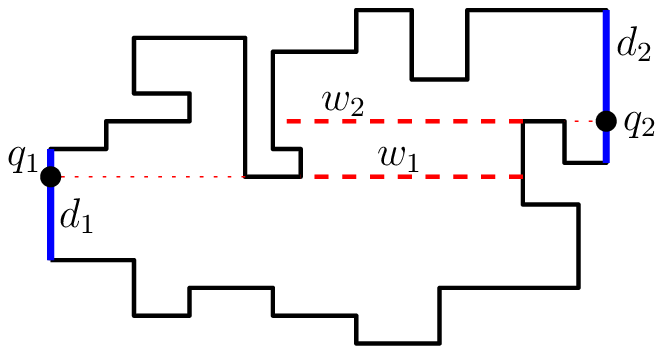}
\caption{\footnotesize Illustrating a closed corridor. The points $q_1$ and $q_2$ are backbone points on $d_1$ and $d_2$, respectively.
}
\label{fig:close}
\end{center}
\end{minipage}
\vspace*{-0.15in}
\end{figure}

Consider an open corridor $\calC$ (e.g., see Fig.~\ref{fig:opencan}). Let $p_1$ and $p_2$
be the points defined above. Imagine that we drag
$\overline{p_1p_2}$ vertically upwards (resp., downwards) until we hit
a vertex of $\calC$, then the current locations of $p_1$ and $p_2$ are
two {\em backbone points}. In this way, each door of $\calC$ has two
backbone points. Clearly, the rectangle $R$ with the four backbone
points as the vertices is in $\calC$ and we call $R$ the {\em canal}
of $\calC$. The two horizontal edges of $R$ are called {\em bridges}
of $\calC$. Further, the top edge of $R$ is the {\em upper} bridge and
the bottom edge is called the {\em lower} bridge.

If $\calC$ is a degenerate corridor, which is a single
diagonal $d$, then $\calC$ is also an open corridor and the upper
(resp., lower) bridge is degenerated to the upper (resp., lower) endpoint of $d$.


We have the following self-evident observation on open corridors.

\begin{observation}\label{obser:open}
Suppose $d_1$ and $d_2$ are the two doors of an open corridor $\calC$.
Consider any two points $p_1\in d_1$ and $p_2\in d_2$.
\begin{enumerate}
\item
If both $p_1$ and $p_2$ are on the boundary of the canal $R$, then $\overline{p_1p}\cup\overline{pp_2}$ is a shortest path in $\calC$ from $p_1$ to $p_2$, where $p$ is the horizontal projection of $p_2$ on $d_1$. 

\item
Otherwise, $\overline{p_1q_1}\cup \overline{q_1q_2}\cup \overline{q_2p_2}$ is a shortest path in $\calC$ from $p_1$ to $p_2$ for some bridge $\overline{q_1q_2}$ of $\calC$.
\end{enumerate}
In either case, we use $\pi(\calC,p_1,p_2)$ to denote the above shortest path between $p_1$ and $p_2$, and we call it a {\em canonical path}.
\end{observation}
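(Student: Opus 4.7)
My plan is to verify both cases by constructing the explicit canonical path, checking it is contained in $\calC$, and showing its length matches a lower bound on the length of every rectilinear $p_1$-to-$p_2$ path in $\calC$. The universal lower bound I use throughout is the $L_1$ distance $|y(p_1)-y(p_2)|+|x(p_1)-x(p_2)|$, which bounds below any rectilinear path since horizontal and vertical displacements are additive.

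For Case 1, I would observe that the canal $R$ is a rectangle contained in $\calC$ whose two vertical edges lie on $d_1$ and $d_2$ between the respective backbone points. The hypothesis that $p_2$ lies on $\partial R$ forces $y(p_2)$ into the $y$-range spanned by the backbones, so the horizontal projection $p$ of $p_2$ on $d_1$ also lies on the left edge of $R$. Hence $\overline{p_1 p}\subseteq d_1\subseteq\calC$ and $\overline{p p_2}\subseteq R\subseteq\calC$, and the L-shape has length equal to the $L_1$ distance, so it is a shortest rectilinear path.

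For Case 2, I would partition each point's position on its door into the three ranges defined by the backbones (strictly above the upper backbone, on the canal edge, or strictly below the lower backbone) and dispatch the eight remaining combinations (Case~1 covers the middle/middle combination). My rule is: pick the upper bridge when some endpoint is strictly above its upper backbone and neither is strictly below its lower backbone; pick the lower bridge symmetrically; and when one endpoint is strictly above its upper backbone while the other is strictly below its lower backbone, either bridge works. Containment in $\calC$ is immediate because the canonical path's two vertical portions lie on the doors $d_1,d_2\subseteq\calC$ and its middle portion is a bridge (an edge of $R\subseteq\calC$). A direct arithmetic check shows that the canonical path realizes the $L_1$ distance in every sub-case except when both endpoints lie strictly beyond the chosen bridge.

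The main obstacle is that remaining sub-case, say $p_1,p_2$ both strictly above their upper backbones (the symmetric ``both strictly below'' case is analogous). Here the canonical path has length $(y(p_1)-y_{\text{br}})+(y(p_2)-y_{\text{br}})+|x(p_1)-x(p_2)|$, strictly exceeding the $L_1$ distance, so I need a sharper lower bound. I would prove that every rectilinear path in $\calC$ from $p_1$ to $p_2$ must contain a point at height $\leq y_{\text{br}}$; granting this, the vertical length of any such path is at least $(y(p_1)-y_{\text{br}})+(y(p_2)-y_{\text{br}})$, which with the horizontal lower bound $|x(p_1)-x(p_2)|$ matches the canonical path exactly. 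This separation follows from the maximality of $R$ combined with the simple connectivity of $\calC$: the reflex vertex $v$ of $\calC$ where the upward dragging halted has an edge rising vertically from height $y_{\text{br}}$, and together with $\calC$'s simple connectivity this forces $\calC\cap\{y>y_{\text{br}}\}$ to split into connected components separating the portion of $d_1$ above its upper backbone from that of $d_2$ (any path connecting the two components above $y_{\text{br}}$ would, together with a piece of the bridge, bound a region whose existence contradicts either the maximality of $R$ or the absence of holes in $\calC$). This topological separation is the only delicate step; everything else reduces to routine arithmetic and case analysis.
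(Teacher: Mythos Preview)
Your argument is correct. The paper does not actually give a proof of this observation; it simply declares it ``self-evident'' and moves on. Your proposal therefore supplies considerably more than the paper does: the explicit bridge-selection rule, the $L_1$ lower bound, and---most importantly---the topological separation argument for the sub-case where both endpoints lie strictly on the same side of the canal. That last step is indeed the only nontrivial one, and your outline (closing up the hypothetical high path with the bridge and the door segments to obtain a Jordan curve in $\calC$ enclosing a point just above the blocking vertex $v$, contradicting simple connectivity of $\calC$) is the right way to make it rigorous. In short, the paper treats the statement as folklore, and your write-up correctly fills in the details it omits.
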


Next, we consider the case where $\calC$ is closed (e.g., see
Fig.~\ref{fig:close}). Let $H_1$ be the maximal histogram in $\calC$
with base $d_1$. As $\calC$ is closed, $H_1$ has a window $w_1$ that
{\em separates} $d_1$ from $d_2$, that is, $w_1$ divides $\calC$ into two sub-polygons that contain $d_1$ and $d_2$, respectively. By the definition of windows, if we extend $w_1$ to $d_1$, the extension will hit $d_1$ at a point, denoted by $q_1$, before it goes out of $\calC$. Similarly, we define $H_2$, $w_2$, and $q_2$, with respect to the other door $d_2$. The two points $q_1$ and $q_2$ are {\em backbone points} of $\calC$.
The following is proved in \cite{ref:SchuiererAn96}.

\begin{observation}\label{obser:closed}{\em (Lemma 3.1 of \cite{ref:SchuiererAn96})}
Suppose $d_1$ and $d_2$ are the two doors of a closed corridor $\calC$.
For any two points $p_1\in d_1$ and $p_2\in d_2$, the concatenation of
$\overline{p_1q_1}$, a shortest path from $q_1$ to $q_2$ in $\calC$, and $\overline{q_2p_2}$ is a shortest path from $p_1$ to $p_2$ in $\calC$. We use $\pi(\calC,p_1,p_2)$ to denote the path, and we call it a {\em canonical path}.
\end{observation}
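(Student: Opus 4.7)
The plan is to prove the length identity $L_1(\pi^*)=|\overline{p_1q_1}|+L_1(\pi_{q_1q_2})+|\overline{q_2p_2}|$, where $\pi^*$ is any shortest $p_1$-to-$p_2$ path in $\calC$ and $\pi_{q_1q_2}$ is any shortest $q_1$-to-$q_2$ path in $\calC$. From this the observation is immediate: the concatenation $\overline{p_1q_1}\cup\pi_{q_1q_2}\cup\overline{q_2p_2}$ lies in $\calC$ (since $\overline{p_1q_1}\subset d_1$ and $\overline{q_2p_2}\subset d_2$) and has length $L_1(\pi^*)$, so it is itself a shortest $p_1$-to-$p_2$ path. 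The ``$\leq$'' direction of the identity is trivial by exhibiting this very path.

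Before tackling ``$\geq$'', I would pin down the geometry. Because $d_1$ is vertical and $H_1$ is a histogram with vertical base $d_1$, its non-base boundary is a $y$-monotone chain, and any window parallel to $d_1$ is itself vertical; such a window cannot be extended in its own direction to meet $d_1$. The hypothesis that $w_1$ extends to $d_1$ therefore forces $w_1$ to be horizontal. Writing $w_1^L$ for the endpoint of $w_1$ nearer to $d_1$, the extension $\overline{q_1w_1^L}$ is then a horizontal segment inside $\calC$ with $y_{q_1}=y_{w_1}$, and the analogous statements hold for $w_2$ and $q_2$. I would then prove a histogram lemma: for any $p\in d_1$ and $a\in H_1$, the shortest path in $H_1$ from $p$ to $a$ has length $|y_p-y_a|+(x_a-x_{d_1})$. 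The matching lower bound is the standard $L_1$ inequality; for the upper bound, the L-shaped path that travels along $d_1$ from $p$ to $(x_{d_1},y_a)$ and then horizontally to $a$ lies in $H_1$, because the horizontal cross section of $H_1$ at height $y_a$ is a single interval abutting $d_1$ and containing $a$.

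To conclude, I would pick $a$ as the first crossing of $\pi^*$ with $w_1$ and $b$ as the last crossing with $w_2$; both exist because each of $w_1,w_2$ separates $d_1$ from $d_2$ inside $\calC$. The subpath $\pi^*(p_1,a)$ lies in $H_1$, so the histogram lemma gives $L_1(\pi^*(p_1,a))=|y_{p_1}-y_a|+(x_a-x_{d_1})=|\overline{p_1q_1}|+|\overline{q_1a}|$ upon substituting $y_a=y_{q_1}$, and symmetrically $L_1(\pi^*(b,p_2))=|\overline{bq_2}|+|\overline{q_2p_2}|$. Since $\overline{q_1a}$ decomposes as the extension $\overline{q_1w_1^L}$ followed by the portion of $w_1$ between $w_1^L$ and $a$, it lies in $\calC$; similarly for $\overline{bq_2}$. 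Hence $\overline{q_1a}\cup\pi^*(a,b)\cup\overline{bq_2}$ is a $q_1$-to-$q_2$ path in $\calC$ of length at least $L_1(\pi_{q_1q_2})$, and assembling gives
\[
L_1(\pi^*)=L_1(\pi^*(p_1,a))+L_1(\pi^*(a,b))+L_1(\pi^*(b,p_2))\geq|\overline{p_1q_1}|+L_1(\pi_{q_1q_2})+|\overline{q_2p_2}|,
\]
the desired inequality. The main obstacle is the histogram lemma, specifically the containment of the L-shaped witness inside $H_1$; this crucially uses that $d_1$ is an entire side of $H_1$ together with the single-interval characterization of horizontal cross sections of histograms. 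A subsidiary care point is justifying the orientation of $w_1$, which is forced by rectilinearity and the extension hypothesis but easy to overlook.
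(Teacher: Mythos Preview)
The paper does not prove this observation; it is quoted from Lemma~3.1 of Schuierer~\cite{ref:SchuiererAn96}. Your argument is essentially correct and self-contained, with one unjustified step and one unnecessary detour.

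The unjustified step is the assertion that ``the subpath $\pi^*(p_1,a)$ lies in $H_1$''. Taking $a$ to be the first crossing of $\pi^*$ with $w_1$ only confines $\pi^*(p_1,a)$ to the $d_1$-side of $w_1$, which is $H_1$ together with all the pockets $Q(w')$ behind the \emph{other} windows $w'\neq w_1$ of $H_1$; a rectilinear shortest path can wander into such a pocket and return via an L-shaped detour of the same $L_1$ length, so the containment need not hold. Fortunately you do not need it, nor the histogram lemma: for the ``$\geq$'' direction the trivial bound $L_1(\pi^*(p_1,a))\geq |p_1-a|_1$ already gives $L_1(\pi^*(p_1,a))\geq |\overline{p_1q_1}|+|\overline{q_1a}|$, since $p_1,q_1$ share their $x$-coordinate while $q_1,a$ share their $y$-coordinate. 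The histogram lemma can therefore be dropped entirely; the rest of your assembly goes through unchanged.

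One further point deserves a sentence: you implicitly assume that $a$ precedes $b$ along $\pi^*$. This follows because $H_1\cap H_2=\emptyset$ in a closed corridor (a common point would yield a horizontal segment joining the two doors), whence $H_2\subseteq Q(w_1)$ and symmetrically $H_1\subseteq Q_{H_2}(w_2)$; from this one checks that the $d_1$-side of $w_1$ and the $d_2$-side of $w_2$ are disjoint, so the last crossing with $w_2$ cannot occur before the first crossing with $w_1$.
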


The above defines two backbone points on each door of every open corridor and one backbone point on each door of every closed corridor. Let $\calB$ denote the set of all such  backbone points. Since there are $O(h)$ corridors, the size of $\calB$ is $O(h)$.

\subsubsection{The Reduced Graph $G(\calB)$}
\label{sec:reduce}

In the following, we introduce the reduced graph, denoted by
$G(\calB)$, and we will use it to compute optimal paths.
We first consider the case where both $s$ and $t$ are in junction rectangles.
With a little abuse of notation, we let $\calB$ also contain both $s$ and $t$.

We build the graph $G(\calB)$ with respect to the points of $\calB$ in the
same way as $G(\calV)$ with respect to $\calV$ in
Section~\ref{sec:old}. Hence, $G(\calB)$ has $O(h\log h)$ vertices and
$O(h\log h)$ edges. In addition, we add the following $O(h)$ edges to
$G(\calB)$. Consider a closed corridor $\calC$ with the two backbone
points $q_1$ and $q_2$ on its two doors. Note that $q_1$ and $q_2$ are
also two vertices in $G(\calB)$. We add to $G(\calB)$ an edge
$e(q_1,q_2)$ to connect $q_1$ and $q_2$ with length equal to
$L_1(\pi(\calC,q_1,q_2))$, i.e., the length of the canonical path
$\pi(\calC,q_1,q_2)$. We call $e(q_1,q_2)$ a {\em corridor edge} of
$G(\calB)$, and call $\pi(\calC,q_1,q_2)$ a {\em corridor path} of $\calC$.
We do this for all closed corridors. This completes the construction
of $G(\calB)$. Since there are $O(h)$ corridors, $G(\calB)$ has $O(h)$
corridor edges. For differentiation, other edges of $G(\calB)$ that
are not corridor edges are called {\em ordinary edges}. Hence, $G(\calB)$ has $O(h\log h)$ edges in total.

Note that every path $\pi_{G(\calB)}$ in $G(\calB)$ corresponds to a path
$\pi$ in $\calP$ with
the same length in the sense that if the path $\pi_{G(\calB)}$
contains a corridor edge, then $\pi$ contains the corresponding
corridor path.
Similar to Lemma~\ref{obser:rectangle}, we have the following observation.

\begin{observation}\label{obser:recreduced}
For any two points $p$ and $q$ in $\calB$, if the rectangle $R_{pq}$
is empty, then $G(\calB)$ contains a staircase path connecting $p$ and
$q$.
\end{observation}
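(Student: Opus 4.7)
The plan is to mirror the proof of Lemma~\ref{obser:rectangle} for $G(\calV)$ (see \cite{ref:ClarksonRe87,ref:LeeSh91,ref:YangRe95}), since the reduced graph $G(\calB)$ is built on $\calB$ by exactly the same recursive cut-line construction used for $G(\calV)$ on $\calV$. In particular, the corridor edges (which are the only new ingredient in $G(\calB)$) play no role here; only ordinary edges are needed. I would proceed by induction on the cut-line tree $T(\calB)$.

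For the base case, a leaf of $T(\calB)$ corresponds to a set $V(u) \subseteq \calB$ containing at most one point, and there is nothing to prove. For the inductive step, let $u$ be an internal node with cut-line $l(u)$, and assume the statement holds for both children of $u$. Consider $p,q \in V(u)$ with $R_{pq}$ empty. If both $p$ and $q$ lie on the same side of $l(u)$, then they belong to $V(u')$ for the same child $u'$ of $u$, and by the induction hypothesis $G(\calB)$ contains a staircase path connecting them. Otherwise $l(u)$ must cross the interior (or boundary) of $R_{pq}$: since $R_{pq}$ spans the $x$-range $[\min(x_p,x_q), \max(x_p,x_q)]$ and $l(u)$ separates $p$ from $q$, the vertical line $l(u)$ intersects $R_{pq}$.

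In this separating case, let $p'$ (resp.\ $q'$) denote the horizontal projection of $p$ (resp.\ $q$) onto $l(u)$. Because $R_{pq}$ is empty (hence contained in $\calP$), the horizontal segments $\overline{pp'}$ and $\overline{qq'}$ lie entirely in $R_{pq}\subseteq \calP$, so both $p$ and $q$ are horizontally visible to $l(u)$ and thus $p'$ and $q'$ are Steiner points of $G(\calB)$ at node $u$; moreover $G(\calB)$ contains the edges $\overline{pp'}$ and $\overline{qq'}$ by construction. The portion of $l(u)$ between $p'$ and $q'$ is also contained in $R_{pq}$ and hence obstacle-free, so every pair of adjacent Steiner points on $l(u)$ lying between $p'$ and $q'$ is visible to one another; by the definition of $G(\calB)$ there is an edge between each such consecutive pair. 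Concatenating these vertical edges with $\overline{pp'}$ and $\overline{qq'}$ yields a path in $G(\calB)$ from $p$ to $q$ whose geometric realization consists of one horizontal segment, one vertical segment (the union of the collinear vertical edges on $l(u)$), and one horizontal segment, entirely contained in $R_{pq}$. This is a (possibly degenerate) Z-shape, and in particular contains no U-shaped subpath, so it is a staircase path.

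The only subtlety I expect to address carefully is the separating case when $p$ or $q$ itself lies on $l(u)$, or when one of $p,q$ is the source/target inserted into $\calB$: in the former situation the corresponding ``projection'' coincides with the point, so one of $\overline{pp'}$, $\overline{qq'}$ is degenerate, and the argument still goes through; in the latter, the insertion of $s$ and $t$ into $\calB$ ensures they participate in the cut-line construction exactly as ordinary backbone points, so visibility and projection edges are available as needed. Apart from this, the main obstacle is simply to verify that the horizontal-visibility–based edge set is rich enough in $G(\calB)$ to reproduce the argument that works for $G(\calV)$; this follows because the construction rules for the two graphs are identical in form, differing only in the point set on which they are built, and visibility is determined by $\calP$ rather than by the chosen point set.
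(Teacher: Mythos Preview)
Your proof is correct and is exactly the standard cut-line induction underlying Lemma~\ref{obser:rectangle}; the paper does not spell out a proof of Observation~\ref{obser:recreduced} at all, merely stating that it is ``similar to Lemma~\ref{obser:rectangle}'' and deferring to \cite{ref:ClarksonRe87,ref:LeeSh91,ref:YangRe95}. Your write-up therefore fills in precisely the details the paper omits, including the separating case (horizontal projections plus the obstacle-free vertical run on $l(u)$ inside $R_{pq}$) and the degenerate situations when a point lies on the cut-line.
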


The following lemma is analogous to Lemma~\ref{lem:homo}, but on
the reduced graph $G(\calB)$. It explains why the graph $G(\calB)$ can help to find optimal paths.

\begin{lemma}\label{lem:target}
There exists a path $\pi_{G(\calB)}$ in $G(\calB)$ from $s$ to $t$ that is homotopic to an
optimal \st\ path and the two paths have the same length; we call $\pi_{G(\calB)}$ a {\em target path}.
\end{lemma}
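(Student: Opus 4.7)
My plan is to mirror the proof of Lemma~\ref{lem:homo}, accounting for two new features of $G(\calB)$: the vertex set is restricted to the backbone set $\calB$ (together with $s,t$), and the graph contains corridor edges that act as shortcuts for canonical paths through closed corridors. Let $\pi$ be an arbitrary optimal \st\ path in $\calP$. I will construct the target path $\pi_{G(\calB)}$ in two stages.

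First, I would normalize $\pi$ with respect to the corridor structure. For every maximal subpath of $\pi$ contained in a corridor $\calC$, with endpoints $p_1\in d_1$ and $p_2\in d_2$ on the two doors, replace the subpath by the canonical path $\pi(\calC,p_1,p_2)$. By Observations~\ref{obser:open} and~\ref{obser:closed}, this substitution preserves both length and homotopy class inside $\calC$, hence globally. The key benefit is that the canonical path touches backbone points on both doors, and, for a closed corridor, its interior segment from $q_1$ to $q_2$ is represented by the single corridor edge $e(q_1,q_2)\in G(\calB)$, whose weight is $L_1(\pi(\calC,q_1,q_2))$; the two short endpoint segments $\overline{p_1q_1}$ and $\overline{q_2p_2}$ will be absorbed into the adjacent junction-rectangle pieces handled in the next stage.

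Second, I would construct the $G(\calB)$-realization for the non-corridor portion of the normalized path. Since $s$ and $t$ lie in junction rectangles, this portion consists of pieces that live in the union of the junction rectangles and the canals/door-to-backbone extensions. I plan to decompose each such piece into staircase subpaths whose endpoints lie in $\calB$, apply the pushed-staircase construction from the proof of Lemma~\ref{lem:homo}, and for each L-shaped subpath between two consecutive backbone points invoke Observation~\ref{obser:recreduced} (together with the chain-of-rectangles extension used in the proof of Lemma~\ref{lem:homo}, with $\calB$ in place of $\calV$) to realize a homotopic staircase path of equal length in $G(\calB)$. Concatenating these staircase paths with the corridor edges produced in the first stage yields $\pi_{G(\calB)}$, whose length equals $L_1(\pi)$ and which is homotopic to $\pi$ in $\calP$.

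The main obstacle is justifying that the pushing step still works when the only admissible breaking vertices are backbone points rather than arbitrary vertices of $\calV$. The worry is that while pushing a segment of the normalized path, it might be halted by a reflex vertex of $\calV$ that does not belong to $\calB$. I plan to resolve this by a case analysis tied to the corridor classification: after the stage-one normalization, any $\calV$-vertex that can halt the pushing of a subpath living outside corridor interiors must lie on a door or on the canal boundary, and in each case it either coincides with a backbone point in $\calB$ or else forces the subpath to locally detour into the adjacent corridor, a detour already replaced by the corresponding canonical path. Verifying this dichotomy, and checking that the chain-of-rectangles argument of Lemma~\ref{lem:homo} carries over verbatim with $\calB$ replacing $\calV$, is where the technical work of the proof concentrates.
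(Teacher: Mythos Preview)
Your high-level plan matches the paper's: normalize $\pi$ corridor-by-corridor using canonical paths, then realize the remaining portion as a concatenation of staircases in $G(\calB)$ via a pushing argument and Observation~\ref{obser:recreduced}. The gap is in your resolution of the ``main obstacle.'' Your claimed dichotomy---that any $\calV$-vertex halting a push must lie on a door or canal boundary, hence is either already in $\calB$ or is handled by a prior canonical-path replacement---is false. After the open-corridor normalization, horizontal segments of the path do cross corridor interiors (through canals), and when you subsequently push a \emph{vertical} segment rightwards it can enter the interior of an open corridor and be stopped by an obstacle vertex there that is neither on a door nor on the canal boundary nor a backbone point. Nothing in your two stages has ``already replaced'' this situation.

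The paper's proof does not reuse the pushed-staircase construction of Lemma~\ref{lem:homo} verbatim; it introduces a genuinely new pushing procedure tailored to $\calB$. First all horizontal segments are pushed upward (so each contains an upper bridge, hence a backbone point), and only then are vertical segments pushed rightward---with the crucial twist that if a vertical segment is stopped by a vertex in the interior of an open corridor, it is pushed \emph{back leftward} onto the corridor's left door. A separate case analysis then shows every segment of the resulting path $\pi_5$ contains a backbone point, which is what licenses the L-shaped decomposition. You also skip two other non-trivial steps the paper carries out explicitly: the U-shape analysis showing that the middle segment of any U-turn in the normalized path already contains a backbone point (needed before you can even claim a staircase decomposition with endpoints in $\calB$), and the explicit construction of the chain $h_1,\ldots,h_k\in\calB$ inside $R_{xy}$ by repeatedly pushing the horizontal side up to an upper bridge (this does not ``carry over verbatim'' from Lemma~\ref{lem:homo}, since arbitrary $\calV$-vertices are no longer available as intermediate points).
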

\begin{proof}
Let $\pi$ be an optimal \st\ path in $\calP$. Since both $s$ and $t$ are in junction rectangles, an easy observation is that if $\pi$ contains an interior point of a corridor $\calC$, then $\pi$ must {\em travel through} $\calC$, i.e., $\pi$ enters $\calC$ through one door and leaves $\calC$ through the other.

We assume that $\pi$ travels through some closed corridors since
otherwise the analysis would be similar (but easier).
Consider each
such closed corridor $\calC$ with two doors $d_1$ and $d_2$. Let $q_1$
and $q_2$ be the two backbone points on $d_1$ and $d_2$, respectively. If we
traverse $\pi$ from $s$ to $t$, define $p_1$ to be the last point on $d_1$
we encounter and define $p_2$ to be the first point on $d_2$ we encounter.
Hence, the subpath of $\pi$ between $p_1$ and $p_2$, denoted by
$\pi(p_1,p_2)$, is in $\calC$. We obtain another \st\ path $\pi'$ by
replacing $\pi(p_1,p_2)$ with the canonical path $\pi(\calC, p_1,
p_2)$ in $\calC$. By Observation~\ref{obser:closed}, $\pi(\calC, p_1,
p_2)$ is a shortest path in $\calC$, and thus,
$L_1(\pi')=L_1(\pi)$. Since both $\pi(p_1,p_2)$ and $\pi(\calC, p_1,
p_2)$ are paths in $\calC$, which is simply connected, they are
homotopic to each other. Therefore, $\pi'$ is homotopic to $\pi$.
Note that $\pi'$ contains the two backbone points $q_1$ and $q_2$ and
the subpath of $\pi'$ between $q_1$ and $q_2$ is the corridor path of
$\calC$, which corresponds to a corridor edge $e(q_1,q_2)$ in $G(\calB)$.

We do the above for all such closed corridors that are traveled through by $\pi$. With a little abuse of notation, let $\pi'$ be the new \st\ path. By the above analysis, $L_1(\pi')=L_1(\pi)$ and $\pi'$ is homotopic to $\pi$.
Let $\pi_1$ be a maximal subpath of $\pi'$ that does not contain any corridor path. Note that $\pi_1$ does not contain an interior point of any closed corridor. Let $a$ and $b$ be the two endpoints of $\pi_1$. Clearly, $a$ and $b$ are in $\calB$. Because all corridor paths are in $G(\calB)$, to prove the lemma, it is sufficient to show that there is a path in $G(\calB)$ connecting $a$ and $b$ with the same length as $\pi_1$ and the path is homotopic to $\pi_1$.
We assume that $\pi_1$ travels through at least one open corridor since otherwise the analysis would be similar (but easier).

Suppose $\pi_1$ travels through an open corridor $\calC$. If we traverse on $\pi_1$ from $a$ to $b$, let $p_1$ be the first point and last point of $\pi_1\cap \calC$, respectively. Hence, $p_1$ is on a door of $\calC$ and $p_2$ is on the other door.
Let $\pi_1(p_1,p_2)$ be the subpath of $\pi_1$ between $p_1$ and
$p_2$. We obtain another path $\pi_1'$ from $a$ to $b$ by replacing
$\pi_1(p_1,p_2)$ with the canonical path $\pi(\calC,p_1,p_2)$. Since
$\pi(\calC,p_1,p_2)$ is a shortest path
between $p_1$ and $p_2$ in $\calC$, $L_1(\pi_1)=L_1(\pi_1')$ and
$\pi_1'$ is homotopic to $\pi_1$. By Observation~\ref{obser:open},
$\pi(\calC,p_1,p_2)$ consists of at most one horizontal segment and at most two vertical segments, and further, the two endpoints of the horizontal segment are on the two doors of $\calC$, respectively, and the two vertical segments are on the doors of $\calC$.

We do this for all such open corridors that are traveled through by
$\pi_1$. Let $\pi_2$ denote the new path, which still connects $a$ and
$b$. Based on the above discussion, $L_1(\pi_1)=L_1(\pi_2)$ and
$\pi_2$ is homotopic to $\pi_1$. Further, for each horizontal segment
of $\pi_2$, if it intersects the interior of a corridor $\calC$
(which is necessarily an open corridor), then it must intersect both
doors of $\calC$.

Suppose we traverse $\pi_2$ from $a$ to $b$. If $\pi_2$ intersects a junction rectangle $R$, then let $p_1$ and $p_2$ be the first and last points $\pi_2$ intersecting $R$, respectively. Let $\pi_2(p_1,p_2)$ be the subpath of $\pi_2$ between $p_1$ and $p_2$. We obtain another path $\pi_2'$ from $a$ to $b$ by replacing $\pi_2(p_1,p_2)$ with an L-shaped path connecting $p_1$ and $p_2$, which has the same length as $\pi_2(p_1,p_2)$ and is homotopic to $\pi_2(p_1,p_2)$. Hence, $\pi_2'$ has the same length as $\pi_2$ and is homotopic to $\pi_2$.

We do the above for all such junction rectangles intersected by $\pi_2$, and let $\pi_3$ be the resulting path, which still connects $a$ to $b$. The length of $\pi_3$ is the same as that of $\pi_2$ and $\pi_3$ is homotopic to $\pi_2$. Further, for each vertical segment of $\pi_3$ that is not incident to either $s$ or $t$, it must be on a vertical side of a junction rectangle.

We assume that $\pi_3$ contains a U-shaped subpath since otherwise the analysis would be similar (but easier).
Consider a U-shape subpath of $\pi_3$ with three segments $s_1$, $s_2$, and $s_3$.
As shown in~\cite{ref:YangRe95}, $s_2$ must contain an obstacle edge $e$ of $\calP$ since otherwise we could shorten the path by dragging $s_2$ towards the direction of $s_1$ and $s_3$. Depending on whether $s_2$ is horizontal or vertical, there are two cases.

\begin{enumerate}
\item
If $s_2$ is horizontal, then $e$ must intersect the interior of an open corridor $\calC$. To see this, on the one hand, $e$ cannot be in a closed corridor because $\pi_1$ (and thus $\pi_2$ and $\pi_3$) does not contain an interior point of any closed corridor. On the other hand, the top or bottom side of each junction rectangle only contains a proper subset of an obstacle edge.

Since $s_2$ is a horizontal segment of $\pi_3$ and $e$ (and thus $s_2$) intersects the interior of $\calC$, $s_2$ intersects both doors of $\calC$, say, at two points $p_1$ and $p_2$. Without loss of generality, we assume the obstacle bounded by $e$ is locally above $e$.
Because $e$ intersects the interior of $\calC$ and $e$ is an obstacle edge, we cannot freely move $\overline{p_1p_2}$ in $\calC$ vertically upwards. This implies that $\overline{p_1p_2}$ is the upper bridge of $\calC$ and thus $p_1$ and $p_2$ are two backbone points of $\calC$. We pick either one of $p_1$ and $p_2$, and call it a {\em breakpoint} of $\pi_3$.

\item
If $s_2$ is vertical, since $s_2$ is between $s_1$ and $s_3$, $s_2$ cannot be incident to either $s$ or $t$. Hence, $s_2$ (and thus $e$) must be on a vertical side of a junction rectangle $R$. Further, since $s_1$ and $s_3$ are toward the same direction, each of $s_1$ and $s_3$ must go inside an open corridor from $s_2$ since otherwise they would have to both go inside $R$ and we could drag $s_2$ to shorten the path.

Let $p$ be the common endpoint of $s_1$ and $s_2$ (e.g., see Fig.~\ref{fig:breakpoint}). Hence, $p$ must be on a door $d_1$ of an open corridor $\calC$. Since $s_1$ goes inside $\calC$, $s_1$ must also intersect the other door $d_2$ of $\calC$. Without loss of generality, we assume $s_1$ is above $s_2$. Since $s_2$ contains an obstacle edge $e$, $d_1$ and $e$ are on the same side of $R$ and $d_1$ is higher than $e$. As $s_1$ intersects both doors of $\calC$, it must be higher than the lower bridge of $\calC$. This implies that $s_2$ must contain the endpoint $p_1$ of the lower bridge of $\calC$ on $d_1$, and we call $p_1$ a {\em breakpoint} of $s_2$.
\end{enumerate}

\begin{figure}[t]
\begin{minipage}[t]{\linewidth}
\begin{center}
\includegraphics[totalheight=1.4in]{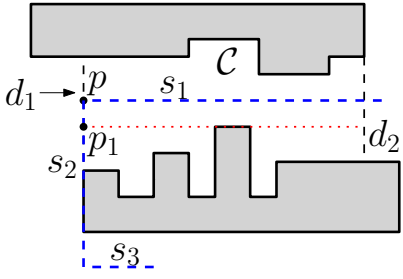}
\caption{\footnotesize Illustrating the case where $s_2$ is vertical.
}
\label{fig:breakpoint}
\end{center}
\end{minipage}
\vspace*{-0.15in}
\end{figure}

In either case above, we show that $s_2$ must contain a backbone point as a breakpoint of $\pi_3$.

If $\pi_3$ has other U-subpaths, then for each of them, the middle segment also contains a backbone point as a breakpoint of $\pi_3$. Hence, if $\pi_4$ is a subpath of $\pi_3$ partitioned by the breakpoints, then $\pi_4$ must be a staircase path and both endpoints of $\pi_4$ must be in $\calB$. Let $a$ and $b$ be the two endpoints of $\pi_4$, respectively. To prove the lemma, it is sufficient to show that $G(\calB)$ has a path connecting $a$ and $b$ with the same length as $\pi_4$ and the path is homotopic to $\pi_4$.

Without loss of generality, we assume that $b$ is to the northeast of $a$.
Based on $\pi_4$, in the following, we obtain another shortest path $\pi_5$ from $a$ to $b$ such that $\pi_5$ has the same length as $\pi_4$ and is homotopic to $\pi_4$. In fact, $\pi_5$ is similar in spirit to the pushed staircase path defined in \cite{ref:YangRe95} (also discussed in Section~\ref{sec:old}) but with respect to the open corridors and the junction rectangles.
If the segment of $\pi_4$ incident to $a$ is horizontal, then let $\alpha$ be the second horizontal segment of $\pi_4$; otherwise let $\alpha$ be the first horizontal segment of $\pi_4$. Unless $\alpha$ is incident to $b$, we
push $\alpha$ upwards until either it
hits a vertex of $\calP$ or it becomes collinear with the next horizontal segment of $\pi_4$. In the latter case, we merge the two horizontal segments and let $\alpha$ refer to the merged segment and we push $\alpha$ upwards again. This procedure stops either when $\alpha$ hits an edge of $\calP$ or becomes incident to $b$.
We do the same for the rest of horizontal segments following their order along the path from $a$ to $b$. Let $\pi_4'$ denote the resulting path. Clearly, $\pi_4'$ has the same length as $\pi_4$ and  is homotopic to $\pi_4$.

Next, we push the vertical segments of $\pi_4'$. If the segment of $\pi'_4$ incident to $a$ is vertical, then let $\beta$ be the second vertical segment of $\pi'_4$; otherwise let $\beta$ be the first vertical segment of $\pi_4'$.
Unless $\beta$ is incident to $b$, we push $\beta$ rightwards until either it
hits a vertex of $\calP$ or it becomes collinear with the next vertical segment of $\pi_4'$. In the latter case, we merge the two vertical segments and let $\beta$ refer to the merged segment and we continue to push $\beta$ rightwards. This procedure stops either when $\beta$ hits a vertex of $\calP$ or becomes incident to $b$. Suppose $\beta$ hits a vertex $v$ of $\calP$. If $v$ is on the boundary of a junction rectangle $R$, in which case $\beta$ is on the right side of $R$, then we do nothing. Otherwise, $v$ must be a vertex in the interior of an open corridor $\calC$, in which case we push $\beta$ {\em leftwards} until it overlaps with the left door of $\calC$ (note that $\beta$ is now on the right side of a junction rectangle). This finishes the push operation for the vertical segment $\beta$. We proceed to do the same for the rest of the vertical segments following their order along the path from $a$ to $b$.
Let $\pi_5$ be the resulting path. Clearly, $\pi_5$ has the same length as $\pi_4'$ and is homotopic to $\pi_4'$.

Consider any 
segment $\alpha$ of $\pi_5$. In the following, we show that $\alpha$ must contain a backbone point of $\calB$. This is obviously true if $\alpha$ is incident to either $a$ or $b$. Below we assume that $\alpha$ is incident to neither $a$ nor $b$. Depending on whether $\alpha$ is horizontal or vertical, there are two cases.

\begin{enumerate}
\item

If $\alpha$ is horizontal, then $\alpha$ contains an obstacle edge $e$ that bounds an obstacle from below. Recall that due to definition of degenerated open corridors, each vertex of $\calP$ must be in a corridor. Let $\calC$ be the corridor that contains the right endpoint of $e$. Note that $\calC$ may be a degenerated open corridor. Since both the vertical segments of $\pi_5$ right before and after $\alpha$ are on vertical sides of junction rectangles, $\alpha$ must travel through $\calC$. Further, $\alpha$ contains the upper bridge of $\calC$ since the portion of $\alpha$ between the two doors of $\calC$ cannot be dragged upwards in $\calC$ due to $e\cap \calC$. Hence, $\alpha$ contains two backbone points that are the two endpoints of the upper bridge of $\calC$.

\item
If $\alpha$ is vertical, then according to our construction of $\pi_5$, $\alpha$ is on the right side of a junction rectangle. Depending on whether $\alpha$ contains an obstacle vertex, there are two cases.

\begin{figure}[t]
\begin{minipage}[t]{0.49\linewidth}
\begin{center}
\includegraphics[totalheight=1.1in]{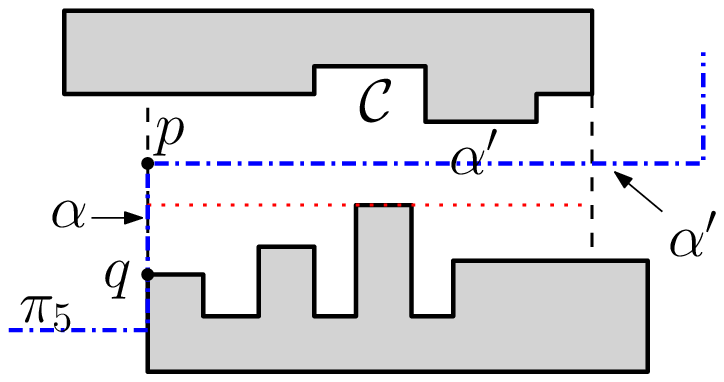}
\caption{\footnotesize The blue dashed dotted path is $\pi_5$, where the two segments $\alpha$ and $\alpha'$ are labeled. The red dotted segment is the lower bridge of $\calC$.
}
\label{fig:case1}
\end{center}
\end{minipage}
\hspace{0.05in}
\begin{minipage}[t]{0.49\linewidth}
\begin{center}
\includegraphics[totalheight=1.1in]{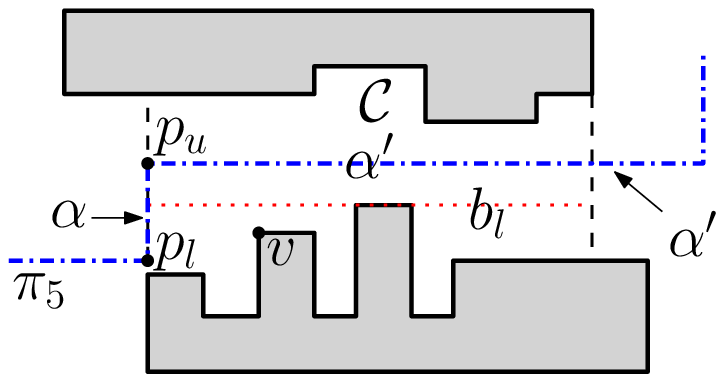}
\caption{\footnotesize The upper and lower endpoints of $\alpha$ are $p_u$ and $p_l$, respectively. The obstacle vertex $v$ is also labeled.
}
\label{fig:case2}
\end{center}
\end{minipage}
\vspace*{-0.15in}
\end{figure}

\begin{enumerate}

\item
If $\alpha$ contains an obstacle vertex, let $p$
be the upper endpoint of $\alpha$ (e.g., see Fig.~\ref{fig:case1}). Then, the next horizontal segment $\alpha'$ of $\pi_5$ starts from $p$ going rightwards inside an open corridor $\calC$, and this
segment travels through $\calC$. Let $d$ be the door of $\calC$
that contains $p$. Let $q$ be the lower endpoint of $d$. Since $\alpha$ contains an obstacle vertex and the upper endpoint of
$\alpha$ is on $d$, $q$ must be on $\alpha$.
Further, since $\alpha'$ travels through $\calC$, $\overline{pq}$ must contain the left endpoint of the lower bridge of $\calC$, which is a backbone point. As $\overline{pq}\subseteq \alpha$, $\alpha$ contains the above backbone point.

\item
Otherwise, according to our construction of $\pi_5$, if we push $\alpha$ rightwards, then we will hit an obstacle vertex $v$ in the interior of $\calC$ (e.g., see Fig.~\ref{fig:case2}). Let $\alpha'$ be the next horizontal segment of $\alpha$. As the above case, $\alpha'$ is going rightwards and travels through $\calC$. This means that $v$ is below $\alpha'$. Note that the lower bridge $b_l$ of $\calC$ must be below $\alpha'$ and above $v$. Also note that $\alpha$ overlaps with the left door $d$ of $\calC$. Let $p_u$ and $p_l$ be the upper and lower endpoints of $\alpha$, respectively. Since $v$ will be hit if we push $\alpha$ rightwards, $v$ is above $p_l$ and below $p_u$. Since $b_l$ is above $v$ and below $\alpha'$ (and thus $p_u$), we obtain that $b_l$ is above $p_l$ and below $p_u$ (e.g., see Fig.~\ref{fig:case2}). Therefore, the left endpoint of $b_l$, which is a backbone point, is on $\alpha$. This proves that $\alpha$ contains a backbone point.
\end{enumerate}

\end{enumerate}

The above shows that each segment of $\pi_5$ contains a backbone point. Hence, each subpath of $\pi_5$ partitioned by all breakpoints on $\pi_5$ must be an L-shaped path. Let $\pi_6$ be any such subpath, and let $x$ and $y$ be its endpoints, which are both in $\calB$. To prove the lemma, it is sufficient to show that $G(\calB)$ contains a path from $x$ to $y$ that has the same length as $\pi_6$ and is homotopic to $\pi_6$, as follows.

Without loss of generality, we assume $y$ is to the northeast of $x$. We also assume that the segment incident to $x$ is horizontal and the one incident to $y$ is vertical. Other cases can be analyzed similarly. Hence, $\pi_6$ consists of a horizontal segment $\overline{xz}$ and a vertical
segment $\overline{zy}$ for some point $z$.

If the rectangle $R_{xy}$ is empty (i.e., $R_{xy}$ in $\calP$), then
by Observation~\ref{obser:recreduced}, $G(\calB)$ contains a staircase
path $\pi_{G(\calB)}(x,y)$ from $x$ to $y$ in $R_{xy}$. Since $R_{xy}$
is in $\calP$,  $\pi_{G(\calB)}(x,y)$ is homotopic to $\pi_6$ with
the same length. Otherwise, there exist backbone points contained in
$R_{xy}$ and they can be ordered as $x=h_1, h_2, \ldots, h_k=y$ such
that $h_{i+1}$ is to the northeast of $h_i$ and $R_{h_i,h_{i+1}}$ is
empty for each $1\leq i\leq k-1$. By
Observation~\ref{obser:recreduced}, for each $1\leq i\leq k-1$, since
the rectangle $R_{h_i,h_{i+1}}$ is empty, $G(\calB)$ has a staircase
path $\pi_{G(\calB)}(h_i,h_{i+1})$ from $h_i$ to $h_{i+1}$. Let
$\pi_{G(\calB)}(x,y)$ be the concatenation of all these staircase
paths $\pi_{G(\calB)}(h_i,h_{i+1})$ for $i=1,2,\ldots,k-1$. Clearly,
$\pi_{G(\calB)}(x,y)$ is a staircase path and thus has the same length
as $\pi_6$. In the following, we show how we find the above sequence
of backbone points, and the way we find them will also imply that
$\pi_{G(\calB)}(x,y)$ is homotopic to $\pi_6$.

\begin{figure}[t]
\begin{minipage}[t]{\linewidth}
\begin{center}
\includegraphics[totalheight=1.2in]{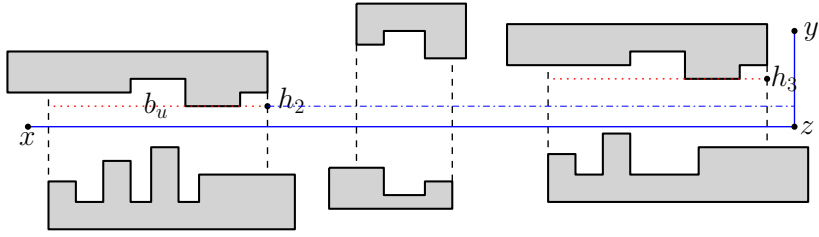}
\caption{\footnotesize Illustrating an example where $\overline{xz}$ travels through three open corridors and the two points $h_2$ and $h_3$ are labeled.
}
\label{fig:sequence}
\end{center}
\end{minipage}
\vspace*{-0.15in}
\end{figure}

Since $\overline{yz}$ is vertical and $y$ is in a junction rectangle, $z$ is also in the same junction rectangle. Refer to Fig.~\ref{fig:sequence}.
As $R_{xy}$ is not empty and both $x$ and $z$ are in junction rectangles, $\overline{xz}$ must travel through some open corridors (maybe degenerated). We push $\overline{xz}$ upwards until it hits a vertex of $\calP$, at which moment, the new segment, denoted by $\alpha$, must contain the upper bridge $b_u$ of an open corridor, and we let $h_2$ refer to the right endpoint of $b_u$ (recall that $h_1=x$). Note that $h_2$ is to northeast of of $h_1=x$ and $R_{h_1,h_2}$ is empty. Next, we consider the L-shaped path $\overline{h_2z}\cup \overline{zy}$. Note that $h_2$ is also in a junction rectangle. Hence, we can use the same way as above to find $h_3$, $h_4$, $\ldots$, until at some moment the pushed horizontal segment contains $y$.

As a summary, $G(\calB)$ contains a path $\pi_{G(\calB)}$ from $s$ to $t$ such that $\pi_{G(\calB)}$ has the same length as the optimal path $\pi$ and is homotopic to $\pi$.
\end{proof}

The following corollary confirms that $G(\calB)$ is indeed a ``path-preserving'' graph.

\begin{corollary}\label{coro:target}
A shortest \st\ path in $G(\calB)$ is a shortest \st\ path in $\calP$.
\end{corollary}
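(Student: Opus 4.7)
The plan is to obtain the corollary as an immediate consequence of Lemma~\ref{lem:target} together with the realizability of every graph path in the actual domain. The proof has two inequalities, and the work has essentially been done already; I just need to assemble it carefully.

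First I would establish the direction ``$G(\calB)$ path $\Rightarrow$ $\calP$ path of the same length.'' By construction, every ordinary edge of $G(\calB)$ is a horizontal or vertical segment lying in $\calP$, and every corridor edge $e(q_1,q_2)$ has weight equal to $L_1(\pi(\calC,q_1,q_2))$, where $\pi(\calC,q_1,q_2)$ is the canonical corridor path in $\calC$ guaranteed by Observation~\ref{obser:closed}. Hence any path $\pi_{G(\calB)}$ from $s$ to $t$ in $G(\calB)$ can be realized in $\calP$ by replacing each corridor edge with its corresponding corridor path, producing an \st\ path $\pi$ in $\calP$ with $L_1(\pi)=L_1(\pi_{G(\calB)})$. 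In particular, if $\pi^*_{G(\calB)}$ is a shortest \st\ path in $G(\calB)$, then the corresponding path $\pi^*$ in $\calP$ satisfies $L_1(\pi^*)=L_1(\pi^*_{G(\calB)})$.

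Next I would use Lemma~\ref{lem:target} in the opposite direction. Specialize the notion of optimal path to ``shortest path'' (the cost function being just length). Lemma~\ref{lem:target} then guarantees that for any shortest \st\ path $\sigma$ in $\calP$ there exists a target path $\pi_\sigma$ in $G(\calB)$ with $L_1(\pi_\sigma)=L_1(\sigma)$. Thus the length of a shortest \st\ path in $G(\calB)$ is at most $L_1(\sigma)$, which is the minimum path length in $\calP$. Combining this with the previous paragraph, $L_1(\pi^*)=L_1(\pi^*_{G(\calB)}) \le L_1(\sigma) \le L_1(\pi^*)$, so every inequality is an equality and $\pi^*$ is a shortest \st\ path in $\calP$.

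There is no serious obstacle; the only point requiring a moment of care is verifying that the ``realization in $\calP$'' of a $G(\calB)$-path preserves length exactly. This is where the definition of corridor-edge weights (set equal to $L_1(\pi(\calC,q_1,q_2))$) is used, and where Observation~\ref{obser:closed} ensures the canonical corridor path is a genuine shortest path in $\calC$ rather than an arbitrary curve. Once that is noted, the corollary is just a sandwich argument between the length of an optimal path in $\calP$ and the length of a shortest path in $G(\calB)$.
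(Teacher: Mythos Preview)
Your proof is correct and follows essentially the same approach as the paper: use Lemma~\ref{lem:target} to produce an \st\ path in $G(\calB)$ whose length equals that of a shortest \st\ path in $\calP$, and use the fact that every path in $G(\calB)$ corresponds to a path in $\calP$ of the same length to complete the sandwich argument. The paper's version is terser (it invokes a minimum-link shortest path rather than specializing the cost function to length), but the logic is identical.
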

\begin{proof}
Let $\pi$ be a minimum-link shortest \st\ path in $\calP$. By
Lemma~\ref{lem:target}, there is a path $\pi_{G(\calB)}$ from $s$ to
$t$ in  $G(\calB)$ with the same length of $\pi$. On the other hand, any
path in $G(\calB)$ corresponds to a path in $\calP$ with the same
length. Hence,
$\pi_{G(\calB)}$ is a shortest \st\ path in both $G(\calB)$ and $\calP$.
The corollary thus follows.
\end{proof}

\subsubsection{Computing the Graph $G(\calB)$ and the Reduced Domain}
\label{sec:graphconstruct}

We show that the graph $G(\calB)$ can be computed in $O(n+h\log^2 h)$ time. To this end, we will introduce a {\em reduced domain} $\calP_r$, which is a polygonal domain that is a subset of $\calP$ and has $O(h)$ vertices, such that every ordinary edge of $G(\calB)$ is in $\calP_{r}$.

Recall that in Section~\ref{sec:old} the graph $G(\calV)$ with respect to $\calV$ of $n$ points can be constructed in $O(n\log^2 n)$ time~\cite{ref:ClarksonRe87,ref:LeeSh91,ref:YangRe95}. To construct $G(\calB)$, one possible solution is to modify the previous algorithms~\cite{ref:ClarksonRe87,ref:LeeSh91,ref:YangRe95} on the set $\calB$ of $O(h)$ points. However, since we still need to determine whether two points of $\calB$ is visible in $\calP$ in order to determine whether $G(\calB)$ has an edge connecting the two points, even if we can reduce the factor $O(n\log^2 n)$ to $O(h\log^2 h)$, the algorithm may still suffer an $O(n\log n)$ factor in the time complexity. In the following, we propose a different approach.

We assume that the corridor structure of $\calP$ has already been computed. First of all, all backbone points can be easily computed in $O(n)$ time. 
Then, by using the algorithm in~\cite{ref:SchuiererAn96}, all corridor paths and thus the corridor edges of $G(\calB)$ can be computed in $O(n)$ time since the total size of all corridors is $O(n)$. It remains to compute the ordinary edges of $G(\calB)$, as follows.


Consider any ordinary edge $e$ of $G(\calB)$ that connects two vertices $v_1$ and $v_2$. Hence, $e$ is the segment $\overline{v_1v_2}$ that is either horizontal or vertical. Note that all vertices of $G(\calB)$ are in junction rectangles. If $e$ is vertical, an easy observation is that $e$ must be in a junction rectangle.

Suppose $e$ is horizontal and $e$ is not contained in a junction rectangle. Then, $v_1$ and $v_2$ are in two different junction rectangles. Hence, $e$ must travel through some open corridors. Observe that if $e$ travels through an open corridor $\calC$, then $e$ does not contain any point of $\calC$ that is not in the canal of $\calC$. This means that $e$ must be in the union of all junction rectangles and canals of all open corridors.

Define $\calP_{r}$ as the union of all junction rectangles and canals of all open corridors. The above discussions lead to the following lemma.

\begin{lemma}\label{lem:reduceddomain}
Every ordinary edge of $G(\calB)$ is in $\calP_{r}$.
\end{lemma}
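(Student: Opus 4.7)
My plan is to case-split on the orientation of the ordinary edge $e=\overline{v_1v_2}$. By construction, every ordinary edge of $G(\calB)$ inherits one of the two forms of edges in the $G(\calV)$ recipe: it is either (i) a horizontal edge joining a backbone point to its Steiner-point projection on a vertical cut-line, or (ii) a vertical edge joining two consecutive visible Steiner points on a common cut-line. In either case both endpoints are vertices of $G(\calB)$, and the paragraph preceding the statement observes that every such vertex lies in a junction rectangle; I would take this observation as the common starting point for both cases.

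For the vertical case I would use the structural observation that any vertical segment contained in $\calP$ lies in a single cell of $\vd(\calP)$: every diagonal of the vertical visibility decomposition is itself vertical, so moving vertically within $\calP$ one never crosses a diagonal. Hence $e$ lies in one cell of $\vd(\calP)$, and since its endpoints lie in junction rectangles, that cell must itself be a junction rectangle. In particular $e\subseteq\calP_r$.

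For the horizontal case I would argue by contradiction: suppose $e$ is not contained in any single junction rectangle. Then $v_1$ and $v_2$ lie in distinct junction rectangles, and the horizontal segment $e\subseteq\calP$ must traverse the interior of one or more corridors. No such corridor can be closed, since by definition a closed corridor admits no horizontal door-to-door segment at all. So each corridor $\calC$ crossed by $e$ is open, and $e\cap\calC$ is a horizontal door-to-door chord of $\calC$. I would then invoke the canal construction from Section~\ref{sec:backbone}: the upper and lower bridges $b_u,b_\ell$ were obtained by pushing a witness door-to-door horizontal segment up and down until meeting a vertex of $\calP$, and by this extremality any horizontal door-to-door chord of $\calC$ must lie in the closed rectangle bounded by $b_u$ and $b_\ell$, which is exactly the canal of $\calC$. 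Assembling the portions of $e$ that lie inside junction rectangles with the in-canal portions inside each traversed open corridor yields $e\subseteq\calP_r$.

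The main obstacle is the extremality argument in the horizontal case: one must carefully justify that any horizontal door-to-door chord of an open corridor lies between the upper and lower bridges, using that the pushes stop only at obstacle vertices; a chord strictly above $b_u$ would itself be a witness permitting a further upward push, contradicting the definition of $b_u$, and symmetrically for $b_\ell$. Once this is in hand the remaining steps, namely the cell-based argument for the vertical case and the decomposition of $e$ into junction-rectangle and corridor pieces, follow directly from the definitions in Section~\ref{sec:backbone}.
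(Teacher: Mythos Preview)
Your proposal is correct and follows essentially the same approach as the paper: case-split on the edge orientation, use the fact that all vertices of $G(\calB)$ lie in junction rectangles, observe that a vertical edge stays in one junction rectangle, and for a horizontal edge that leaves its junction rectangle, argue that each corridor it crosses is open and the crossing chord lies in the canal. You supply more detail than the paper does---in particular, the paper merely asserts the vertical case as ``an easy observation'' and states without justification that the horizontal edge ``must travel through some open corridors'' and that its intersection with each such corridor lies in the canal, whereas you spell out the $\vd(\calP)$-cell argument and the bridge-extremality argument explicitly---but the underlying structure is identical.
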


Since there are $O(h)$ junction rectangles and open corridors, and each canal of an open corridor is a rectangles, $\calP_{r}$ is essentially a polygonal domain that is the union of $O(h)$ rectangles.  Hence,  $\calP_{r}$ has $O(h)$ vertices and edges. We call $\calP_r$ the {\em reduced domain}. Constructing  $\calP_{r}$ can be easily done in $O(n)$ time from the corridor structure of $\calP$.

By Lemma~\ref{lem:reduceddomain}, we can compute the ordinary edges of $G(\calB)$ with respect to the reduced domain $\calP_r$ of $O(h)$ complexity instead of $\calP$ of $O(n)$ complexity. Consequently, by applying the previous algorithms~\cite{ref:ClarksonRe87,ref:LeeSh91,ref:YangRe95}, we can compute all ordinary edges of $G(\calB)$ in $O(h\log^2 n)$ time.

As a summary, we can compute the graph $G(\calB)$ in $O(n+h\log^2 h)$ time  and $O(n+h\log h)$ space.

\subsection{Computing an Optimal Path Using $G(\calB)$}
\label{sec:compute}

In this section, we compute an optimal \st\ path using $G(\calB)$. Specifically, we show that an optimal \st\ path can be computed by applying the dragging operations as in \cite{ref:YangRe95} on the ordinary edges of $\pi_{G(\calB)}$ and applying a new kind of operations, called {\em through-corridor-path generating operations}, on corridor edges of $\pi_{G(\calB)}$, where $\pi_{G(\calB)}$ is a target path of $G(\calB)$ defined in Lemma~\ref{lem:target}.

The algorithmic scheme is similar to that in
Section~\ref{sec:corrected}. Recall that each ordinary edge of
$G(\calB)$ is either horizontal or vertical. When we advance the
searching process through an ordinary edge, we perform a dragging
operation in exactly the same way as described in  Section~\ref{sec:corrected}
(which is also the way in the YLW algorithm \cite{ref:YangRe95}). If
we are advancing along a corridor edge, then we apply a
through-corridor-path generating operation, which is introduced in the
following. To this end, we first review some results from Schuierer~\cite{ref:SchuiererAn96}.

Consider a closed corridor $\calC$. Let $d$ be a door of $\calC$ and
let $q$ be the backbone point on $d$. Recall that $q$ is an extension
of a window $w$ of the maximal histogram $H$ in $\calC$ with base $d$.

\begin{figure}[t]
\begin{minipage}[t]{0.49\linewidth}
\begin{center}
\includegraphics[totalheight=1.9in]{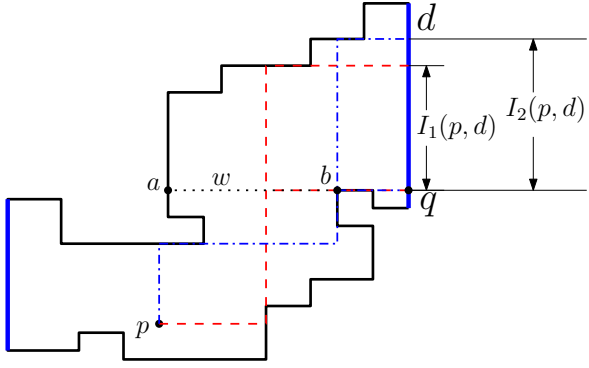}
\caption{\footnotesize Illustrating the two intervals $I_1(p,d)$ and $I_2(p,d)$, where $L_d(p,d)=3$ and $\overline{ab}$ is the window $w$. The two blue segments are doors of the corridor.
}
\label{fig:intervals}
\end{center}
\end{minipage}
\hspace*{0.05in}
\begin{minipage}[t]{0.49\linewidth}
\begin{center}
\includegraphics[totalheight=1.9in]{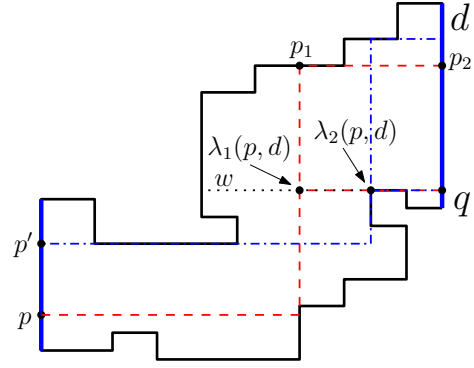}
\caption{\footnotesize Illustrating the two points $\lambda_1(p,d)$ and $\lambda_2(p,d)$ on the window $w$. $p'$ is also a backbone point.
}
\label{fig:newintervals}
\end{center}
\end{minipage}
\vspace*{-0.15in}
\end{figure}

Let $p$ be a point in $\calC$.
Following the terminology in \cite{ref:SchuiererAn96},
a rectilinear path from $p$ to a point on $d$ is called an {\em
admissible path} if the last link is orthogonal to $d$. A {\em
minimum-link admissible path} from $p$ to $d$ is an admissible path
from $p$ to any point of $d$ with the smallest number of links, and we
use $L_d(p,d)$ to denote the number of links in the path. Let $I_1(p,d)$
(resp., $I_2(p,d)$) denote the set of points on $d$ that can be
reached from $p$ with an admissible path of at most $L_d(p,d)$ (resp.,
$L_d(p,d)+1$) links (e.g., see Fig.~\ref{fig:intervals}). It is known
that each of $I_1(p,d)$ and $I_2(p,d)$ is an interval of $d$, and
$I_1(p,d)\subseteq I_2(p,d)$~\cite{ref:SchuiererAn96}. Further, if $p$
is not horizontally visible to $d$, then both intervals have $q$ as one of
their endpoints. By using the histogram partition $\calH(\calC,d)$,
Schuierer~\cite{ref:SchuiererAn96} built a data structure in $O(|\calC|)$ time such that given any point $p\in \calC$, the two intervals $I_1(p,d)$ and $I_2(p,d)$ can be determined in $O(\log |\calC|)$ time. With a little abuse of notation, we also use  $\calH(\calC,d)$ to refer to the above data structure.

Suppose $p$ is a point on the other door of $\calC$ than $d$ (so $p$
is not horizontally visible to $d$).
Then, $I_1(p,d)$ is uniquely determined by a point, denoted by $\lambda_1(p,d)$,
on the window $w$ in the following way~\cite{ref:SchuiererAn96} (e.g., see Fig.~\ref{fig:newintervals}). Recall that $d$ is vertical and thus $w$ is horizontal. Without loss of generality, assume that the
histogram $H$ is locally above $w$ and locally on the left of
$d$. We shoot a ray from $\lambda_1(p,d)$ upwards until a point
$p_1$ on the boundary of $\calC$ and then we project $p_1$
perpendicular to $d$ and let $p_2$ be the projection point. The point
$p_2$ is the other endpoint of the interval $I_1(p,d)$, i.e.,
$I_1(p,d)=\overline{qp_2}$. Note that $p_2$ is above $q$. Let
$I_1'(p,d)$ denote the segment $\overline{\lambda_1(p,d)q}$, which is
on the extension of the window $w$.
We can also understand the two intervals $I_1(p,d)$ and $I_1'(p,d)$
in the following way. There exists an admissible path of
$L_d(p,d)$ links from $p$ to $q$, denoted by $\pi_1(\calC,p,q)$, which is actually a {\em smallest} path from $p$ to $q$, and its last link is
$I_1'(p,d)$; for any point $q'\in I_1(p,d)$, by
dragging the last segment of $\pi_1(\calC,p,q)$ upwards until $q'$, we can obtain an
admissible path of $L_d(p,d)$ links from $p$ to $q'$. The data
structure $\calH(\calC,d)$ can also report $\lambda_1(p,d)$ in $O(\log
n)$ time and the path $\pi_1(\calC,p,q)$ can be output in additional time linear in
the link distance of the path.

The interval $I_2(p,d)$ is uniquely determined by a point
$\lambda_2(p,d)$ on the window $w$ in the similar way as above. Similarly, we define $I_2'(p,d)$ 
and the corresponding admissible path of $L_d(p,d)+1$ links from $p$ to $q$ whose last link is $I_2'(p,d)$, denoted by $\pi_2(\calC,p,q)$, which
is a {\em shortest} path (but not necessarily a smallest path) from $p$ to $q$ in $\calC$~\cite{ref:SchuiererAn96}.
Similarly, the data
structure $\calH(\calC,d)$ can also report $\lambda_2(p,d)$ in $O(\log
n)$ time and the path $\pi_2(\calC,p,q)$ can be output in additional time linear in
the link distance of the path.

In the following, we introduce our through-corridor-path generating operations for advancing paths along corridor edges in our algorithm for searching the graph $G(\calB)$.

Consider a corridor edge $e(q_1,q_2)$ connecting two vertices $q_1$ and $q_2$
of $G(\calB)$. Note that $q_1$ and $q_2$ are two backbone points that are on the two
doors $d_1$ and $d_2$ of a closed corridor $\calC$, respectively. Consider a path $\pi(s,q_1)$ from $s$ to $q_1$ maintained by our algorithm. Suppose we want to
advance $\pi(s,q_1)$ from $q_1$ to $q_2$ along the corridor edge $e(q_1,q_2)$. We perform the following through-corridor-path generating operation that will extend $\pi(s,q_1)$ from $q_1$ to $q_2$ to obtain a path $\pi(s,q_2)$ from $s$ to $q_2$.

Recall that $q_1$ is an extension of a window $w_1$ of the maximal
histogram $H_1$ in $\calC$ with base $d_1$. Hence, $w_1$ divides
$\calC$ into two sub-polygons that contain $d_1$ and $d_2$, respectively. Without loss of generality, we assume that the sub-polygon containing $d_2$ is locally above $w_1$. We also assume that $\calC$ is locally on the right of $d_1$ (e.g., see Fig.~\ref{fig:tcp}).

\begin{figure}[t]
\begin{minipage}[t]{0.49\linewidth}
\begin{center}
\includegraphics[totalheight=1.9in]{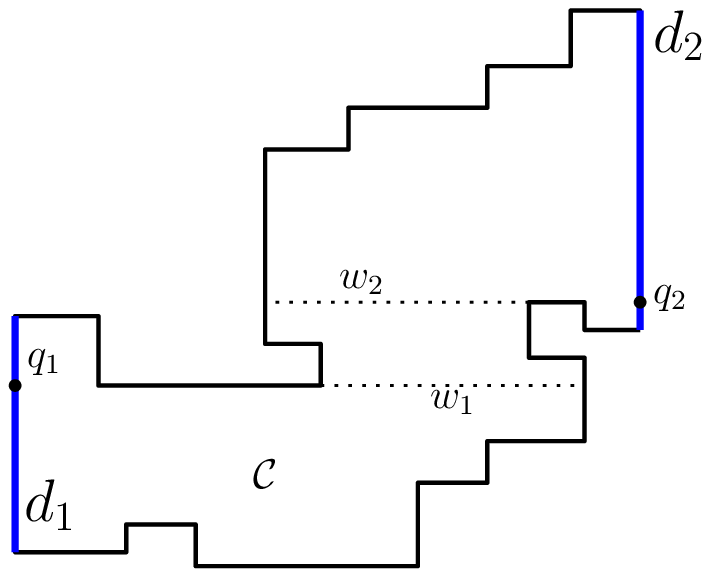}
\caption{\footnotesize Illustrating a closed corridor $\calC$ with two doors $d_1$ and $d_2$. $q_1$ and $q_2$ are the two backbone points, defined by the windows $w_1$ and $w_2$, respectively.
}
\label{fig:tcp}
\end{center}
\end{minipage}
\hspace*{0.05in}
\begin{minipage}[t]{0.49\linewidth}
\begin{center}
\includegraphics[totalheight=1.9in]{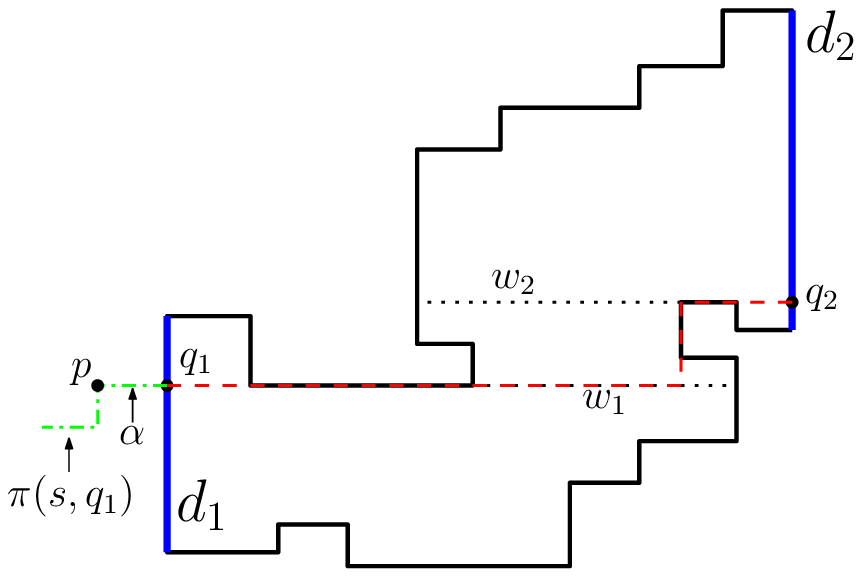}
\caption{\footnotesize Illustrating the through-corridor-path generating operation for the case where $\alpha$ is horizontal. The path $\pi_1(\calC,q_1,q_2)$ are shown with red dashed segments.
}
\label{fig:tcp1}
\end{center}
\end{minipage}
\vspace*{-0.15in}
\end{figure}

Let $\alpha$ be the last segment of $\pi(s,q_1)$ (i.e., the one incident to $q_1$) and let $p$ be the other endpoint of $\alpha$ than $q_1$. Suppose we have already built the data structure $\calH(\calC,d_2)$ for $\calC$ with respect to the door $d_2$.
Depending on whether $\alpha$ is horizontal or vertical, there are two cases.

\begin{enumerate}
\item
If $\alpha$ is horizontal (e.g., see Fig.~\ref{fig:tcp1}), then $p$ must be to the left of $q_1$ since
$\calC$ is locally on the right side of $d_1$. In this case, we use
$\calH(\calC,d_2)$ to determine the path $\pi_1(\calC,q_1,q_2)$ (whose last link is
$I_1'(q_1,d_2)$) and concatenate it with $\pi(s,q_1)$ to obtain $\pi(s,q_2)$. 
We also compute the number of links of $\pi(s,q_2)$ and its length, and store them at $q_2$. Note that $L_1(\pi(s,q_1))$ and $L_d(\pi(s,q_1))$ are already stored at $q_1$.

\item
If $\alpha$ is vertical, then depending on whether $p$ is above $q_1$, there are two subcases.

\begin{enumerate}
\item

If $p$ is above $q_1$, then we use the same approach as above to obtain $\pi(s,q_2)$. Note that in this case the path makes a turn at $q_1$ while there is no turn at $q_1$ in the above case.

\item
If $p$ is below $q_1$, then depending on whether $p$ is on $d_1$, there are further two subcases.

\begin{enumerate}
\item
If $p$ is not on $d_1$, then we use the same approach as above to obtain $\pi(s,q_2)$.

\item
\label{case:except}
If $p$ is on $d_1$, this is the trickiest case. 
We use $\calH(\calC,d_2)$ to determine the path $\pi_1(\calC,p,q_2)$ (whose last link is
$I_1'(p,d_2)$; e.g., see Fig.~\ref{fig:tcp2}). 
We then obtain $\pi(s,q_2)$ by concatenating $\pi_{1}(\calC,p,q_2)$ with the
subpath of $\pi(s,q_1)$ between $s$ and $p$ (thus $\overline{pq_1}$ is not in the resulting path $\pi(s,q_2)$ unless it is contained in $\pi_{1}(\calC,p,q_2)$).
\end{enumerate}
\end{enumerate}

\end{enumerate}

\begin{figure}[t]
\begin{minipage}[t]{\linewidth}
\begin{center}
\includegraphics[totalheight=1.9in]{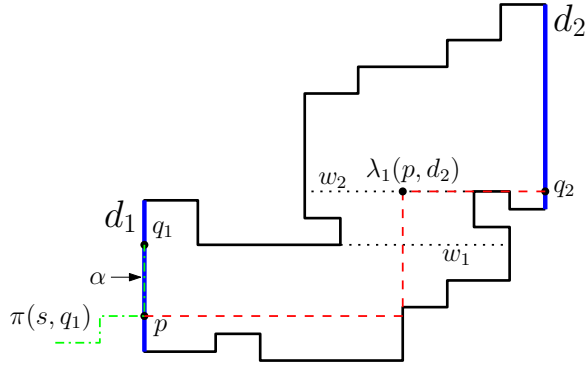}
\caption{\footnotesize Illustrating the through-corridor-path generating operation for the case where $\alpha$ is vertical and $p$ is on $d_1$ below $q_1$. The smallest path $\pi_{opt}(\calC,q_1,q_2)$ are shown with red dashed segments.
Note that $I_1'(p,d_2)=\overline{\lambda_1(p,d_2)q_2}$.
}
\label{fig:tcp2}
\end{center}
\end{minipage}
\vspace*{-0.15in}
\end{figure}

As a summary, to obtain $\pi(s,q_2)$, if Case~\ref{case:except}
happens, then we connect the subpath of $\pi(s,q_1)$ between $s$ and
$p$ with $\pi_{1}(\calC,p,q_2)$; otherwise, we connect $\pi(s,q_1)$
with $\pi_{1}(\calC,q_1,q_2)$. 
In either case, let $\pi'$
be the subpath of $\pi(s,q_2)$ contained in $\calC$.
With the histogram partition $\calH(\calC,d_2)$, we can obtain
$L_1(\pi')$ and $L_d(\pi')$ as well as the first and last links of
$\pi'$
in $O(\log n)$ time (the actual path $\pi'$ can be output in additional
$O(L_d(\pi'))$ time). Hence, we can compute $L_1(\pi(s,p_2))$ and
$L_d(\pi(s,p_2))$ as well as its last link in $O(\log n)$ time,
without explicitly computing the actual path $\pi'$.
Therefore, the through-corridor-path generating operation can be performed in $O(\log n)$ time.

As discussed before, our algorithm works in the same way as the one in
Section~\ref{sec:old} except that we apply through-corridor-path
generating operations on corridor edges of $G(\calB)$ instead of the
dragging operations. We can compute the histogram partitions
for all closed corridors as the preprocessing for performing the
through-corridor-path generating operations, and the total
preprocessing time is $O(n)$ since the size of all corridors is
$O(n)$. After the algorithm finishes, the path $\pi$ stored at $t$
with the smallest measure is an optimal \st\ path. Note that if $\pi$
has some subpaths in closed corridors, then $\pi$ is implicitly
maintained, we can output those subpaths in linear time by using the
histogram partitions on the closed corridors.
The following theorem gives some implementation details and analyzes the time complexities. The
algorithm correctness is proved in the next subsection.

\begin{theorem}\label{theo:singlepath}
We can compute a minimum-link shortest \st\ path
in $O(n+h\log^{3/2} h)$ time and $O(n+h\log h)$ space, and
compute a shortest minimum-link \st\ path or a minimum-cost \st\
path in $O(n+h^2\log^{3/2} h)$ time and $O(n+h^2\log h)$
space.
\end{theorem}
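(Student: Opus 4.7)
The plan is to establish the stated bounds by running the path-based Dijkstra of Section~\ref{sec:corrected} on the reduced graph $G(\calB)$ (and its ``$G^*$-variant'' for the $\log^{1/2}$ savings), using dragging operations on ordinary edges and the through-corridor-path generating operations of Section~\ref{sec:compute} on corridor edges. I would first account for preprocessing, then bound the number of paths simultaneously stored at each vertex (which is where the three problems diverge), and finally apply standard Dijkstra analysis with a Fibonacci heap.

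Preprocessing consists of (i) computing the corridor structure, the backbone set $\calB$, and the reduced domain $\calP_r$ in $O(n+h\log^{1+\epsilon}h)$ time, (ii) building all histogram partitions $\calH(\calC,d)$ in $O(n)$ time (their total size is $O(n)$), and (iii) constructing $G(\calB)$ in $O(n+h\log^2 h)$ time by the discussion of Section~\ref{sec:graphconstruct}. Replacing $G(\calB)$ with its $G^*$-analogue of $O(h\log^{1/2}h)$ vertices and $O(h\log^{3/2}h)$ edges, built in $O(n+h\log^{3/2}h)$ time by the same modification of Clarkson--Kapoor--Vaidya~\cite{ref:ClarksonRe88} used in \cite{ref:YangRe95,ref:ChenOn01}, gives the bound $O(n+h\log^{3/2}h)$ for preprocessing, and the space reduction of Chen et al.~\cite{ref:ChenOn01} brings the working space down to $O(n+h\log h)$.

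For the minimum-link shortest path problem I would use Rules $(a_1)$ and $(b)$; the same counting as in Section~\ref{sec:corrected} shows that each vertex of the graph stores at most sixteen paths, so the total number of path records on the $G^*$-analogue is $O(h\log^{1/2}h)$. Each ordinary-edge advance costs $O(1)$ via a dragging operation, and each corridor-edge advance costs $O(\log n)$ via a through-corridor-path operation; since there are only $O(h)$ corridor edges and a constant number of paths pass through each, the total corridor cost is $O(h\log n)=O(n+h\log h)$. The Dijkstra sweep therefore runs in $O(h\log^{3/2}h)$ time (Fibonacci heap gives $O(|E|+|V|\log|V|)=O(h\log^{3/2}h+h\log^{1/2}h\cdot\log h)$), yielding the claimed $O(n+h\log^{3/2}h)$ total, and the correctness arguments of Section~\ref{sec:correctold} carry over once the target path is replaced by Lemma~\ref{lem:target}.

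For the shortest minimum-link and minimum-cost problems I would use Rules $(a_3)$ and $(a_2)$ respectively, and observe that the worst-case bound of $\Theta(n)$ paths per vertex in Section~\ref{sec:corrected} was driven by the $O(n)$ distinct starting points of the last segment on the horizontal/vertical line through $q$; in $G(\calB)$ this becomes $O(h)$, since only backbone points project to cut-lines. Using the same $O(1)$-time array lookup trick to implement Rule $(a_2)/(a_3)$, the total number of path records is $O(h\cdot h\log^{1/2}h)=O(h^2\log^{1/2}h)$, the total advance work is $O(h\cdot h\log^{3/2}h)=O(h^2\log^{3/2}h)$, and priority-queue overhead adds only $O(h^2\log^{1/2}h\cdot \log h)=O(h^2\log^{3/2}h)$; the Chen et al.\ space reduction then gives $O(n+h^2\log h)$ space. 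I expect the main obstacle to be a careful verification that the correctness proof of Section~\ref{sec:correctold} survives the presence of corridor edges: in the argument comparing a competing stored path $\pi'(s,v_1)$ against the desired $\pi(s,v_1)$, one must check that the appended subpath through $v_1'$ can still be realized when $v_1'$ lies inside a closed corridor and the path into $v_1$ arrived via a through-corridor-path operation (in particular, the trickiest sub-case 2(b)ii where the last segment of $\pi(s,q_1)$ already lies on door $d_1$). This can be handled by appealing to the shortest-path/smallest-path guarantee of Observations~\ref{obser:open} and \ref{obser:closed}, which ensure that replacing the in-corridor subpath by a canonical path neither increases length nor, in the relevant sense, increases link distance.
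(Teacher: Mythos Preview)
Your proposal follows essentially the same route as the paper, and the overall accounting is right. There is, however, one genuine gap in the complexity argument: you assert that ``each ordinary-edge advance costs $O(1)$ via a dragging operation'' without saying why. A dragging operation requires a segment-dragging query (find the first obstacle vertex hit when sweeping an edge perpendicularly), which in general costs $O(\log n)$ per query after $O(n)$ preprocessing. To get $O(1)$ per advance one must precompute the hit vertex for every ordinary edge of $G^*(\calB)$ beforehand, as the FindGG$'$ procedure of \cite{ref:YangRe95} does. But that procedure, applied relative to the full obstacle set $\calP$, would depend on $n$ rather than $h$. The paper's proof supplies the missing observation: the optimal path produced by the algorithm (dragging on ordinary edges of a target path, through-corridor operations on corridor edges) lies entirely in $\calP' = \calP_r \cup (\text{closed corridors})$, and since closed corridors are touched only by the through-corridor operations, the segment-dragging queries need only be answered relative to the $O(h)$-vertex reduced domain $\calP_r$. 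That is what allows the hit-vertex precomputation to run in $O(h\log^{3/2}h)$ time. You mention $\calP_r$ for building the graph, but you need it again here, with this justification.

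A second, smaller point: the $O(n+h\log h)$ space bound is not delivered by the Chen et al.\ edge-reduction alone. That trick brings the number of explicitly stored edges of $G^*(\calB)$ down to $O(h\log h)$, but the hit-vertex tables produced by FindGG$'$ still occupy $O(h\log^{3/2}h)$ words (an $O(k^2)$ table per strip with $k=O(\log h)$). The paper replaces each such table by a $(k{+}1)$-element array equipped with a range-minima structure, so that the hit vertex of any sub-segment of consecutive Steiner points can still be retrieved in $O(1)$ time while the total space drops to $O(h\log h)$. You should either cite or supply this step.

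Your closing worry about correctness across corridor edges is exactly what Section~\ref{sec:correct} addresses and is not part of the theorem's complexity proof; you can simply invoke that section.
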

\begin{proof}
We will first show that computing a minimum-link shortest path can be done in
$O(n+h\log^{2} h)$ time and $O(n+h\log h)$ space
and computing other two types of optimal paths can be done in
$O(n+h^2\log^{2} h)$ time and $O(n+h^2\log h)$ space, and then we
will improve the algorithms by utilizing the techniques
in~\cite{ref:ChenOn01,ref:YangRe95} discussed in Section~\ref{sec:old}
as well as the reduced domain $\calP_r$ proposed in
Section~\ref{sec:graphconstruct}.

First of all, as discussed in Section~\ref{sec:graphconstruct},
building the graph $G(\calB)$ takes $O(n+h\log^2 h)$
time and $O(h\log h)$ space. The preprocessing on all closed corridors
take $O(n)$ time in total, so that each through-corridor-path
generating
operation can be performed in $O(\log n)$ time. As in
\cite{ref:YangRe95}, with $O(n)$ time preprocessing, each
dragging operation can be performed in $O(\log n)$ time.

For computing a minimum-link shortest \st\ path, since
$G(\calB)$ has $O(h\log h)$ ordinary edges and $O(h)$ corridor edges,
we only need to apply the dragging operations  $O(h\log h)$
times and apply the through-corridor-path generating operations $O(h)$ times.
Thus, the total time on performing these operations is $O(h\log h\log n)$.
After the algorithm finishes, outputting the optimal path $\pi$ needs
additional
$O(n+h\log n)$ time since $\pi$ travels through at most $O(h)$ closed corridor paths.
Therefore, the total time of the algorithm is $O(n+h\log h\log n)$.
Note that $n+h\log h\log n = O(n+h\log^2 h)$. The space complexity is $O(n+h\log h)$.

For computing other two types of optimal paths,
because each node of $G(\calB)$ may store $O(h)$ paths, the total number of
paths stored in the algorithm is $O(h^2\log h)$. Hence, in the entire
algorithm, the total number of the dragging operations is $O(h^2\log h)$
and the total number of through-corridor-path generating operations is $O(h^2)$. Thus, these operations together take $O(h^2\log h\log n)$ time, and the
algorithm runs in $O(n+h^2\log h\log n)$ time in total. Note that $n+h^2\log
h\log n=O(n+h^2\log^2 h)$. The space complexity is $O(n+h^2\log h)$.

In the sequel, we improve the above algorithms by using the reduced domain
$\calP_r$ proposed in Section~\ref{sec:graphconstruct} and the
techniques in~\cite{ref:ChenOn01,ref:YangRe95}.

We first discuss the problem of finding a minimum-link shortest path.
To reduce the running time, one key issue is to reduce the time on the dragging operations as there are $O(h\log
h)$ such operations in the algorithm. The bottleneck of each such operation
is to answer the following {\em segment dragging queries}: Given an
ordinary edge $e$ of $G(\calB)$ and a direction $\rho$ perpendicular
to $e$, the query asks for the first vertex of $\calV$ hit by $e$
(called the {\em hit vertex} in \cite{ref:YangRe95}) if
we drag $e$ along the direction $\rho$ (such a hit vertex is undefined if $e$ hits an interior of an edge of $\calP$). Note that $e$ is either horizontal or vertical. Each such query
can be answered in $O(\log n)$ time with $O(n)$ time
preprocessing~\cite{ref:ChazelleAn88}.  To reduce the
time, the idea in \cite{ref:YangRe95} is
to compute the results of the segment dragging queries on all edges of
the graph in the preprocessing, so that the hit vertex of each such query can be obtained in $O(1)$ time during the course of the algorithm. To adapt their
techniques, we show below that in our algorithm on $G(\calB)$ we only need to solve those
segment dragging queries with respect to the reduced domain $\calP_r$
instead of $\calP$.

Let $\calP'$ be the union of $\calP_r$ and all closed corridors. An
observation is that the optimal path obtained by our algorithm, i.e.,
by applying the dragging operations on the ordinary edges of a
target path $\pi_{G(\calB)}(s,t)$ and applying the
through-corridor-path generating operations on the corridor edges of
$\pi_{G(\calB)}(s,t)$, must be in $\calP'$. Indeed, this can be verified by checking that the
optimal path $\pi_5$ obtained in the proof of Lemma~\ref{lem:target}
is in $\calP'$. Further, the closed corridors only affect the results
of the through-corridor-path generating operations. Hence, to perform
segment dragging queries (which are only used in the dragging operations), it is sufficient to only consider the domain
$\calP_r$, i.e., finding the hit vertices in $\calP_r$.

With the above discussions, we adapt the techniques of
\cite{ref:YangRe95} in the following way. First, as discussed in
Section~\ref{sec:old}, we construct another path-preserving graph
$G^*(\calB)$ with respect to $\calB$ in the same way as $G^*(\calV)$
with respect to $\calV$, and $G^*(\calB)$  has
of $O(h\log^{1/2} h)$ vertices and $O(h\log^{3/2} h)$ edges. Next, we
insert the $O(h)$ corridor edges to $G^*(\calB)$. As $G(\calB)$, we
can compute all ordinary edges of $G^*(\calB)$ with respect to the
reduced domain $\calP_r$ in $O(h\log^{3/2}h)$ time and space by
using exactly the same algorithm of~\cite{ref:YangRe95} but on $\calB$ and
$\calP_r$. Further, we compute the hit vertices of all ordinary edges of $G^*(\calB)$
in the preprocessing by using the same algorithm in~\cite{ref:YangRe95}, but again
on $\calB$ and the reduced domain $\calP_r$, in $O(h\log^{3/2}h)$ time.

Since $G^*(\calB)$ has $O(h\log^{1/2}h)$ vertices and
$O(h\log^{3/2}h)$ edges, searching the graph using Dijkstra's
algorithm runs in $O(h\log^{3/2}h)$ time. Note that each
through-corridor-path operation still takes $O(\log n)$ time. But
since there are only $O(h)$ corridor edges in the graph, the total
time of the algorithm is bounded by $O(n+h\log^{3/2}h)$. The space
complexity becomes $O(n+h\log^{3/2}h)$ as $G^*(\calB)$ has
$O(h\log^{3/2}h)$ edges. Using the techniques of \cite{ref:ChenOn01}, we can further reduce number of edges of $G^*(\calB)$  to $O(h\log h)$ by representing some edges of the graph implicitly. Some details on maintaining the edges implicitly were provided in \cite{ref:ChenOn01}. In the following, we add more details on computing the hit vertices of all edges of $G^*(\calB)$.
The algorithm FindGG' in~\cite{ref:YangRe95} computes the hit vertices of all ordinary edges of $G^*(\calB)$ in $O(h\log^{3/2}h)$ time and $O(h\log^{3/2}h)$ space.
We modify it in the following way so that the space can be reduced to  $O(h\log h)$ while keeping the same running time asymptotically (the idea should also be used in our $O(n\log^{3/2}n)$ time and $O(n\log n)$ space algorithm for computing the minimum-link shortest paths using the graph $G^*(\calV)$ in Section~\ref{sec:corrected}).

Consider a cut-line $L$ and a {\em horizontal strip} (i.e., a plane region bounded by two horizontal lines) as in the description of FindGG'~\cite{ref:YangRe95}. There is a set $S$ of $k=O(\log h)$ vertices of $\calP_r$ that are horizontally visible to $L$ in the strip. Each vertex of $S$ defines a Steiner point on $L$, so there are $k$ Steiner points on $L$ in the strip. We sort these Steiner points on $L$. For each segment of $L$ divided by these Steiner points in the strip, the algorithm FindGG' computes its hit vertices on its both left and right sides. In the following, we only discuss the right hit vertices. All these hit vertices in all cut-lines and all strips can be computed in $O(h\log^{3/2}h)$ time and $O(h\log h)$ space. One issue is that for every pair of Steiner points (not necessarily adjacent) $a$ and $b$ defined by $S$ on $L$, we need to compute the (right) hit vertex of $\overline{ab}$. To this end, FindGG' uses a table of size $O(k^2)$ to maintain these hit vertices, so that given $a$ and $b$, the hit vertex of $\overline{ab}$ can be obtained in $O(1)$ time. But this table makes the total space of the algorithm become $O(h\log^{3/2}h)$. To reduce the space while still keeping the $O(1)$ query time, we replace the table by an array of size $k+1$ and construct a range-minima data structure on the array~\cite{ref:BenderTh00,ref:HarelFa84}.
Specifically, let $l_i$ be the $i$-th lowest segment of $L$ divided by the Steiner points of $S$. Thus, $L$ has $k+1$ such segments in the strip.
Let $A[1\cdots k+1]$ be an array of $k+1$ elements such that each $A[i]$ represents the $x$-coordinate of the hit vertex of $l_i$ (we also associate the hit vertex with $A[i]$). We build a range-minima data structure on $A$ in $O(k)$ time such that given any $i$ and $j$ with $1\leq i\leq j\leq n$, the minimum value (and its index in $A$) in the subarray $A[i\cdots j]$ can be found in $O(1)$ time~\cite{ref:BenderTh00,ref:HarelFa84}. Given any two Steiner points $a$ and $b$ on $L$ defined by $S$, suppose $a$ is the lower endpoint of $l_i$ and $b$ is the upper endpoint of $l_j$, then the hit vertex of $\overline{ab}$ is exactly the one associated with the minimum value in the subarray $A[i\ldots j]$, which can be found in $O(1)$ time by the range-minima data structure. In this way, we only need $O(k)$ space for each strip. Thus, the total space of the algorithm becomes $O(h\log h)$. The total time of the algorithm is still $O(h\log^{3/2}h)$. Further, given any ordinary edge of $G^*(\calB)$, its hit vertex can still be found in $O(1)$ time.

Therefore, we can compute a minimum-link shortest path
in $O(n+h\log^{3/2}h)$ time and $O(n+h\log h)$ space.

For computing the other two types of optimal paths, we can use the similar
idea as above. The running time is $O(n+h^2\log^{1/2}h)$ and the space
is $O(n+h^2\log h)$. We omit the details.
\end{proof}


\subsection{The Algorithm Correctness}
\label{sec:correct}

In this section, we prove the correctness of our algorithm. As will be seen later, the main effort is to show that our through-corridor-path generating operations are correct.

Let $\pi$ be an optimal \st\ path and let $\pi_{G(\calB)}$
be the corresponding target path obtained in the constructive proof of  Lemma~\ref{lem:target}.
If we can prove the following {\em main claim}: an optimal \st\ path can be
obtained by applying the segment dragging operations and through-corridor-path
generating operations on the edges of $\pi_{G(\calB)}$ in the
order from $s$ to $t$, then by the proof techniques of
Section~\ref{sec:correctold}, we can also show that our algorithms can
correctly compute an optimal \st\ path. Hence, in the following, we
focus on proving the above main claim.

We assume that $\pi$ travels through at least one closed corridor since
otherwise the analysis would be similar (and simpler because we would not need
to consider through-corridor-path generating operations). Along the path $\pi$
from $s$ to $t$, let $\calC$ be the first closed corridor traveled
through by $\pi$. Let $d_1$ be the first door of $\calC$ intersected
by $\pi$ and let $d_2$ be the other door. Let $\pi(a_1,a_2)$ denote
the subpath of $\pi$ in $\calC$ with $a_1\in d_1$ and $a_2\in d_2$
such that the edge of $\pi(a_1,a_2)$ incident to $a_1$ is
perpendicular to $d_1$ and the edge of $\pi(a_1,a_2)$ incident to
$a_2$ is perpendicular to $d_2$.  Refer to Fig.~\ref{fig:cortcp} for an example. Note that such a subpath must exist as $\pi$ travels through $\calC$.
Let $\overline{a_1a_1'}$ and $\overline{a_2a_2'}$ be the first and
last edges of $\pi(a_1,a_2)$, respectively. Let $\overline{aa_1}$ be
the last edge of $\pi(s,a_1)$.

\begin{figure}[t]
\begin{minipage}[t]{0.49\linewidth}
\begin{center}
\includegraphics[totalheight=1.9in]{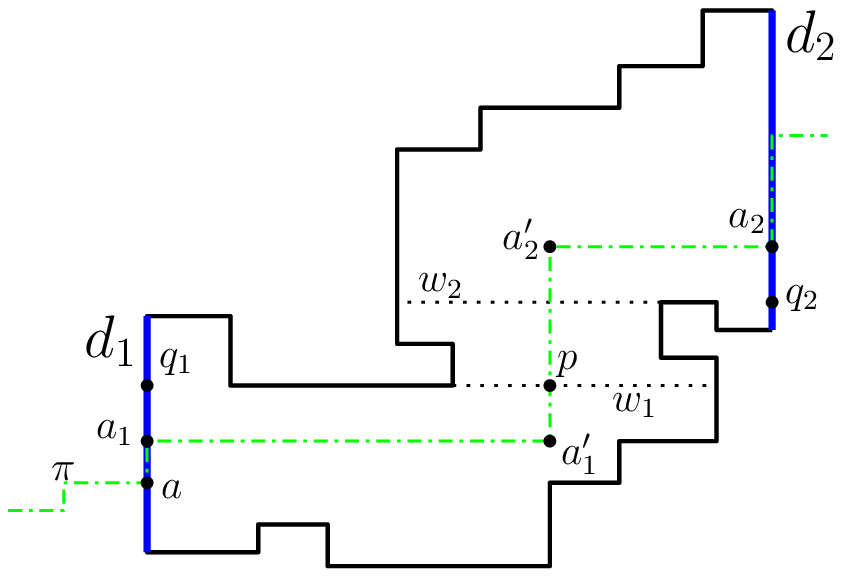}
\caption{\footnotesize Illustrating the definitions of $a$, $a_1$, $a_1'$, $a_2$, and $a_2'$. The (green) dashed dotted path is $\pi$.
}
\label{fig:cortcp}
\end{center}
\end{minipage}
\hspace*{0.05in}
\begin{minipage}[t]{0.49\linewidth}
\begin{center}
\includegraphics[totalheight=1.9in]{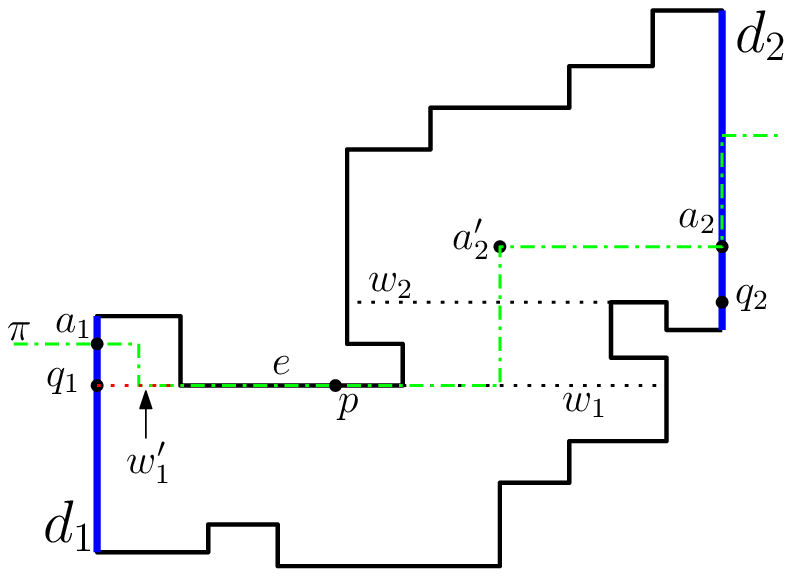}
\caption{\footnotesize Illustrating the case where $a_1$ is above $q_1$.
}
\label{fig:cortcp1}
\end{center}
\end{minipage}
\vspace*{-0.15in}
\end{figure}

Let $q_1$ and $q_2$ be the backbone points on $d_1$ and $d_2$, respectively (e.g., see Fig.~\ref{fig:cortcp}). Recall that $q_1$ is an extension of a window $w_1$ of the maximal histogram $H_1$ in $\calC$ with base $d_1$. Hence, $w_1$ divides
$\calC$ into two sub-polygons that contain $d_1$ and $d_2$, respectively. Without loss of generality, we assume that the sub-polygon containing $d_2$ is locally above $w_1$. We also assume that $\calC$ is locally on the right of $d_1$.

\subsubsection{Obtaining a Special Optimal \st\ Path $\pi'$}

In the following, we obtain another optimal \st\ path $\pi'$ that is homotopic to $\pi$, and $\pi'$ has some special properties that will facilitate our analysis later. First of all, if $\pi$ contains $q_1$, then we let $\pi'=\pi$. Below we assume that $\pi$ does not contain $q_1$. Depending on whether $a_1$ is above or below $q_1$, there are two cases.

\begin{enumerate}
\item
If $a_1$ is above $q_1$ (e.g., see Fig.~\ref{fig:cortcp1}), then $\pi(a_1,a_2)$ must intersect $w_1'$ and $w_1$, where $w_1'$ is the window that contains $q_1$ and is on the extension of $w_1$. This further implies that $\pi(a_1,a_2)$ must contain the edge $e$ of $\calC$ that is between $w_1'$ and $w_1$ (if not, we can shorten the path by making it contain $e$). Let $p$ be any point in the interior of $e$. We replace the subpath $\pi(a_1,p)$ by an L-shaped path $\overline{a_1q_1}\cup \overline{q_1p}$ to obtain a new \st\ path $\pi'$. Clearly, $\pi'$ is homotopic to $\pi$. Below we argue that $\pi'$ is also an optimal \st\ path by showing that $L_1(\pi')\leq L_1(\pi)$ and $L_d(\pi')\leq L_d(\pi)$.

Indeed, since $\overline{a_1q_1}\cup \overline{q_1p}$ is L-shaped, we have $L_1(\pi')\leq L_1(\pi)$. Next, we show that $L_d(\pi')\leq L_d(\pi)$. By the definition of $a_1$, the segment $\overline{a_1a_1'}$ goes rightwards into the interior of $\calC$ from $a_1$. Thus, the subpath $\pi(a_1,p)$ contains at least two turns. The L-shaped path $\overline{a_1q_1}\cup \overline{q_1p}$ contains one turn. However, adding it to $\pi'$ may introduce another turn at $a_1$. Note that there is no additional turn at $p$. To see this, the last segments of both $\overline{a_1q_1}\cup \overline{q_1p}$ and $\pi(a_1,p)$ are horizontal since $p$ is an interior point of $e$ and $e$ is an edge of both $\pi$ and $\pi'$. This implies that $L_d(\pi')\leq L_d(\pi)$.

This proves that $\pi'$ is an optimal \st\ path.

\item
If $a_1$ is below $q_1$, then depending on whether $\overline{aa_1}$ is vertical, there are two subcases.
\begin{enumerate}
\item
Suppose $\overline{aa_1}$ is vertical  (e.g., see Fig.~\ref{fig:cortcp}). By the definition of $w_1$, $\pi$ must intersect a point $p$ on the window $w_1$. We replace $\pi(a_1,p)$ by an L-shaped path $\overline{a_1q_1}\cup \overline{q_1p}$ to obtain a new \st\ path $\pi'$. Clearly, $\pi'$ is homotopic to $\pi$. We argue that $\pi'$ is also an optimal \st\ path by showing that $L_1(\pi')\leq L_1(\pi)$ and $L_d(\pi')\leq L_d(\pi)$. Similar to the above case, $L_1(\pi')\leq L_1(\pi)$ holds. Below, we show that $L_d(\pi')\leq L_d(\pi)$.

Because $a_1$ is strictly below $q_1$ and $\overline{a_1a_1'}$ goes rightwards into the interior of $\calC$, the subpath $\pi(a_1,p)$ contains at least two turns (including the one at $a_1$) in $\pi$. On the other hand, the L-shaped path $\overline{a_1q_1}\cup \overline{q_1p}$ introduces at most two turns to $\pi'$: one at $q_1$ and the other possibly at $p$ (note that there is no turn at $a_1$). This implies that $L_d(\pi')\leq L_d(\pi)$.

\item
Suppose $\overline{aa_1}$ is horizontal   (e.g., see Fig.~\ref{fig:cortcp2}). In this case $a$ must be to the left of $a_1$ since $a$ is outside the corridor $\calC$. Hence,
$\overline{aa_1'}$ is the segment of $\pi$ consisting of both
$\overline{aa_1}$ and $\overline{a_1a_1'}$. Let $\overline{a_1'a_1''}$ be the vertical segment incident to $a_1'$. One can verify that $a_1''$ must be above $a_1'$ since otherwise $\pi$ would not be an optimal \st\ path. Again, $\pi$ must intersect the window $w_1$ at a point $p$.

\begin{figure}[t]
\begin{minipage}[t]{0.49\linewidth}
\begin{center}
\includegraphics[totalheight=1.9in]{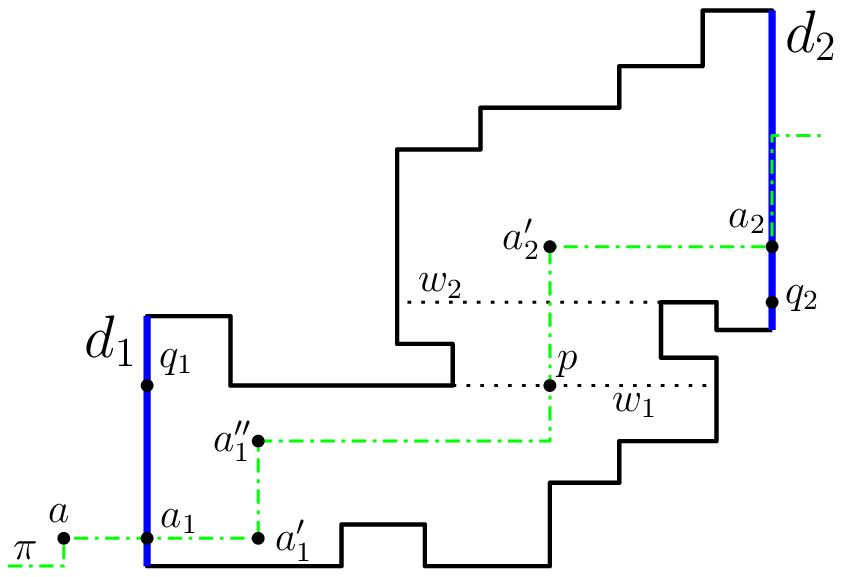}
\caption{\footnotesize Illustrating the case where $\overline{aa_1}$ is horizontal and $\overline{a_1'a_1''}$ does not intersect $w_1$.
}
\label{fig:cortcp2}
\end{center}
\end{minipage}
\hspace*{0.05in}
\begin{minipage}[t]{0.49\linewidth}
\begin{center}
\includegraphics[totalheight=1.9in]{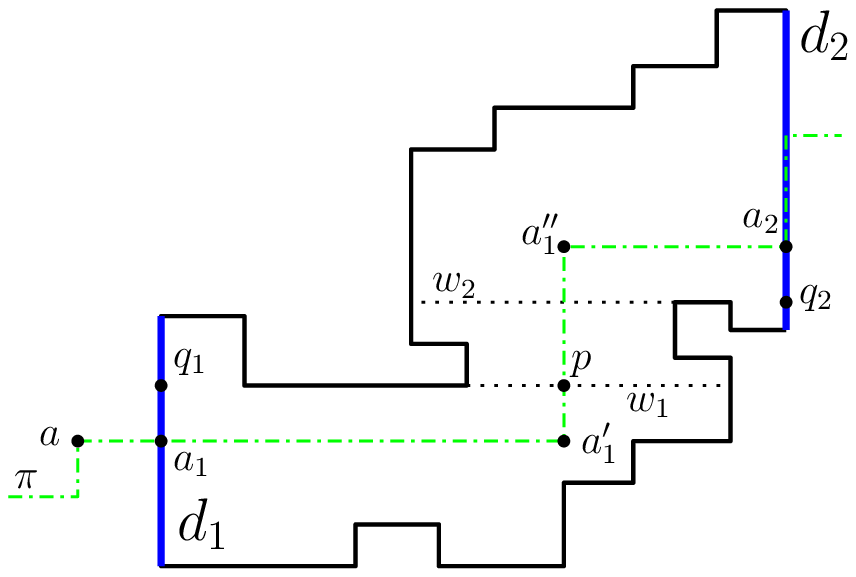}
\caption{\footnotesize Illustrating the case where $\overline{aa_1}$ is horizontal and $\overline{a_1'a_1''}$ intersects $w_1$ (at $p$).
}
\label{fig:tcp3}
\end{center}
\end{minipage}
\vspace*{-0.15in}
\end{figure}

\begin{enumerate}
\item

If $\overline{a_1'a_1''}$ does not intersect $w_1$ (e.g., see Fig.~\ref{fig:cortcp2}),
we replace the subpath $\pi(a_1,p)$ by the L-shaped path $\overline{a_1q_1}\cup \overline{q_1p}$ to obtain another \st\ path $\pi'$. By similar analysis as above, one can verify that $L_1(\pi')\leq L_1(\pi)$ and $L_d(\pi')\leq L_d(\pi)$. We omit the details. Thus, $\pi'$ is also an optimal path.

%
%

\item
If $\overline{a_1'a_1''}$ intersects $w_1$, we let $\pi'=\pi$.
%
%
\end{enumerate}
\end{enumerate}
\end{enumerate}

As a summary, the above obtains an optimal \st\ path $\pi'$, and there are two cases for $\pi'$: either $\pi'$ contains $q_1$, or $a_1$ is below $q_1$ and $\overline{a_1'a_1''}$ intersects the window $w_1$.


Let $b$ refer to the point $q_1$ if $\pi'$ contains $q_1$ and refer to $a_1$ otherwise. Let $\overline{bb'}$ denote the last segment of the subpath $\pi'(s,b)$.


\subsubsection{Obtaining another Optimal \st\ Path $\pi_5$}
In the sequel, we obtain another optimal \st\ path $\pi_5$ by modifying the subpath $\pi'(s,b)$, so that $\pi_5$ has certain special properties that will facilitate our analysis later.

Let $\pi_1=\pi'(s,b)$. The following analysis follows the similar scheme as in the proof of Lemma~\ref{lem:target}.

We assume $\pi_1$ travels through at least one open corridor since otherwise the analysis would be similar (but easier).
Suppose $\pi_1$ travels through an open corridor $\calC$. Hence, $\pi_1$ crosses both doors of $\calC$. Let $p_1$ and $p_2$ be the points on the two doors of $\calC$, respectively, such that the segment incident to $p_1$ and the segment incident to $p_2$ in the subpath $\pi_1(p_1,p_2)$ are both horizontal (and thus perpendicular to the doors). Since $\pi_1$ travels through $\calC$, such two points $p_1$ and $p_2$ must exist. We replace $\pi_1(p_1,p_2)$ by the canonical path $\pi(\calC,p_1,p_2)$ to obtain another path $\pi_1'$, and one can verify that $L_1(\pi_1')\leq L_1(\pi')$ and $L_d(\pi_1')\leq L_d(\pi')$, and thus $\pi_1'$ is still an optimal \st\ path. Note that $\pi_1'$ still contains the point $b$ because $b$ is not in any open corridors.

We do the above for all open corridors traveled through by $\pi_1$.  Let $\pi_2$ denote the new optimal \st\ path. Note that for each horizontal segment of $\pi_2(s,b)$, if it intersects the interior of an open corridor, then it must intersect both doors of the corridor.

Suppose we traverse $\pi_2(s,b)$ from $s$ to $b$. If $\pi_2(s,b)$ intersects a junction rectangle $R$, then let $p_1$ and $p_2$ be the first and last points $\pi_2(s,b)$ intersecting $R$, respectively. We obtain another \st\ path $\pi_2'$ by replacing $\pi_2(p_1,p_2)$ with an L-shaped path connecting $p_1$ and $p_2$ such that $L_d(\pi_2')=L_d(\pi_2)$. Note that such an  L-shaped path must exist. Clearly, $L_1(\pi_2')=L_1(\pi_2)$. Hence, $\pi_2'$ is also an optimal \st\ path.

We do the above for all junction rectangles intersected by $\pi_2(s,b)$, and let $\pi_3$ be the resulting path. Note that each vertical segment of $\pi_3$ must be on a vertical side of a junction rectangle unless it is incident to $s$. Also note that $\pi_3$ still contains $b$.

As shown in the proof of Lemma~\ref{lem:target}, any subpath of $\pi_3(s,b)$ partitioned by the points of $\calB$ on $\pi_3(s,b)$ must be a staircase path. Consider any such a staircase subpath $\pi_3(b_1,b_2)$, where $b_1$ and $b_2$ are the two endpoints. We obtain a pushed path in the same way as in the proof of Lemma~\ref{lem:target}. We do this for all subpaths of $\pi_3(s,b)$ and let $\pi_5$ be the resulting path (we use $\pi_5$ instead of $\pi_4$ to be consistent with the proof of Lemma~\ref{lem:target}), which is still an optimal \st\ path. Again, as shown in the proof of Lemma~\ref{lem:target}, for any segment of $\pi_5(s,b)$, it must contain a point of $\calB$ or it is incident to $b$. Hence, each subpath of $\pi_5(s,b)$ partitioned by the points of $\calB$ must be an L-shaped path. Consider any such subpath $\pi_5(b_1,b_2)$  of $\pi_5(s,b)$.
In the following, we argue the correctness of our algorithm on the subpath $\pi_5(b_1,b_2)$.

\subsubsection{Analyzing the Subpath $\pi_5(b_1,b_2)$}

We first discuss the case where $b_2\neq b$, i.e., it is not the last subpath of $\pi_5(s,b)$.

Without loss of generality, we assume that $b_2$ is to the northeast of $b_1$ and the segment of $\pi_5(b_1,b_2)$ incident to $b_1$ is vertical. As shown in the proof of Lemma~\ref{lem:target}, $G(\calB)$ has a staircase path $\pi_{G(\calB)}(b_1,b_2)$ connecting $b_1$ to $b_2$ and the region between the two paths $\pi_5(b_1,b_2)$ and $\pi_{G(\calB)}(b_1,b_2)$ is empty (because the two paths are homotopic). Hence, when the algorithm processes the horizontal edges of $\pi_{G(\calB)}(b_1,b_2)$, they can be dragged upwards to form $\pi_5(b_1,b_2)$ without hitting any vertices of $\calP$ (similar to the example in Fig.~\ref{fig:Lshaped}).

Next, we discuss the case where $b_2=b$. Recall that $b$ may be either $q_1$ or $a_1$, and if $b=a_1$, then $b$ is on $d_1$ below $q_1$ and $\overline{a_1'a_1''}$ intersects the window $w_1$. In the sequel, we first show that by the dragging operations,  our algorithm will obtain a particular path, denoted by $\pi(s,q_1)$, and later we will use $\pi(s,q_1)$ to argue the correctness of our through-corridor-path generating operations.

If $b=q_1$, then $\pi_5$ contains $q_1$. By the same argument as above and using the  dragging operations, we can obtain $\pi_5(b_1,b_2)$, and thus obtain $\pi_5(s,q_1)$ as well. In this case, we use $\pi(s,q_1)$ to refer to  $\pi_5(s,q_1)$.

If $b=a_1$, then let $\overline{b_2'b_2}$ be the segment of $\pi_5(b_1,b_2)$ incident to $b_2$ (e.g., see Fig.~\ref{fig:vertical}). Depending on whether $\overline{b_2'b_2}$ is horizontal or vertical, there are two cases.

\begin{figure}[t]
\begin{minipage}[t]{0.49\linewidth}
\begin{center}
\includegraphics[totalheight=1.9in]{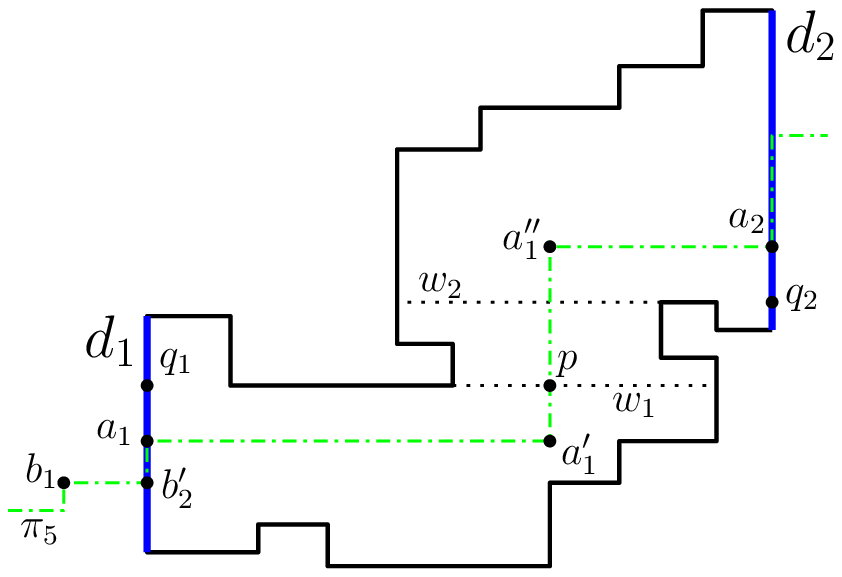}
\caption{\footnotesize Illustrating the case where $\overline{b_2'b_2}$ is vertical. Note that $b_2=b=a_1$. The green dashed dotted path is $\pi_5$.
}
\label{fig:vertical}
\end{center}
\end{minipage}
\hspace*{0.05in}
\begin{minipage}[t]{0.49\linewidth}
\begin{center}
\includegraphics[totalheight=1.9in]{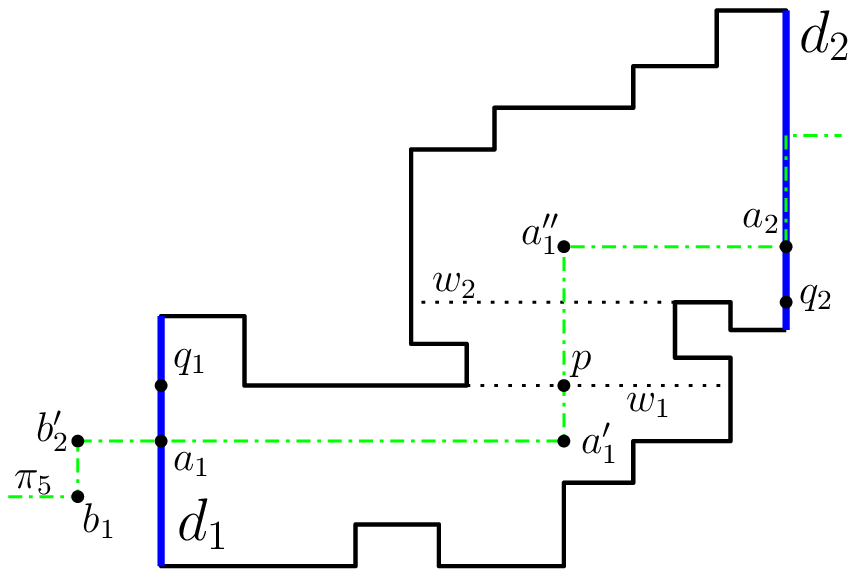}
\caption{\footnotesize Illustrating the case where $\overline{b_2'b_2}$ is horizontal. Note that $b_2=b=a_1$.
}
\label{fig:horizontal}
\end{center}
\end{minipage}
\vspace*{-0.15in}
\end{figure}

If $\overline{b_2'b_2}$ is vertical (e.g., see Fig.~\ref{fig:vertical}), then $\pi^*(b_1,q_1)$ is also L-shaped, where $\pi^*(b_1,q_1)$ is defined to be the concatenation of $\pi_5(b_1,b_2)$ and $\overline{b_2q_1}$. Hence, $G(\calB)$ also has a staircase path $\pi_{G(\calB)}(b_1,q_1)$ connecting $b_1$ to $q_1$. As argued above, by performing the dragging operations on the edges of $\pi_{G(\calB)}(b_1,q_1)$, we can obtain $\pi^*(b_1,q_1)$ and thus obtain a path $\pi(s,q_1)$ from $s$ to $q_1$ that is a concatenation of $\pi_5(s,b_2)$ and $\overline{b_2q_1}$. In this case, we use $\pi(s,q_1)$ to refer to the concatenation of $\pi_5(s,b_2)$ and $\overline{b_2q_1}$.

If $\overline{b_2'b_2}$ is horizontal (e.g., see Fig.~\ref{fig:horizontal}), then $\overline{b_2'a_1'}=\overline{b_2'a_1}\cup \overline{a_1a_1'}$ is a single segment of $\pi_5$. We push $\overline{b_2'a_1'}$ upwards until we hit an obstacle vertex $v$. With a little abuse of notation, we still use $\pi_5$ to denote the new path (which is still an optimal \st\ path) after the push operation, and use $a_1$, $a_1'$, $b_2'$, and $b_2$ to refer to the corresponding new points in the new path. Recall that $\calC$ is the closed corridor that has $d_1$ as a door.
Depending on whether $v$ is in the corridor $\calC$ or not, there are two cases.

\begin{enumerate}
\item
If $v$ is in $\calC$, then since $\overline{a_1'a_1''}$ intersects $w_1$, $v$ must be on the extension of the window $w_1$ and $b_2$ ($=a_1$) is at $q_1$ now. By using the same argument as before, we can obtain  $\pi_5(s,q_1)$ by applying the dragging operations. In this case, $\pi(s,q_1)$ refers to $\pi_5(s,q_1)$.

\item
If $v$ is not in $\calC$, then we claim that $v$ must be on the upper bridge of some open corridor. Indeed, recall that the vertical segment $\overline{b_1b_2'}$ must be on the right side of a junction rectangle. Note that $b_2$ is on the right side of a different junction rectangle. Hence, $\overline{b_2'b_2}$ must travel through some open corridors and $v$ must be at the lowest upper bridge of one of such open corridors. Let $\calC'$ denote the open corridor whose upper bridge contains $v$. Since $\overline{b_2'b_2}$ travels through $\calC'$, $\overline{b_2'b_2}$ contains a backbone point $p$ that is on a door of $\calC'$ (in fact it contains two such backbone points, but one is enough for our argument). The point $p$ breaks the path $\pi_5(b_1,b_2)$ into two subpaths $\pi_5(b_1,p)$ and $\pi_5(p,b_2)=\overline{pb_2}$. Note that $\pi_5(b_1,p)$ is an L-shaped path and $\pi_5(p,b_2)$ is a horizontal segment.

Let $\pi^*(p,q_1)=\overline{pb_2}\cup \overline{b_2q_1}$, which is an L-shaped path. Hence, by using the dragging operations, our algorithm will obtain the path $\pi_5(b_1,p)$ and the path $\pi^*(p,q_1)$, and thus obtain the path $\pi_5(s,b_2)\cup\overline{b_2q_1}$. In this case, we use $\pi(s,q_1)$ to refer to the path $\pi_5(s,b_2)\cup\overline{b_2q_1}$.
\end{enumerate}

As a summary, the above shows that after our algorithm processes the edges of the target path from $s$ to $q_1$ by applying the dragging operations, a path $\pi(s,q_1)$ will be computed at $q_1$ with the following property: if $\pi_5$ contains $q_1$, then $\pi(s,q_1)=\pi_5(s,q_1)$; otherwise, $\pi(s,q_1)$ is $\pi_5(s,a_1)\cup \overline{a_1q_1}$, $a_1$ is below $q_1$ on $d_1$, and $\overline{a_1'a_1''}$
intersects the window $w_1$.

In the following, we argue the correctness of our algorithm on processing the corridor edge $e(q_1,q_2)$ by applying the through-corridor-path generating operation.
Depending on whether $q_1$ is in $\pi_5$, there two main cases as discussed above. We will show that in either case, after the operation, we will obtain a path $\pi(s,q_2)$ with the following property: if we apply the dragging operation on the last edge of $\pi(s,q_2)$ and $\overline{q_2a_2}$ (which is a path of $G(\calB)$), then we can obtain a path $\pi(s,a_2)$ from $s$ to $a_2$ such that the concatenation of $\pi(s,a_2)$ and $\pi_5(a_2,t)$ is an optimal \st\ path, which implies that storing $\pi(s,q_2)$ at $q_2$ is sufficient for obtaining an optimal \st\ path (this further implies the correctness of our through-corridor-path generating operation).

\subsubsection{The First Main Case: $q_1\in \pi_5$}

We begin with the first case where $\pi_5$ contains $q_1$. In
this case, $\pi(s,q_1)=\pi_5(s,q_1)$.
Let $\alpha=\overline{pq_1}$ be the last segment of $\pi_5(s,q_1)$.
Depending on whether $\alpha$ is horizontal or vertical, there are two
cases.

\paragraph{The horizontal case.}
If $\alpha$ is horizontal (e.g., see Fig.~\ref{fig:tcp1}), then $p$ is to the left of $q_1$. In this case, according to our through-corridor-path generating operation,
$\pi(s,q_2)=\pi(s,q_1)\cup \pi_1(\calC,q_1,q_2)$. By applying a dragging operation on the last segment of $\pi(s,q_2)$, we obtain a path $\pi(s,a_2)$ from $s$ to $a_2$, as follows.

Recall the intervals $I_1(q_1,d_2)$ and $I_2(q_1,d_2)$ defined in Section~\ref{sec:compute} (e.g., see Fig.~\ref{fig:intequal}). By the definition of $q_1$, $I_1(q_1,d_2)=I_2(q_1,d_2)$~\cite{ref:SchuiererAn96}. Also recall that
the last segment of $\pi_1(\calC,q_1,q_2)$ is the interval $I_1'(q_1,d_2)$
on the window $w_2$.

If $a_2$ is in the interval $I_1(q_1,d_2)$, then we simply push the last segment of $\pi(s,q_2)$ upwards until $a_2$. Otherwise, we let
$\pi(s,a_2)=\pi(s,q_1)\cup \overline{q_2a_2}$ (i.e., add a vertical segment $\overline{q_2a_2}$ to connect $\pi(s,q_2)$ with $a_2$). Note that the above way of constructing $\pi(s,a_2)$ in either case is consistent with applying the dragging operation on the last segment of $\pi(s,q_2)$ and $\overline{q_2a_2}$. In the latter case, for the purpose of the argument, we conceptually add a horizontal segment of zero length to the end of $\pi(s,a_2)$ to connect $a_2$
such that the last segment of $\pi(s,a_2)$ is also horizontal, and this makes it  consistent with the path $\pi_5(s,a_2)$, whose last segment is also horizontal. Our goal is to show that $\pi(s,a_2)\cup \pi_5(a_2,t)$ is also an optimal \st\
path. To this end, in either case, due to that the last segments of both
$\pi(s,a_2)$ and $\pi_5(s,a_2)$ are horizontal, it is sufficient to prove $L_1(\pi(s,a_2))\leq
L_1(\pi_5(s,a_2))$ and $L_d(\pi(s,a_2))\leq L_d(\pi_5(s,a_2))$.

\begin{figure}[t]
\begin{minipage}[t]{\linewidth}
\begin{center}
\includegraphics[totalheight=1.9in]{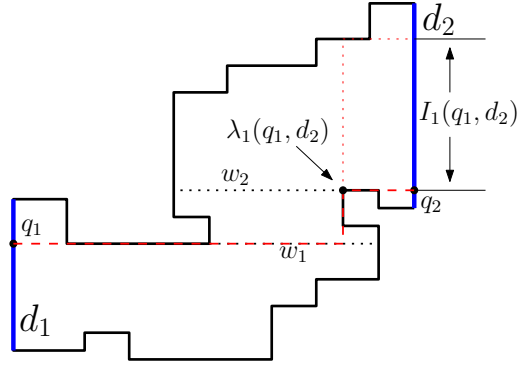}
\caption{\footnotesize Illustrating the intervals $I_1(q_1,d_2)$ and $I_2(q_1,d_2)$, which are equal. The segment $\overline{\lambda_1(q_1,d_2)q_2}$ on the extension of $w_2$ is the interval $I_1'(q_1,d_2)$
}
\label{fig:intequal}
\end{center}
\end{minipage}
\vspace*{-0.15in}
\end{figure}

First of all, by the definition of $\pi_1(\calC,q_1,q_2)$, the subpath of
$\pi(s,a_2)$ between $q_1$ and $a_2$ is a shortest path from $q_1$ to
$a_2$ in $\calC$~\cite{ref:SchuiererAn96}. Thus, $L_1(\pi(s,a_2))\leq L_1(\pi_5(s,a_2))$. In the following,
we prove $L_d(\pi(s,a_2))\leq L_d(\pi_5(s,a_2))$. Let $\pi(q_1,a_2)$ be the subpath of $\pi(s,a_2)$ between $q_1$ and $a_2$. Depending on whether $a_2$ is in $I_1(q_1,d_2)$, there are two subcases.

\begin{enumerate}
\item
If $a_2$ is in $I_1(q_1,d_2)$, then $L_d(\pi(s,a_2))=L_d(\pi(s,q_1))+L_d(\pi(q_1,a_2))-1$, where the minus one is because the path $\pi(s,a_2)$ does not make a turn at $q_1$ (since $\alpha$ is horizontal). As $a_2\in I_1(q_1,d_2)$, $L_d(\pi(q_1,a_2))=L_d(q_1,d_2)$.

On the other hand, $L_d(\pi_5(s,a_2))\geq L_d(\pi_5(s,q_1))+L_d(\pi_5(q_1,a_2))-1$. Since $\pi(s,q_1)=\pi_5(s,q_1)$, $L_d(\pi_5(s,q_1))=L_d(\pi(s,q_1))$. We claim that $L_d(\pi_5(q_1,a_2))\geq L_d(q_1,d_2)$. Indeed, because the last segment of $\pi_5(q_1,a_2)$ is horizontal and $a_2\in d_2$, $\pi_5(q_1,a_2)$ is an admissible path from $q_1$ to $d_2$. Thus, by the definition of $L_d(q_1,d_2)$, $L_d(\pi_5(q_1,a_2))\geq L_d(q_1,d_2)$ holds.

Therefore, we obtain $L_d(\pi(s,a_2))\leq L_d(\pi_5(s,a_2))$.

\item
If $a_2$ is not in $I_1(q_1,d_2)$, then $a_2$ is not on $I_2(q_1,d_2)$ either because $I_1(q_1,d_2)=I_2(q_1,d_2)$~\cite{ref:SchuiererAn96}. By the definition of $I_2(q_1,d_2)$, we obtain
$L_d(\pi_5(q_1,a_2))\geq L_d(q_1,d_2)+2$. Thus, $L_d(\pi_5(s,a_2))\geq L_d(\pi_5(s,q_1))+L_d(\pi_5(q_1,a_2))-1\geq L_d(\pi_5(s,q_1))+ L_d(q_1,d_2)+1$.

On the other hand, $L_d(\pi(q_1,a_2))=L_d(\pi(q_1,q_2))+2=L_d(\pi(q_1,d_2)+2$ (we have ``$+2$'' instead of ``$+1$'' because there is a horizontal segment of zero length at the end of $\pi(q_1,a_2)$). Hence, $L_d(\pi(s,a_2))=L_d(\pi(s,q_1))+L_d(\pi(q_1,a_2))-1=L_d(\pi(s,q_1))+L_d(\pi(q_1,d_2))+1$.

Therefore, we also obtain $L_d(\pi(s,a_2))\leq L_d(\pi_5(s,a_2))$.

\end{enumerate}

\paragraph{The vertical case.}
Next we consider the case where $\alpha=\overline{pq_1}$ is vertical. We obtain our path $\pi(s,a_2)$ in a similar way as before. As in the above horizontal case, one can verify that $L_1(\pi(s,a_2))=L_1(\pi_5(s,a_2))$ holds in all cases below, and thus we will focus on proving $L_d(\pi(s,a_2))\leq L_d(\pi_5(s,a_2))$.
Depending on whether $p$ is above $q_1$, there are two subcases.

If $p$ is above $q_1$, then both $\pi(s,a_2)$ and $\pi_5(s,a_2)$ make a turn at $q_1$.
Thus, $L_d(\pi(s,a_2))=L_d(\pi(s,q_1))+L_d(\pi(q_1,a_2))$ and $L_d(\pi_5(s,a_2))= L_d(\pi_5(s,q_1))+L_d(\pi_5(q_1,a_2))$. The rest of the analysis is similar as the above and we omit the details.

If $p$ is below $q_1$, then depending on whether $p$ is on $d_1$, there are further two subcases.

If $p$ is not on $d_1$, then again both $\pi(s,a_2)$ and $\pi_5(s,a_2)$ make a turn at $q_1$. We also have $L_d(\pi(s,a_2))=L_d(\pi(s,q_1))+L_d(\pi(q_1,a_2))$ and $L_d(\pi_5(s,a_2))= L_d(\pi_5(s,q_1))+L_d(\pi_5(q_1,a_2))$. The rest of the analysis is similar as before and we omit the details.

In the following, we assume that $p$ is on $d_1$. In this case,
according to our through-corridor-path dragging operation (e.g., see Fig.~\ref{fig:tcp2}), $\pi(s,q_2)=\pi(s,p)\cup \pi_{1}(\calC,p,q_2)$, which may not contain $q_1$. By the definition of $q_2$, the last segment of $\pi_{1}(\calC,p,q_2)$ is orthogonal to $d_2$ and $L_d(\pi_{1}(\calC,p,q_2))=L_d(p,d_2)$~\cite{ref:SchuiererAn96}.

Note that $\pi_5(s,a_2)$, which contains $q_1$, makes a turn at $p$ and another turn at $q_1$. Hence, $L_d(\pi_5(s,a_2))= L_d(\pi_5(s,p))+2+L_d(\pi_5(q_1,a_2))\geq L_d(\pi_5(s,p))+2+L_d(q_1,d_2)$.
Observe that $L_d(\pi(p,d_2))\leq 1+L_d(q_1,d_2)$ because for any path from $q_1$ to $d_2$, we can always add $\overline{pq_1}$ to obtain a path from $p$ to $d_2$.

As discussed before, either $I_1(p,d_2)=I_2(p,d_2)$ or $I_1(p,d_2)\subset I_2(p,d_2)$~\cite{ref:SchuiererAn96}. Depending on whether $a_2$ is in $I_1(p,d_2)$, $I_2(p,d_2)$, or not,
there are three cases.

\begin{enumerate}
\item
If $a_2$ is in the interval $I_1(p,d_2)$, then $L_d(\pi(s,a_2))\leq
L_d(\pi(s,p))+L_d(\pi(p,a_2))+1=L_d(\pi_5(s,p))+L_d(p,d_2)+1\leq L_d(\pi_5(s,p))+L_d(q_1,d_2)+2\leq L_d(\pi_5(s,a_2))$.

\item
If $a_2$ is in $I_2(p,d_2)$ but not in $I_1(p,d_2)$, this implies
$I_1(p,d_2)\subset I_2(p,d_2)$. According to Schuierer~\cite{ref:SchuiererAn96}, the first segment of $\pi_{1}(\calC,p,q_2)$ is parallel to the window $w_1$, which is horizontal (e.g., see Fig.~\ref{fig:tcp2}), and further, $L_d(p,d_2)=L_d(q_1,d_2)$. Hence, our path $\pi(s,a_2)$ does not have a turn at $p$ and $L_d(\pi(p,a_2))=L_d(\pi(p,d_2))+2=L_d(q_1,d_2)+2$. Therefore, $L_d(\pi(s,a_2))= L_d(\pi(s,p))+L_d(\pi(p,a_2))=L_d(\pi_5(s,p))+L_d(q_1,d_2)+2\leq L_d(\pi_5(s,a_2))$.

\item
Suppose $a_2$ is not in $I_2(p,d_2)$. 
Since $L_d(\pi(p,d_2))\leq 1+L_d(q_1,d_2)$, $I_1(q_1,d_2)\subseteq
I_2(p,d_2)$ (in fact they are equal~\cite{ref:SchuiererAn96}). Hence, $a_2$ is not in $I_1(q_1,d_2)$.
Since $I_1(q_1,d_2)=I_2(q_1,d_2)$, $a_2$ is not in $I_2(q_1,d_2)$ either. Thus, $L_d(\pi_5(q_1,a_2))\geq L_d(q_1,d_2)+2$.
Therefore, $L_d(\pi_5(s,a_2))= L_d(\pi_5(s,p))+2+L_d(\pi_5(q_1,a_2))\geq
L_d(\pi_5(s,p))+L_d(q_1,d_2)+4$.

For our path $\pi(s,a_2)$, we have $L_d(\pi(s,a_2))\leq
L_d(\pi(s,p))+L_d(\pi(p,a_2))+1$, where ``$+1$'' is due to a possible turn at $p$. Since $a_2\not\in I_2(p,d_2)$, $L_d(\pi(p,a_2))\leq L_d(p,d_2)+2$.
Recall that $L_d(\pi(s,p))=L_d(\pi_5(s,p))$ and
$L_d(p,d_2)\leq 1+L_d(q_1,d_2)$. Hence, we obtain
$L_d(\pi(s,a_2))\leq L_d(\pi_5(s,p))+L_d(q_1,d_2)+1+2+1\leq L_d(\pi_5(s,a_2))$.

\end{enumerate}

Thus, in any case it holds that $L_d(\pi(s,a_2))\leq L_d(\pi_5(s,a_2))$.

\subsubsection{The Second Main Case: $q_1\not\in \pi_5$}

We then consider the second main case where $\pi_5$ does not contain
$q_1$. In this case, $\pi(s,q_1)=\pi_5(s,a_1)\cup \overline{a_1q_1}$, $a_1$ is below $q_1$ on $d_1$, and $\overline{a_1'a_1''}$ intersects the window $w_1$ (e.g., see Fig.~\ref{fig:secondmain}). Hence, the
last segment of $\pi(s,q_1)$ is $\overline{a_1q_1}$, which is
vertical. Since $a_1$ is below $q_1$ and is on the diagonal $d_1$,
according to our corridor-path generating operation,
$\pi(s,q_2)=\pi(s,a_1)\cup \pi_{1}(\calC,a_1,q_2)$. We obtain our path $\pi(s,a_2)$ in the same way as before. Our goal is to show that $L_1(\pi(s,a_2))\leq L_1(\pi_5(s,a_2))$ and
$L_d(\pi(s,a_2))\leq L_d(\pi_5(s,a_2))$. Similarly as before, by the definition of $\pi_{1}(\calC,a_1,q_2)$, the subpath of $\pi(s,a_2)$ between $a_1$ and $a_2$ is a shortest path from $a_1$ to $a_2$ in $\calC$~\cite{ref:SchuiererAn96}, and thus, it holds that $L_1(\pi(s,a_2))\leq L_1(\pi_5(s,a_2))$. In what follows,
we focus on proving $L_d(\pi(s,a_2))\leq L_d(\pi_5(s,a_2))$.

\begin{figure}[t]
\begin{minipage}[t]{\linewidth}
\begin{center}
\includegraphics[totalheight=1.9in]{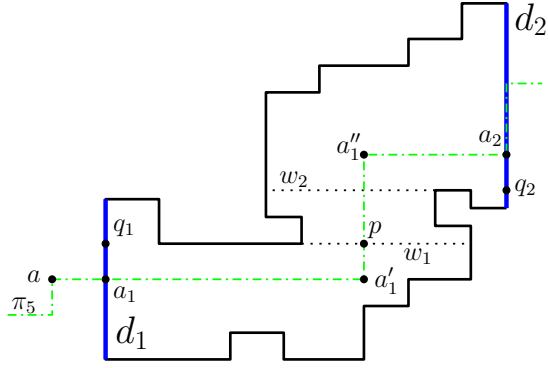}
\caption{\footnotesize Illustrating the intervals $I_1(q_1,d_2)$ and $I_2(q_1,d_2)$, which are equal.
}
\label{fig:secondmain}
\end{center}
\end{minipage}
\vspace*{-0.15in}
\end{figure}

Due to that $\overline{a_1'a_1''}$ intersects $w_1$,
according to \cite{ref:SchuiererAn96}, $L_d(a_1,d_2)=L_d(q_1,d_2)$ and the first segment of $\pi_{1}(\calC,a_1,q_2)$ must be parallel to the window $w_1$ and thus is horizontal, which implies that there is no turn at $a_1$ in our path $\pi(s,a_2)$ (because the last segment of $\pi_5(s,a_1)$ is horizontal). Hence, $L_d(\pi(s,a_2)) =
L_d(\pi(s,a_1))+L_d(\pi(a_1,a_2))$.
Note that $L_d(\pi_5(s,a_2))\geq L_d(\pi_5(s,a_1))+L_d(\pi_5(a_1,a_2))\geq
L_d(\pi_5(s,a_1))+L_d(a_1,d_2)$, and $L_d(\pi(s,a_1))=L_d(\pi_5(s,a_1))$.

If $a_2\in I_1(a_1,d_2)$, then $L_d(\pi(a_1,a_2))=L_d(a_1,d_2)$ and
$L_d(\pi(s,a_2))=L_d(\pi(s,a_1))+L_d(a_1,d_2)$.  Hence,
$L_d(\pi(s,a_2))\leq L_d(\pi_5(s,a_2))$.

Next we consider the case where $a_2\not\in I_1(a_1,d_2)$. In this
case, we have $L_d(\pi(s,a_2))=L_d(\pi(s,a_1))+L_d(a_1,d_2)+2$. Depending
on whether $a_2$ is in $I_2(a_1,d_2)$, there are further two subcases.

If $a_2\in I_2(a_1,d_2)$, then $I_1(a_1,d_2)\neq
I_2(a_1,d_2)$.
According to \cite{ref:SchuiererAn96}, $I_2(a_1,d_2)=I_1(q_1,d_2)$, and
for any path from $a_1$ to $a_2$ of $L_d(a_1,d_2)+1$ links with the last link orthogonal to $d_2$ (i.e., horizontal), the first
segment of the path must be vertical. Recall that by the definition of the point $a_1$, the first segment of $\pi_5(a_1,a_2)$ is horizontal. This implies that
$L_d(\pi_5(a_1,a_2))>L_d(a_1,d_2)+1$, i.e.,
$L_d(\pi_5(a_1,a_2))\geq L_d(a_1,d_2)+2$. Hence, we obtain $L_d(\pi_5(s,a_2))\geq L_d(\pi_5(s,a_1)) + L_d(\pi_5(a_1,a_2))\geq L_d(\pi_5(s,a_1))+L_d(a_1,d_2)+2\geq L_d(\pi(s,a_2))$.

If $a_2\not\in I_2(a_1,d_2)$, then $L_d(\pi_5(a_1,a_2)) \geq L_d(a_1,d_2)+2$. As
above, $L_d(\pi(s,a_2))\leq L_d(\pi_5(s,a_2))$ still holds.


\subsubsection{Wrapping Things Up}

The above shows the correctness of our through-corridor-path generating operations:
if $\pi(s,q_1)$ is stored at $q_1$, then
we can extend $\pi(s,q_1)$ through the corridor $\calC$ to obtain $\pi(s,q_2)$,
which can further be used to obtain an optimal \st\ path by applying the dragging operation on the last segment.

Next we argue that if the path $\pi(s,q_1)$ is not stored at $q_1$, then there must exist another path $\pi'(s,q_1)$ stored at $q_1$ that can also be used to obtain an optimal \st\ path by applying the through-corridor-path generating operation on $\pi'(s,q_1)$.

We first discuss the minimum-link shortest paths. We again consider the two main cases depending on whether $\pi_5$ contains $q_1$.

\begin{enumerate}
\item
We first consider the case where $\pi_5$ contains $q_1$.
Suppose there exists a path $\pi'(s,q_1)$ from $s$ to $q_1$ that causes $\pi(s,q_1)$ not stored at $q_1$.
By Rule ($a_1$), one of the following two cases must happen: either $L_1(\pi'(s,q_1))<L_1(\pi(s,q_1))$,  or
$L_1(\pi'(s,q_1))=L_1(\pi(s,q_1))$ but $L_d(\pi'(s,q_1))\leq
L_d(\pi(s,q_1))-2$.
We apply the through-corridor-path generating operation on $\pi'(s,q_1)$ to obtain a path $\pi'(s,q_2)$ from $s$ to $q_2$, and then obtain a path $\pi'(s,a_2)$ from $s$ to $a_2$ by applying the dragging operation, in the same way as before when we obtained $\pi(s,a_2)$ from $\pi(s,q_2)$ .

\begin{enumerate}
\item
If the first case happens, then $L_1(\pi'(s,a_2))<L_1(\pi(s,a_2))$. If we concatenate $\pi'(s,a_2)$ with $\pi_5(a_2,t)$, we would obtain another \st\ path whose length is strictly smaller than that of $\pi_5$, contradicting with that $\pi_5$ is a (minimum-link) shortest path.

\item
If the second case happens, then by the similar analysis as in Section~\ref{sec:correctold}, one can verify that $L_1(\pi'(s,a_2))= L_1(\pi(s,a_2))$ and $L_d(\pi'(s,a_2))\leq L_d(\pi(s,a_2))$, and we omit the details. Hence, if we concatenate $\pi'(s,a_2)$ with $\pi_5(a_2,t)$, we can obtain another \st\ path $\pi_5'$ with $L_1(\pi_5')=L_1(\pi_5)$ and $L_d(\pi_5')\leq L_1(\pi_5)$. Therefore, 
we can also obtain a minimum-link shortest \st\ path using $\pi'(s,q_1)$.
\end{enumerate}

\item
If $\pi_5$ does not contain $q_1$, then $\pi(s,q_1)=\pi(s,a_1)\cup
\overline{a_1q_1}$. Suppose there is another path $\pi'(s,q_1)$
that is stored at $q_1$ and causes $\pi(s,q_1)$ not stored at $q_1$.
Again, by Rule ($a_1$), one of the following two cases happens: either $L_1(\pi'(s,q_1))<L_1(\pi(s,q_1))$,  or
$L_1(\pi'(s,q_1))=L_1(\pi(s,q_1))$ but $L_d(\pi'(s,q_1))\leq
L_d(\pi(s,q_1))-2$.

\begin{enumerate}
\item
If the first case happens, then as the above analysis, the concatenation of $\pi'(s,q_1)$, $\pi_1(\calC,q_1,a_2)$, and $\pi_5(a_2,t)$ is an \st\ path whose length is strictly smaller than that of $\pi_5$, contradicting with that $\pi_5$ is a (minimum-link) shortest path.

\item
Suppose the second case happens.
Regardless of whether the last segment of $\pi'(s,q_1)$ is horizontal or vertical, due to the extra ``budget'' 2 on the link distance, one can verify that by applying the through-corridor-path generating operation on $\pi'(s,q_1)$ we can obtain a path $\pi'(s,a_2)$ from $s$ to $a_2$ such that if $\pi_5'=\pi'(s,a_2)\cup \pi_5(a_2,t)$, then $L_1(\pi_5')=L_1(\pi_5)$ and $L_d(\pi_5')\leq L_d(\pi_5)$. Thus, using $\pi'(s,q_1)$, we can also  obtain a minimum-link shortest \st\ path.
\end{enumerate}
\end{enumerate}

Other types of optimal paths can be analyzed in a similar way. We omit the details.

The corridor edges partition the target path $\pi_{G(\calB)}$
into subpaths. The above proves that by applying the dragging
operations on the edges of the first such subpath and applying a
through-corridor-path generating operation on the first corridor edge
$e(q_1,q_2)$, we can obtain a path $\pi(s,q_2)$ such that by applying a dragging operation on its last segment,
we can obtain an optimal \st\ path. For the second
subpath of the target path, we use the similar argument. The only
difference is the following. The first subpath starts from $s$, so we do not
need to argue anything.  However, in the second subpath, we have to show
that there exists a path stored at $q_2$ so that by applying a dragging operation on its last segment we can obtain an optimal path.
But this has been proved above.
Hence, by applying the above analysis on each of the subpaths of
$\pi_{G(\calB)}$, we can prove that our algorithm will find an
optimal \st\ path.

This completes the proof of the correctness of our algorithm.

\subsection{The General Cases}
\label{sec:general}

The above discussed the case where both $s$ and $t$ are in junction
rectangles. In this section, we generalize the approach to other cases.
We begin with the most general case where both $s$ and $t$
are in corridors. Let $\calC_s$ and $\calC_t$ be the two corridors
that contain $s$ and $t$, respectively. We first assume $\calC_s\neq \calC_t$.

Consider a door $d$ of $\calC_s$. We define a point $s_d$ on $d$ as
follows. If $s$ is horizontally visible to $d$, then $s_d$ is the
horizontal projection of $s$ to $d$. Otherwise, let $w$ be the
window of the maximal histogram of $\calC$ with base $d$ such that $w$ separates
$s$ and $d$. We define $s_d$ to be the intersection of $d$ and the
extension of $w$  (e.g., see Fig.~\ref{fig:sd}). With $s_d$ thus defined, for any point $p\in d$, there is a shortest path from $s$ to $p$ in $\calC_s$ that is the union of $\overline{ps_d}$ and $\pi_{opt}(\calC,s_d,s)$, where $\pi_{opt}(\calC,s_d,s)$ is the smallest path between $s$ and $s_d$ in $\calC$~\cite{ref:SchuiererAn96}.
We also call $s_d$ a {\em corridor-connection} point of $s$ on $d$.

\begin{figure}[t]
\begin{minipage}[t]{\linewidth}
\begin{center}
\includegraphics[totalheight=1.3in]{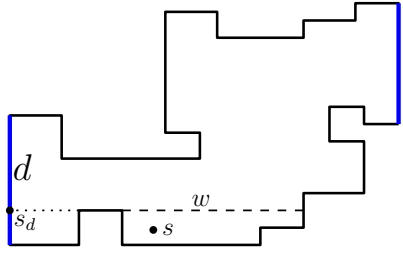}
\caption{\footnotesize Illustrating the definition of $s_d$ on a door $d$.
}
\label{fig:sd}
\end{center}
\end{minipage}
\vspace*{-0.15in}
\end{figure}

Similarly, for each door $d$ of $\calC_t$, we define a
corridor-connection point $t_d$ with respect to $t$ and $\calC_t$.
In this way, there are four corridor-connection points on the
doors of $\calC_s$ and $\calC_t$.

We let $\calB$ now consist of all backbone points and the four corridor-connection points.
We define the graph $G(\calB)$ with respect to $\calB$ in the same way as before. In addition, we add $s$
and $t$ as two new vertices to $G(\calB)$ and also add the following four {\em
corridor-connection} edges. For each corridor-connection point $q$
defined by $s$, we add an edge in $G(\calB)$ connecting $q$ to
$s$. The weight of the edge is $L_1(\pi_{opt}(\calC_s,s,q))$, and the link distance $L_d(\pi_{opt}(\calC_s,s,q))$ as well as the segment of $\pi_{opt}(\calC_s,s,q)$ incident to $q$ is also stored at the edge. Similarly, we add
two corridor-connection edges connecting to $t$. This completes the definition of $G(\calB)$.

\begin{lemma}
There exists an \st\ path $\pi_{G(\calB)}(s,t)$ in $G(\calB)$ that is homotopic to an optimal \st\ path $\pi(s,t)$ and the two paths $\pi_{G(\calB)}(s,t)$ and $\pi(s,t)$ have the same length; we call $\pi_{G(\calB)}(s,t)$ a {\em target path}. This implies that a shortest \st\ path in $G(\calB)$ is also a shortest \st\ path in $\calP$. \end{lemma}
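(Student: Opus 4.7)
The plan is to reduce the general case to the already-proved Lemma~\ref{lem:target}, which covers the situation where both endpoints lie in junction rectangles. Since the corridor-connection points $s_{d}$ and $t_{d}$ are defined on doors of $\calC_s$ and $\calC_t$, and every door coincides with the vertical side of a junction rectangle, a portion of an optimal path between a pair of corridor-connection points will behave exactly as in Lemma~\ref{lem:target}. The trivial subcase $\calC_s=\calC_t$ can be dispatched separately by returning a smallest path inside this single simple rectilinear polygon, which is known to exist~\cite{ref:SchuiererAn96}; below I assume $\calC_s\neq \calC_t$.

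Starting from any optimal \st\ path $\pi$ in $\calP$, let $p_s$ be the first point at which $\pi$ (traced from $s$) meets the boundary of $\calC_s$ other than $s$; necessarily $p_s$ lies on a door $d_s$ of $\calC_s$, and $\pi(s,p_s)\subseteq \calC_s$. Define $p_t$ and $d_t$ symmetrically from the $t$-side. I then replace $\pi(s,p_s)$ by the path $\pi_{opt}(\calC_s,s,s_{d_s})\cup \overline{s_{d_s}p_s}$, and analogously replace $\pi(p_t,t)$ on the $t$-side. By the construction of $s_{d_s}$ (either as the horizontal projection of $s$ on $d_s$, or as the extension of the separating window of the maximal histogram with base $d_s$) together with Schuierer's result~\cite{ref:SchuiererAn96}, each replacement is a shortest path in its corridor between the corresponding two endpoints, so the total length is unchanged; and exactly as in the closed-corridor analysis in the proof of Lemma~\ref{lem:target} (cases based on where $p_s$ sits relative to $s_{d_s}$), the replacement does not increase the link distance either. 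Call the resulting path $\pi^*$; it is homotopic to $\pi$, has the same length, and remains an optimal \st\ path that now passes through both $s_{d_s}$ and $t_{d_t}$.

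Next, I apply Lemma~\ref{lem:target} to the middle portion $\pi^*(s_{d_s},t_{d_t})$. Since $s_{d_s}$ lies on a vertical side of a junction rectangle (the door $d_s$ of $\calC_s$), and similarly for $t_{d_t}$, the argument of that lemma applies verbatim with $s_{d_s}$ and $t_{d_t}$ playing the roles of the junction-rectangle endpoints, yielding a path $\pi^*_{G(\calB)}$ in $G(\calB)$ from $s_{d_s}$ to $t_{d_t}$ that is homotopic to $\pi^*(s_{d_s},t_{d_t})$ and of equal length. Prepending the corridor-connection edge from $s$ to $s_{d_s}$ (whose weight, by construction of $G(\calB)$, equals $L_1(\pi_{opt}(\calC_s,s,s_{d_s}))$) and appending the analogous edge on the $t$-side, I obtain the desired path $\pi_{G(\calB)}(s,t)$, homotopic to $\pi^*$ (and hence to $\pi$) and of equal length. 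The closing claim, that a shortest \st\ path in $G(\calB)$ is a shortest \st\ path in $\calP$, then follows exactly as in Corollary~\ref{coro:target}, since every edge of $G(\calB)$ (ordinary, corridor, or corridor-connection) corresponds to a path in $\calP$ of the same weight.

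The main obstacle will be the link-distance bookkeeping in the first replacement step when $s$ is not horizontally visible to $d_s$, i.e.\ when $s_{d_s}$ is defined through a window-extension rather than a projection. This is the same delicate issue that forced the multi-case breakpoint analysis in the proof of Lemma~\ref{lem:target}, and I expect to resolve it by essentially the same window-extension case analysis: every turn introduced at $s_{d_s}$ or along $\overline{s_{d_s}p_s}$ is charged against a turn that $\pi(s,p_s)$ was forced to make inside the histogram separating $s$ from $d_s$, so that $L_d$ is non-increasing and the substitution preserves optimality for all three cost measures uniformly.
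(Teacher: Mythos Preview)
Your proposal is correct and follows essentially the same approach as the paper: replace the in-corridor portions of an optimal path by paths through the corridor-connection points, invoke the analysis of Lemma~\ref{lem:target} on the middle portion between $s_{d_s}$ and $t_{d_t}$ (which lie in junction rectangles), and attach the two corridor-connection edges at the ends; the closing claim follows exactly as in Corollary~\ref{coro:target}.

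One remark: your final paragraph about link-distance bookkeeping is unnecessary here. The lemma only asserts length equality and homotopy, and the paper's proof never touches link distance at this point. You do not need $\pi^*$ to remain optimal; you only need it to be homotopic to $\pi$ with the same length. Since $\pi^*$ is then a shortest $s$-$t$ path, its subpath $\pi^*(s_{d_s},t_{d_t})$ is a shortest $s_{d_s}$-$t_{d_t}$ path (optimal substructure of shortest paths), which is all the argument of Lemma~\ref{lem:target} actually uses---the only place ``optimality'' is invoked there is to force the middle segment of any U-shaped subpath to contain an obstacle edge, and that is a shortest-path property. So you can drop the window-extension case analysis for $L_d$ entirely.
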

\begin{proof}
Consider any optimal \st\ path $\pi$ in $\calP$. If we traverse on $\pi$ from $s$ to $t$, let $d_s$ be the first door of $\calC_s$ we encounter and let $q_s$ be first point on $d_s$ we encounter. Similarly, if we traverse on $\pi$ from $t$ to $s$, let $d_t$ be the first door of $\calC_t$ we encounter and let $q_t$ be first point on $d_t$ we encounter. Let $p_s$ be the corridor-connection point of $s$ on $d_s$.
Let $p_t$ be the corridor-connection point of $t$ on $d_t$.


Based on $\pi$, we obtain another \st\ path $\pi'$ by replacing the subpath $\pi(s,q_s)$ by $\overline{q_sp_s}\cup \pi_{opt}(\calC_s,p_s,s)$ and replacing the subpath $\pi(t,q_t)$ by $\overline{q_tp_t}\cup \pi_{opt}(\calC_t,p_t,t)$. Clearly,  $L_1(\pi')=L_1(\pi)$ and $\pi'$ is homotopic to $\pi$. Observe that $\pi'$ consists of the following three subpaths: $\pi'(s,p_s)$, which a path from $s$ to $p_s$ in $\calC_s$, $\pi'(p_s,p_t)$, and $\pi'(p_t,t)$, which a path from $p_t$ to $t$ in $\calC_t$. Since both $p_s$ and $p_t$ are in junction rectangles, according to the analysis of Lemma~\ref{lem:target}, the graph $G(\calB)$ has a path $\pi_{G(\calB)}(p_s,p_t)$ from $p_s$ to $p_t$ with the same length as $\pi'(q_s,q_t)$ and  $\pi_{G(\calB)}(p_s,p_t)$ is homotopic to $\pi'(q_s,q_t)$.
As the two subpaths $\pi'(s,p_s)$ and $\pi'(p_t,t)$ correspond to two corridor-connection edges in $G(\calB)$, let $\pi_{G(\calB)}(s,t)$ be the concatenation of the above two corridor-connection edges and $\pi_{G(\calB)}(p_s,p_t)$. According to the above analysis, the length of $\pi_{G(\calB)}(s,t)$ is equal to $L_1(\pi)$ and $\pi_{G(\calB)}(s,t)$ is homotopic to $\pi$. Hence, the first part of the lemma follows.

By using the similar argument as Corollary~\ref{coro:target}, the second part of the lemma can be proved.
\end{proof}

In light of the preceding lemma, we can compute an optimal \st\ path by searching the graph $G(\calB)$. Comparing with the algorithm for the previous case where both $s$ and $t$ are in junction rectangles, one big difference is in the beginning of the algorithm. Here, our algorithm starts to explore the two corridor-connection edges connecting to $s$. For each such edge, say, from $s$ to $s_d$ on a door $d$ of $\calC_s$, we move to the vertex $s_d$ of $G(\calB)$ to obtain two paths and store them at $s_d$, and the two paths are defined as follows. Without loss of generality, we assume $\calC_s$ is locally on the right of $d$. Depending on whether $s$ is horizontally visible to $d$, there are two cases.

If $s$ is not horizontally visible to $d$, recall that in Section~\ref{sec:compute} we defined intervals $I_1(s,d)$, $I_1'(s,d)$, $I_2(s,d)$, $I_2'(s,d)$, and two admissible paths $\pi_1(\calC,s,s_d)$ and $\pi_2(\calC,s,s_d)$ from $s$ to $s_d$ in $\calC$. 
Also, $L_d(\pi_1(\calC,s,s_d))=L_d(s,d)$ and $L_d(\pi_2(\calC,s,s_d))=L_d(s,d)+1$. The last link of $\pi_1(\calC,s,s_d)$ is $I_1'(s,d)$ and the last link of $\pi_2(\calC,s,s_d)$ is $I_2'(s,d)$. We store the two paths $\pi_1(\calC,s,s_d)$ and $\pi_2(\calC,s,s_d)$ at $s_d$.

\begin{figure}[t]
\begin{minipage}[t]{\linewidth}
\begin{center}
\includegraphics[totalheight=1.3in]{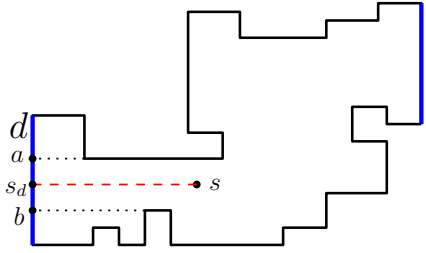}
\caption{\footnotesize Illustrating the interval $I_2(s,d)$, which is the segment $\overline{ab}$ on $d$.
}
\label{fig:visible}
\end{center}
\end{minipage}
\vspace*{-0.15in}
\end{figure}

If $s$ is horizontally visible to $d$, then we can also define $I_1(s,d)$, $I_2(s,d)$, $\pi_1(\calC,s,s_d)$ and $\pi_2(\calC,s,s_d)$ in a way consistent with the above case, as follows. We define both $\pi_1(\calC,s,s_d)$ and $\pi_2(\calC,s,s_d)$  to be the horizontal segment $\overline{ss_d}$. However, we consider $\pi_1(\calC,s,s_d)$ as having a single segment while $\pi_2(\calC,s,s_d)$ as having another vertical segment incident to $s$ with zero length. Hence, we still have $L_d(\pi_1(\calC,s,s_d))=L_d(s,d)$ and $L_d(\pi_2(\calC,s,s_d))=L_d(s,d)+1$. Note that $L_d(s,d)=1$.
We define $I_1(s,d)$ as the single point $s_d$ and define $I_2(s,d)$ as the interval on $d$ swept by $s_d$ if we push $\overline{ss_d}$ vertically in $\calC_s$ (e.g., see Fig.~\ref{fig:visible}). Note that the last segments of both $\pi_1(\calC,s,s_d)$ and $\pi_2(\calC,s,s_d)$ are $\overline{ss_d}$. However,
if a dragging operation is applied, $\overline{ss_d}$ is fixed for $\pi_1(\calC,s,s_d)$ and floating for $\pi_2(\calC,s,s_d)$, i.e., we cannot drag $\overline{ss_d}$ vertically in $\pi_1(\calC,s,s_d)$ but we can do so in $\pi_2(\calC,s,s_d)$ due to the zero-length vertical segment at $s$. We store the two paths $\pi_1(\calC,s,s_d)$ and $\pi_2(\calC,s,s_d)$ at $s_d$.


Next, the algorithm proceeds in the same way as before until when a corridor-connection edge from $t_d$ to $t$ is processed, at which moment, we apply the following {\em concatenation procedure} to concatenate the current paths stored at $t_d$ with the two paths $\pi_1(\calC,t,t_d)$ and $\pi_2(\calC,t,t_d)$ (defined similarly as $\pi_1(\calC,s,s_d)$ and $\pi_2(\calC,s,s_d)$) to obtain an \st\ path. This is done with the help of the two intervals $I_1(t,d)$ and $I_2(t,d)$. The details are given below, which are somewhat similar to the through-corridor-path generating procedure.

Let $\pi(s,t_d)$ be a path stored at $t_d$ and we wish to extend it to $t$. Let $\alpha=\overline{pt_d}$ be the last edge of the path and let $\pi(s,p)$ be the subpath of $\pi(s,t_d)$ between $s$ and $p$. Without loss of generality, we assume that $\calC_t$ is locally on the right of $d$.

If $\alpha$ is horizontal, then we simply concatenate $\pi(s,t_d)$ with $\pi_1(\calC,t,t_d)$ through $t_d$. If $\alpha$ is vertical and $p$ is on $I_1(t,d)$, then we concatenate $\pi(s,p)$ with $\pi_1(\calC,t,t_d)$ through $p$ after dragging the last edge (i.e., the segment incident to $t_d$) of $\pi_1(\calC,t,t_d)$ until $p$. If $\alpha$ is vertical and $p$ is on $I_2(t,d)$, then we concatenate $\pi(s,p)$ with $\pi_2(\calC,t,t_d)$ through $p$ after dragging the last edge of $\pi_2(\calC,t,t_d)$ until $p$. If $p$ is not on $I_2(t,d)$, then we again simply concatenate $\pi(s,t_d)$ with $\pi_1(\calC,t,t_d)$ through $t_d$. By the definitions of the two intervals $I_1(t,d)$ and $I_2(t,d)$, one can verify that the above gives the best solution for extending the path $\pi(s,t_d)$ to $t$.

Once the searching algorithm on the graph $G(\calB)$ is finished, we pick from all the paths stored at $t$ the one with the smallest measure as the optimal solution. The algorithm is applicable to all types of optimal paths and the running times are asymptotically the same as before because computing all corridor-connection edges can be done in additional $O(n)$ time~\cite{ref:SchuiererAn96}.

The above discusses the case where $\calC_s\neq \calC_t$. If
$\calC_s=\calC_t$, then we first compute a smallest path
$\pi_{opt}(\calC,s,t)$ in $\calC_s$ in $O(|\calC_s|)$ time
\cite{ref:SchuiererAn96}. Clearly, if there exists an optimal path
\st\ in $\calC_s$, then $\pi_{opt}(\calC,s,t)$ is a solution.
Otherwise, any optimal \st\ path first goes outside $\calC_s$ through
one door and then gets back to $\calC_s$ through the other door. We
apply the same algorithm as above by conceptually treating $\calC_s$
and $\calC_t$ differently.

If $s$ is in a corridor and $t$ is in a junction rectangle, we can use the similar approach as the above general case. The difference is that the concatenation procedure is not needed any more.

\section{The One-Point Optimal Path Queries}
\label{sec:onepoint}

In this section, we present our results on one-point queries, where
$s$ is the source point and $t$ is the query point. The high-level
scheme of our approach is similar to that in~\cite{ref:ChenOn01},
which is based on the (incorrect) algorithm of~\cite{ref:YangRe95} and
the graph $G(\calV)$ discussed in Section~\ref{sec:old}. Our new
approach is based on our new algorithm and the reduced graph
$G(\calB)$ proposed in Section~\ref{sec:new}.

Let $\calB$ now consist of the source $s$ and all backbone
points. Let $G(\calB)$ be the graph we build in Section~\ref{sec:new} on $\calB$. Note that if $s$ is in a corridor, then the
graph has two corridor-connection edges incident to $s$.
We first consider the minimum-link shortest path queries.

\subsection{The Minimum-Link Shortest Paths}

Consider a query point $t$. We first assume that $t$ is in
a junction rectangle.

As in~\cite{ref:ChenOn01,ref:ChenTw16,ref:ChenSh00},
we define a new graph $G_t(\calB)$ by ``inserting'' $t$ into $G(\calB)$, as follows.
Roughly speaking, $G_t(\calB)$ is the graph defined with respect to $\calB\cup\{t\}$ in the same way as $G(\calB)$ with respect to $\calB$ with the following
constraint: the vertical cut-line $l(t)$ through $t$ is at a leaf node of the
cut-line tree (and thus $l(t)$ does not have any Steiner points). Specifically, let $T(\calB)$ be the cut-line tree of $G(\calB)$.
Since $|\calB|=O(h)$, $T(\calB)$ has $O(h)$ nodes and its height is $O(\log h)$. We first define a set of {\em projection cut-lines} of
$t$. Starting from the root $v$ of $T(\calB)$, if the cut-line $l(v)$
of $v$ is horizontally visible to $t$, then $l(v)$ is a {\em projection
cut-line} of $t$. If $t$ is on the left side of $l(v)$, then we
proceed on the left child of $v$ and the projection cut-lines in the
left subtree of $v$ are defined recursively; otherwise, we proceed on the right
child of $v$. In this way, we can define $O(\log
h)$ projection cut-lines for $t$ because there is at most one projection cut-line at each level of $T(\calB)$.

For each projection cut-line $l(v)$ of $t$, we add a
vertex $v_t$ to $G(\calB)$, where $v_t$ is
a Steiner point that is the horizontal projection of $t$ onto
$l(v)$. Let $a_v$ and $b_v$ be the vertices of $G(\calB)$ on $l(v)$
right above and below $v_t$, respectively, and they are called the
{\em gateways} of $t$. We also add the following three edges to the
graph: $\overline{tv_t}$, $\overline{v_ta_t}$, and $\overline{v_tb_t}$.
Since $t$ has $O(\log h)$ projection cut-lines, we add at most $O(\log
h)$ vertices and edges to $G(\calB)$, and the resulting graph is
$G_t(\calB)$. Let $V_g(t)$ be the set of all gateways of $t$. Clearly,
$|V_g(t)|=O(\log h)$. Intuitively, the gateways ``control'' the paths
from $t$ to all other vertices of $G_t(\calB)$.
Since $G_t(\calB)$ is essentially the graph defined with respect to
$\calB\cup\{t\}$ in the same way as $G(\calB)$ with respect to $\calB$, our algorithm
in Section~\ref{sec:new} can find a minimum-link shortest \st\ path by
searching $G_t(\calB)$.
Based on this observation, we use the following approach to answer the query.

As preprocessing, we apply our algorithm in Section~\ref{sec:new} on
$G(\calB)$ with $s$ as the source. After the algorithm finishes, each vertex $q$ of $G(\calB)$ will (implicitly) store at most sixteen paths $\pi(s,q)$ from $s$ to $q$.
This takes $O(n+h\log^{3/2}h)$ time and $O(n+h\log h)$ space.

Given the query point $t$, we first compute all projection cut-lines
of $t$, which can be done in $O(\log n)$
time~\cite{ref:ChenOn01,ref:ChenTw16,ref:ChenSh00} (e.g., with the
help of the horizontal visibility decomposition of $\calP$). As in
\cite{ref:ChenTw16}, computing
the gateway set $V_g(t)$ can be done in $O(\log h)$ time by searching
the cut-line tree $T(\calB)$ in a top-down manner after building a
fractional cascading data structure on the sorted lists of the
vertices of $G(\calB)$ on all cut-lines of
$T(\calB)$~\cite{ref:ChenTw16}. As the vertices of $G(\calB)$ on all cut-lines can
be sorted in $O(h\log h)$ time, building the fractional cascading data
structure can be done  in $O(h\log h)$ time~\cite{ref:ChazelleFr86}.
Chen et al.~\cite{ref:ChenSh00} provided another (more
involved) way to compute $V_g(t)$  in $O(\log h)$ time.

\begin{figure}[t]
\begin{minipage}[t]{\linewidth}
\begin{center}
\includegraphics[totalheight=1.0in]{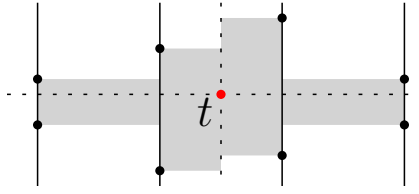}
\caption{\footnotesize Illustrating the gateway region of $t$ (the shaded area). The black points are gateways and the solid vertical lines are cut-lines.
}
\label{fig:gateway}
\end{center}
\end{minipage}
\vspace*{-0.15in}
\end{figure}

Consider a gateway $q$ of $V_g(t)$. For each path $\pi(s,q)$ stored at
$q$, by using the dragging operation we can extend $\pi(s,q)$ to
obtain a path $\pi(s,t)$ from $s$ to $t$. Chen et
al.~\cite{ref:ChenOn01} showed that the dragging operation can be
performed in $O(1)$ time due to the rectilinear convexity of a so-called
{\em gateway region} (e.g., see Fig.~\ref{fig:gateway}). If we extend
the paths stored in all gateways to $t$, then the path with the
smallest measure is a minimum-link shortest path. Since
$|V_g(t)|=O(\log h)$ and the number of paths stored at each gateway is
$O(1)$, we can find such an optimal path in $O(\log h)$ time.

As a summary, if $t$ is in a junction rectangle, computing the measure
of a minimum-link shortest path can be done in $O(\log n)$ time. Note
that outputting an actual path can be done in additional $O(k)$ time
by standard techniques, where $k$ is the link distance of the path. We
omit the details.

If $t$ is in a corridor $\calC$, then we use the idea in
Section~\ref{sec:general}. We first assume $s$ is not in $\calC$.
Hence, an optimal \st\ path must cross a door, say, $d$,
of $\calC$. Let $t_d$ be the corridor-connection point of $t$ on $d$. With
$O(|\calC|)$ time preprocessing on $\calC$ (i.e., building the histogram
partition data structure~\cite{ref:SchuiererAn96}), the following can all be computed in
$O(\log n)$ time: the point $t_d$, the two intervals $I_1(t,d)$,
$I_2(t,d)$, the last segments of the two paths $\pi_1(\calC,t,t_d)$ and
$\pi_2(\calC,t,t_d)$, the measures of the two paths.
Since $t_d$ is in a junction rectangle, we can find
a set $V_g(t_d)$ of gateways in $G(\calB)$. With all above
information, for each path $\pi(s,q)$ stored at a gateway $q$ of
$V_g(t_d)$, we can apply the concatenation procedure to extend
$\pi(s,q)$ to obtain a path $\pi(s,t)$ from $s$ to $t$, and the
measure of $\pi(s,t)$ can be obtained in $O(1)$ time.
In this way, we can obtain $O(\log h)$ candidate paths crossing $d$. We do this for the other door of $\calC$ as well. In total we obtain $O(\log h)$ candidate paths,
and the one with the smallest measure is an optimal \st\ path. Therefore,
the query can be answered in $O(\log n)$ time. If $s$ is also in the
corridor $\calC$, then in addition to the above candidate \st\ paths,
we also need to consider the smallest path from $s$ to $t$ in $\calC$,
which can be obtained in $O(\log n)$ time by the histogram partition data structure
of $\calC$~\cite{ref:SchuiererAn96}.

In summary, we can build a data structure of $O(n+h\log h)$ size in
$O(n+h\log^{3/2}h)$ time such that each one-point minimum-link
shortest path query can be answered in $O(\log n)$ time.

\subsection{The Minimum-Cost Paths}

For the minimum-cost paths, we use the same approach as above. The
difference is that now each vertex of the graph $G(\calB)$ maintains
$O(h)$ paths. Therefore, we need to consider $O(h\log h)$ candidate
paths stored in the $O(\log h)$ gateways of $t$.
Hence, the query time becomes $O(\log n+ h\log h)$. The
preprocessing is the same as those in the algorithm given in Section~\ref{sec:new}, i.e.,
$O(n+h^2\log^{3/2}h)$ time and $O(n+h^2\log h)$ space.

\subsection{The Shortest  Minimum-Link Path Queries}

For the shortest minimum-link paths, following the above approach, we can obtain a result whose complexities are the same as the minimum-cost paths. However, we are able to do better.
The main idea is that we can add more pruning rules to Rule($a_2$) in Section~\ref{sec:new}, so
that for each gateway $q$ of $t$, we can do binary search to find a best path
among all $O(h)$ paths stored at $q$ to connect to $t$, and
consequently we only need to spend $O(\log h)$ time on $q$,
and the total query time becomes $O(\log n+ \log^2 h)$.
Specifically, we replace Rule($a_3$) by the following Rule($a_3'$)

\begin{enumerate}
\item[($a_3'$)]
Let $\pi_1$ be one of $\pi'(s,q)$ and $\pi(s,q)$, and $\pi_2$ the
other. We discard $\pi_2$ if one of the following two cases
happen: (1) $L_1(\pi_1)\leq L_1(\pi_2)$ and $L_d(\pi_1)\leq
L_d(\pi_2)-2$; (2) $L_1(\pi_1)\leq L_1(\pi_2)$,  $L_d(\pi_1) =
L_d(\pi_2)$, the last segments of both paths overlap, and the last segment of
$\pi_1$ is no longer than that of $\pi_2$.
\end{enumerate}

Using the similar analysis as in Section~\ref{sec:correct},
one can verify that with the new rule the previous algorithm still
works. Note that with the new rule each vertex
$q$ still needs to store $\Theta(h)$ paths in the worst case (e.g.,
extending the example in
Fig.~\ref{fig:threepaths} by assuming $L_d(\pi_1)=L_d(\pi_2)=L_d(\pi_3)$
and $L_1(\pi_1)>\L_1(\pi_2)>L_1(\pi_3)$). Based on the new rule,
depending on whether the last segment is from upwards, leftwards,
rightwards, downwards, there are four types of paths stored at $q$.
More importantly to our approach for answering queries, the new rule
guarantees the following property: each type of paths stored at $q$ can
be partitioned into two sets $\Pi_1(q)$ and $\Pi_1(q)$ such that (1)
all paths of $\Pi_1(q)$ (resp., $\Pi_2(q)$) have the same link
distance; (2) the link distance of the paths of $\Pi_1(q)$ is one
larger than that of the paths of $\Pi_2(q)$; (3) the paths of
$\Pi_1(q)$ (resp., $\Pi_2(q)$) can be organized into a sequence
$\pi_1,\pi_2,\ldots, \pi_k$ with $k=O(h)$ such that their lengths are
strictly decreasing and the lengths of their last segments are
strictly increasing (e.g., see Fig.~\ref{fig:threepaths} with
$L_1(\pi_1)>L_1(\pi_2)>L_1(\pi_3)$).

Consider a query point $t$. We first assume that $t$ is in a junction rectangle.
Let $q$ be a gateway of $V_g(t)$. Without loss of generality, we
assume that $q$ is in the first quadrant of $t$. To extend the paths
$\pi(s,q)$ stored at $q$ to $t$, we use the following approach. Note
that we
only need to consider the paths $\pi(s,q)$ whose last segments
are from upwards and rightwards of $q$ (since other paths cannot lead
to an optimal \st\ path). We consider the type of
paths whose last segments are from rightwards of $q$ (the other type of paths can be
processed similarly). Let $\Pi_1(q)$ and $\Pi_2(q)$ be the two sets of
sorted paths. We consider the set $\Pi_1(q)$ (the other set can be
processed similarly). Let the sequence of the paths of $\Pi_1(q)$ be
$\pi_1,\pi_2,\ldots,\pi_k$ as defined above. Suppose we want to find
the best path among above paths to extend it to $t$ with the smallest
measure. Since the lengths of those paths are strictly decreasing, if
we can find the largest index $i\in [1,k]$ such that we can freely drag the
last segment of $\pi_i$ downwards until $t$ without making any extra turn, then
$\pi_i$ is the best path. To find such an index $i$, we can use binary
search as follows. Since the lengths of the last segments of these
paths are strictly increasing, the downward hit vertices of these
segments are actually sorted in increasing order by their
$y$-coordinates. As these hit vertices have already been computed and
associated with these segments, such an index $i$ can be found in
$O(\log h)$ time by binary search on the sequence of the paths. Note that in the case that such an
index $i$ does not exist, for any path $\pi_j$ with $j\in [1,k]$, to extend it to
$t$, we have to make an extra turn, and thus the best path is
$\pi_k$ because its length is the smallest.

The above gives an $O(\log h)$ time algorithm to find among the paths
stored at $q$ the best path to extend to $t$. As there are $O(\log h)$
gateways, the total query time is $O(\log^2 h +\log n)$.

If $t$ is in a corridor, we use a similar approach as above but
on the corridor-connection points of $t$ on the two doors of the corridor, in the
same way as for the minimum-link shortest path queries discussed before. The
query time is still $O(\log^2 h +\log n)$.

Because we need to main the above sorted lists in each vertex $q$ of $G(\calB)$,
we need to modify our preprocessing algorithm.
Fortunately, we can still implement the new algorithm in the same time asymptotically
as before, with the help of van Emde Boas trees~\cite{ref:CLRS09}. The
details are given below.

Consider the list $\Pi_1(q)=\{\pi_1,\pi_2,\ldots,\pi_k\}$ discussed
above. Without loss of  generality, we assume that the last segments of these paths
are all horizontal on the left side of $q$. For each $1\leq i\leq k$, let $q_i$ be the left
endpoint of the last segment of $\pi_i$.
Recall that $q_i$ must be a projection of a vertex $v_i$ of $\calB$
onto the horizontal line through $q$. Let $r(v_i)$ be the {\em rank} of
$v_i$ in $\calB$ in the increasing $x$-coordinate order, i.e., if
$v_i$'s $x$-coordinate is the $j$-th smallest in $\calB$, then
$r(v_i)=j$. We also let $r(q_i)=r(v_i)$, as the rank of $q_i$. Since
$|\calB|=O(h)$, $r(q_i)$ is an integer upper bounded by $O(h)$. We
maintain the ranks of all $q_i$ for $1\leq i\leq k$ in a van Emde Boas
tree $T_{veb}(q)$ of $O(h)$ size, so that each of the following operations can be implemented in $O(\log\log h)$ time~\cite{ref:CLRS09}: search, insert, delete, predecessor, successor, minimum, maximum. Suppose we have a new path $\pi$ from $s$ to $q$ whose last
segment is also horizontal from the left side of $q$ and $L_d(\pi)=L_d(\pi_i)$ for all $i\in [1,k]$. Our task is to update the list $\Pi_1(q)$ with $\pi$ following Rule($a_3'$).

Let $p$ be the left endpoint of the last segment of $\pi$, which is a vertical projection of a vertex $v$ of $\calB$. Let $r(p)$ be the rank of $v$ in $\calB$, which can be obtained in $O(1)$ time if we compute the ranks of all vertices of $\calB$ in the preprocessing. Let $r(q_i)$ be the successor of $r(p)$ in $T_{evb}(q)$, which can be found in $O(\log\log h)$ time. We check whether $L_1(\pi)\geq L_1(\pi_i)$. If yes, then by  Rule($a_3'$), $\pi$ needs not to be stored at $q$ and we are done. Otherwise, we further find the predecessor $r(q_j)$ of $r(p)$ in $T_{veb}(q)$ in $O(\log\log h)$ time.
We check whether $L_1(\pi)> L_1(\pi_j)$. If yes, then  we insert $r(p)$ to $T_{evb}(q)$, and thus insert $\pi$ into the correct position of the list $\Pi_1(q)$. Otherwise, by  Rule($a_3'$), the path $\pi_j$ should be removed, and thus we delete $r(q_j)$ from $T_{evb}(q)$. Next, we find the predecessor of $r(q_j)$ in $T_{evb}(q)$ to check whether the corresponding path should be removed. In this way, updating the set $\Pi_1(q)$ for $\pi$ takes $O((k'+1)\log\log h)$ time, where $k'$ is the number of paths that are removed from $\Pi_1(q)$. Note that once a path is removed it will never be inserted again. Hence, the total sum of such $k'$ in the entire algorithm for all vertices is no more than the total number of paths maintained by the algorithm, which is $O(h^2\log h)$. Therefore, the total time of the algorithm on updating the path lists stored in all vertices of the graph is $O(h^2\log h\log\log h)$, which is still bounded by $O(h^2\log^{3/2}h)$. After the algorithm finishes, in order to facilitate the binary search in our query algorithm, we perform the following ``post-processing'' step: for each vertex $q$, by using the tree $T_{evb}(q)$, we use an array to store the last segments of the  sorted paths of $\Pi_1(q)$. Since the total number of stored paths in all vertices of the graph is $O(h^2\log h)$, the post-processing step can be easily done in $O(h^2\log^{3/2}h)$ time as well.

As a summary,
we can build a data structure of $O(n+h^2\log h)$ size in $O(n+h^2\log^{3/2}h)$ time such that each shortest minimum-link path query can be answered in $O(\log n +\log^2 h)$ time.


The following theorem summarizes our results on one-point path queries.


\begin{theorem}
For the one-point path queries, we have the following results.
\begin{enumerate}
\item
For minimum-link shortest paths,
we can build a data structure of $O(n+h\log h)$ size in $O(n+h\log^{3/2}h)$ time
such that each query can be answered in $O(\log n)$ time.

\item
For minimum-cost paths,
we can build a data structure of $O(n+h^2\log h)$ size in $O(n+h^2\log^{3/2}h)$ time such that each query can be answered in $O(\log n+h\log h)$ time.

\item
For shortest minimum-link paths,
we can build a data structure of $O(n+h^2\log h)$ size in $O(n+h^2\log^{3/2}h)$ time
such that each query can be answered in $O(\log n+\log^2 h)$ time.
\end{enumerate}
\end{theorem}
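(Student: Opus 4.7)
The plan is to assemble the three results by combining the preprocessing algorithm of Section~\ref{sec:new} on the reduced graph $G(\calB)$ with a gateway-based query scheme adapted from~\cite{ref:ChenOn01,ref:ChenTw16,ref:ChenSh00}. For the preprocessing, I would run our new algorithm from Section~\ref{sec:new} with $s$ as the source to compute and store at every vertex $q$ of $G(\calB)$ the corresponding set of paths (at most sixteen paths for minimum-link shortest paths, at most $O(h)$ for the other two types); by Theorem~\ref{theo:singlepath} this fits within the stated preprocessing bounds. In addition, I would build the histogram partition data structure of~\cite{ref:SchuiererAn96} on every corridor in $O(n)$ total time, sort the vertices of $G(\calB)$ on each cut-line and layer a fractional cascading structure~\cite{ref:ChazelleFr86} on the cut-line tree $T(\calB)$, and record the horizontal visibility decomposition of $\calP$ to locate query points; all of this absorbs into the claimed $O(n+\mathrm{poly}(h))$ preprocessing.

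For a query point $t$ in a junction rectangle, I would first locate $t$ in $O(\log n)$ time, then walk the cut-line tree $T(\calB)$ top-down using the fractional cascading structure to compute the $O(\log h)$ projection cut-lines of $t$ and, from them, the gateway set $V_g(t)$ in total $O(\log h)$ time, exactly as in~\cite{ref:ChenTw16}. For $t$ lying in a corridor $\calC$, I would additionally compute the corridor-connection point $t_d$ on each door $d$ of $\calC$, together with the intervals $I_1(t,d)$, $I_2(t,d)$ and the last segments/measures of $\pi_1(\calC,t,t_d)$ and $\pi_2(\calC,t,t_d)$, each in $O(\log n)$ time via the histogram partition, and then obtain $V_g(t_d)$ as above; if moreover $s\in \calC$, I would also include the smallest path from $s$ to $t$ inside $\calC$ as an extra candidate. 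Thus in every case the query reduces, in $O(\log n)$ time for location plus $O(\log h)$ additional time, to inspecting the paths stored at the $O(\log h)$ gateways.

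For the minimum-link shortest path queries, each gateway stores $O(1)$ paths, and extending any of them to $t$ by the dragging operation takes $O(1)$ time thanks to the rectilinear convexity of the gateway region observed in~\cite{ref:ChenOn01}; the best extension over all $O(\log h)$ gateways is returned, giving $O(\log n)$ query time. For the minimum-cost queries, each gateway stores $O(h)$ paths and we simply try each one, giving $O(\log n + h\log h)$ time, matching the claim.

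The main obstacle is the shortest minimum-link case, where each gateway still holds $\Theta(h)$ paths but we must achieve $O(\log n + \log^2 h)$ query time; the plan is to exploit the strengthened pruning Rule($a_3'$) from Section~\ref{sec:onepoint} so that at each gateway $q$ the stored paths split into four directional classes and each class into two sorted sequences $\Pi_1(q),\Pi_2(q)$ of common link distance, with strictly decreasing lengths and strictly increasing last-segment lengths; since the hit vertices of the last segments are therefore sorted in the drag direction, the best path from $q$ to extend to $t$ can be located by binary search in $O(\log h)$ time, giving the claimed $O(\log n+\log^2 h)$. The one remaining subtlety is that Rule($a_3'$) requires inserting and deleting paths from these sorted sequences on the fly during the preprocessing Dijkstra-like search; to stay within $O(n+h^2\log^{3/2}h)$ time I would maintain each sequence as a van~Emde~Boas tree~\cite{ref:CLRS09} keyed by the $\calB$-rank of the source of the last segment, supporting predecessor/successor, insert, and delete in $O(\log\log h)$ time, and observe that since each path is removed at most once the amortized overhead over all $O(h^2\log h)$ produced paths is $O(h^2\log h\log\log h)=O(h^2\log^{3/2}h)$, after which a linear post-processing pass materializes each sequence in an array to enable $O(\log h)$ binary search at query time. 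The correctness of the pruning under Rule($a_3'$) follows by a verbatim adaptation of the case analysis in Section~\ref{sec:correct}, which is what makes this the only delicate part of the argument.
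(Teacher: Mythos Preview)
Your proposal is correct and follows essentially the same approach as the paper: the same reduced graph $G(\calB)$ with $s$ as source, the same $O(\log h)$-gateway scheme via fractional cascading on $T(\calB)$, the same corridor-connection handling through histogram partitions, and for shortest minimum-link paths the same strengthened Rule($a_3'$), the same binary search over the two sorted sequences $\Pi_1(q),\Pi_2(q)$ per direction, and the same van~Emde~Boas maintenance with a final array post-pass. The only cosmetic difference is packaging; all key ideas and complexity arguments match the paper.
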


\section{The Two-Point Optimal Path Queries}
\label{sec:twopoint}

In this section, we present our results for two-point queries, i.e., both $s$ and $t$ are query points. We first give an approach that follows the similar scheme as  in~\cite{ref:ChenOn01}, and then describe another approach that can reduce the query time by a logarithmic factor with slightly more preprocessing. The second approach follows the scheme in~\cite{ref:ChenTw16} for solving two-point $L_1$ shortest path queries in an arbitrary polygonal domain.

Consider any two query points $s$ and $t$. Let $\calB$ be the set of all backbone points and let $G(\calB)$ be the reduced graph proposed in Section~\ref{sec:new}. We ``insert'' both $s$ and $t$ into $G(\calB)$ in the same way as in Section~\ref{sec:onepoint}, and let $G_{st}(\calB)$ be the resulting graph. We have the following lemma.

\begin{lemma}
Unless $\calP$ contains an L-shaped path connecting $s$ and $t$, applying our algorithm in Section~\ref{sec:new} on $G_{st}(\calB)$ can find an optimal \st\ path.
\end{lemma}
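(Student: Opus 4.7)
The plan is to adapt the proof of Lemma~\ref{lem:target} to the two-point setting, where $s$ and $t$ enter $G_{st}(\calB)$ only through the projection vertices and the $O(\log h)$ gateway edges added by the insertion procedure of Section~\ref{sec:onepoint}. Let $\pi$ be any optimal \st\ path in $\calP$. By the hypothesis that no L-shaped \st\ path exists, $\pi$ has at least three links.

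First, I would run $\pi$ through the same chain of length- and link-preserving transformations used in the proof of Lemma~\ref{lem:target}: replace each closed-corridor subpath by its canonical path, each open-corridor subpath by its canonical path, straighten each junction-rectangle piece into an L-shaped piece, partition the resulting curve by the backbone points it now contains, and push each staircase subpath. Let $\pi_5$ denote the resulting homotopic path of equal measure. The key observation is that the ``every pushed segment contains a backbone point'' step from Lemma~\ref{lem:target} is local, and its proof works verbatim for every segment of $\pi_5$ that is not incident to an endpoint. Because $\pi$ has at least three links, at least one strictly interior segment survives, so $\pi_5$ visits at least one backbone point.

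Next, let $b_s$ (resp.\ $b_t$) be the first (resp.\ last) backbone point on $\pi_5$ as we traverse from $s$ to $t$, allowing $b_s=b_t$. The subpath $\pi_5(b_s,b_t)$ has both endpoints in $\calB$, so the original construction of Lemma~\ref{lem:target} yields a path of $G(\calB)$ homotopic to $\pi_5(b_s,b_t)$ and of the same length. It remains to realize the two end pieces $\pi_5(s,b_s)$ and $\pi_5(b_t,t)$ in $G_{st}(\calB)$. By the minimality of $b_s$, the piece $\pi_5(s,b_s)$ contains at most two links; a short case analysis shows that $b_s$ lies on a projection cut-line of $s$ and is either a gateway of $s$ on that cut-line or is reachable from such a gateway by a single horizontal/vertical graph edge along the cut-line. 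Hence the edges $\overline{sv_s}$ and $\overline{v_s b_s}$ of $G_{st}(\calB)$, possibly followed by one dragging step of the kind already used inside $G(\calB)$, reproduce $\pi_5(s,b_s)$ with the same length and link count. The symmetric argument handles $\pi_5(b_t,t)$. Concatenating the three pieces produces a target path in $G_{st}(\calB)$, at which point the correctness analysis of Section~\ref{sec:correct}, applied with $G(\calB)$ replaced by $G_{st}(\calB)$, shows that the algorithm returns an optimal \st\ path.

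The main obstacle is the end-piece step: checking that the gateway/projection construction for $s$ really does align with the first backbone point on $\pi_5$, and handling the subcases in which $s$ or $t$ lies inside a corridor rather than a junction rectangle. In the latter situation one needs an additional canonical-path/corridor-connection argument in the spirit of Section~\ref{sec:general}, together with the observation that a short end piece of $\pi_5$ inside a corridor has a matching smallest subpath computable by the histogram data structure of~\cite{ref:SchuiererAn96}. Once the endpoint alignment is verified, the remainder is routine bookkeeping of homotopy classes and measures, exactly as in Lemma~\ref{lem:target} and its correctness proof in Section~\ref{sec:correct}.
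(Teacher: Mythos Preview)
Your approach is sound in outline but takes a much longer route than the paper, and the step you flag as the ``main obstacle'' is exactly where it is shakiest.

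The paper's proof is four sentences. It observes that the entire machinery of Section~\ref{sec:new} (Lemma~\ref{lem:target} together with the correctness analysis of Section~\ref{sec:correct}) rests on a single structural invariant, namely the analogue of Observation~\ref{obser:recreduced}: for any two points $p,q$ in the relevant point set with $R_{pq}$ empty, the graph contains a staircase $p$--$q$ path. The paper then simply verifies this invariant for the set $\calB\cup\{s,t\}$ in $G_{st}(\calB)$: the $\calB$--$\calB$ case is inherited from $G(\calB)$; the case where exactly one of $p,q$ is a query point is precisely what the gateway insertion of Section~\ref{sec:onepoint} is designed to guarantee; and the case $\{p,q\}=\{s,t\}$ is vacuous, since $R_{st}$ empty would give an L-shaped \st\ path, excluded by hypothesis. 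Once the invariant holds, Lemma~\ref{lem:target} and Section~\ref{sec:correct} apply verbatim.

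You instead rerun the entire $\pi\to\pi_5$ construction of Lemma~\ref{lem:target} and then try to splice in the end pieces $\pi_5(s,b_s)$ and $\pi_5(b_t,t)$ by hand. This would work, but your end-piece argument contains an imprecise claim: it is not true in general that ``$b_s$ lies on a projection cut-line of $s$''. The backbone point $b_s$ need not sit on any cut-line. What you actually need is that whenever $R_{sb_s}$ is empty, $G_{st}(\calB)$ contains a staircase path from $s$ to $b_s$ (and if $R_{sb_s}$ is not empty, you still need to thread through intermediate backbone points with empty consecutive rectangles, the first of which again has $s$ as a corner). But that is exactly the invariant the paper isolates. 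Once you recognize that this single property is all the algorithm needs, redoing the full transformation chain and the corridor case analysis becomes unnecessary: you can invoke Section~\ref{sec:new} as a black box.
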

\begin{proof}
Our algorithm in Section~\ref{sec:new} works due to the property in Observation~\ref{obser:recreduced}. Hence, the algorithm can find an optimal \st\ path on $G_{st}(\calB)$ if the following property holds: for any two points $p$ and $q$ in $\calB\cup \{s,t\}$, if $R_{pq}$ is empty, then $G_{st}(\calB)$ contains a staircase path connecting $p$ and $q$. If both $p$ and $q$ are in $\calB$, then the property trivially holds due to our way of constructing $G(\calB)$, which is a subgraph of $G_{st}(\calB)$. If only one of $p$ and $q$ is in $\{s,t\}$, e.g., $q=t$ and $p\in \calB$, then according to our discussion in Section~\ref{sec:onepoint}, our way of inserting $t$ into $G(\calB)$ makes sure that the property also holds.

It remains to consider the case where $\{p,q\}=\{s,t\}$. Note that if $R_{st}$ is empty, then there must be an L-shaped \st\ path in $\calP$. Hence, if $\calP$ does not have an L-shaped \st\ path, then $R_{st}$ is not empty and thus the property trivially holds. The lemma thus follows.
\end{proof}

Based on the preceding lemma, we first check whether $\calP$ has an L-shaped \st\ path, which can be done in $O(\log n)$ time~\cite{ref:ChenOn01,ref:ChenTw16,ref:ChenSh00}, e.g., by using the horizontal and vertical visibility decompositions of $\calP$. In the following, we assume that $\calP$ does not have an L-shaped \st\ path.

We first discuss the minimum-link shortest paths.
As in \cite{ref:ChenOn01}, in the preprocessing, for each vertex $p$ of the graph $G(\calB)$, we run our algorithm in Section~\ref{sec:new} on the graph $G(\calB)$ with $p$ as the source. After the algorithm, each vertex $q$ of the graph stores $O(1)$ paths $\pi(p,q)$ from $p$ to $q$. Using the techniques in \cite{ref:ChenOn01}, this can be done in $O(n+h^2\log^2 h)$ time and space for all vertices $p$ of $G(\calB)$.

We assume that both $s$ and $t$ are in junction rectangles.
To answer the query, we first compute the gateway sets $V_g(s)$ and $V_g(t)$ in $O(\log n)$ time. As in \cite{ref:ChenOn01}, we perform the dragging operations on the first segment and the last segment for $O(\log^2 h)$ paths $\pi(p,q)$ with $p\in V_g(s)$ and $q\in V_g(t)$ to obtain an optimal \st\ path, as follows.
Consider a gateway $p$ in $V_g(s)$. For each gateway $q\in V_g(t)$, recall that in the preprocessing, we have stored $O(1)$ paths $\pi(p,q)$ at $q$ with $p$ as the source point. For each such path $\pi(p,q)$, we perform the dragging operation on both its last and first segments to extend the path to obtain an \st\ path. Since $|V_g(t)|=O(\log h)$, there are $O(\log h)$ paths for $p$. Since $|V_g(s)|=O(\log h)$, there are $O(\log^2 h)$ such paths we need to consider. Because we can perform each dragging operation in $O(1)$ time~\cite{ref:ChenOn01}, the total time is $O(\log^2 h)$. Among all obtained \st\ paths, we return the one with the smallest measure as the optimal solution. Hence, the total query time is $O(\log n+\log^2 h)$.

If at least one of $s$ and $t$ is in corridors, then similar to the one-point query problem, we first find the corresponding corridor-connection points and then apply the concatenation procedure in addition to the dragging operations. We still need to consider $O(\log^2 h)$ paths, and the  query time is still $O(\log n+\log^2 h)$.

In summary, with $O(n+h^2\log^2 h)$ time and space preprocessing, each two-point minimum-link shortest path query can be answered in $O(\log n+\log^2 h)$ time.

For the minimum-cost path queries, we use the similar approach. In the preprocessing, for each vertex $p$ of the graph $G(\calB)$, we run our algorithm in Section~\ref{sec:new} (for computing the minimum-cost paths) on $G(\calB)$ with $p$ as the source point. Using the techniques in \cite{ref:ChenOn01}, this can be done in $O(n+h^3\log^2h)$ time and space.
The query algorithm follows the same scheme as above. Since in this problem each vertex stores $O(h)$ paths, we have to consider $O(h\log^2 h)$ paths. Therefore, the query time becomes $O(\log n+h\log^2 h)$.

For the shortest minimum-link path queries, we also use the similar approach. We do the same preprocessing as before by using the algorithm for computing shortest minimum-link paths in Section~\ref{sec:new}. As the minimum-cost path problem, the preprocessing takes $O(n+h^3\log^2h)$ time and space. The query algorithm follows the same scheme as above. In this problem, although each vertex stores $O(h)$ paths, we can use binary search in the same way as in the one-point query problem, and thus the total query time is $O(\log n+\log^3 h)$.

In addition, we also consider the two-point minimum-link path queries since the problem was not studied before. As discussed in Section~\ref{sec:corrected}, Rule($a_4$) makes sure that each vertex of $G(\calB)$ only needs to store $O(1)$ paths. Hence, the algorithm is similar to the one for the minimum-link shortest paths. The preprocessing time and space is $O(n+h^2\log^2 h)$ and the query time is $O(\log n+\log^2 h)$.

\subsection{Reducing the Query Times}

With slightly more preprocessing, we reduce the query time for each problem by a factor of $\log h$. Similar approach was already used in \cite{ref:ChenTw16}. The main idea is to build an {\em enhanced graph} $G_E(\calB)$  of larger size on the backbone points of $\calB$, 
so that we only need a set of $O(\sqrt{\log h})$ gateways for each of $s$ and $t$, which reduces the query time by a factor of $\log h$. The details are given below.

The enhanced graph $G_E(\calB)$ is still built on $\calB$ with respect to the reduced domain $\calP_r$ introduced in Section~\ref{sec:compute}. Comparing with the original graph $G(\calB)$, $G_E(\calB)$ has more Steiner points as vertices and more edges. Specifically, for each vertex $v$ of $\calB$, instead of projecting it to a single cut-line in each level of the cut-line tree $T(\calB)$, it is projected to $O(2^{\sqrt{\log n}})$ cut-lines in every $\sqrt{\log n}$ consecutive levels of the $T(\calB)$ (thus creating $O(2^{\sqrt{\log n}})$ Steiner points), and these cut-lines form a complete binary tree of height $\sqrt{\log n}$. In this way, the graph $G_E(\calB)$ has $O(h\sqrt{\log h}2^{\sqrt{\log h}})$ vertices and edges.
Using $G_E(\calB)$, for any query point, we can define a set of $O(\sqrt{\log h})$ gateways that  ``control'' paths from the query point to all vertices of $G_E(\calB)$. By using the reduced domain $\calP_r$, $G_E(\calB)$ can be built in $O(n+h\log^{3/2} h2^{\sqrt{\log h}})$ time.
Refer to \cite{ref:ChenTw16} for more details.

We first discuss the minimum-link shortest paths. We do the following in the preprocessing. For each vertex $p$ of $G_E(\calB)$, we apply our algorithm in Section~\ref{sec:new} on $G_E(\calB)$ with $p$ as the source point, after which for each vertex $q$ of the graph, it stores $O(1)$ paths $\pi(p,q)$ from $p$ to $q$. This can be done in $O(n+h\log^{3/2} h2^{\sqrt{\log h}})$ time and $O(n+h\log^{1/2} h2^{\sqrt{\log h}})$ space. Running the algorithm for all vertices $p$ of $G_E(\calB)$ takes $O(n+h^2\log^{2} h4^{\sqrt{\log h}})$ time and $O(n+h^2\log h4^{\sqrt{\log h}})$ space.

We assume that both $s$ and $t$ are in junction rectangles. We first compute their gateways sets $V_g(s)$ and $V_g(t)$, which can be done again in $O(\log h)$ time (with $O(n+h\log^{3/2} h2^{\sqrt{\log h}})$ time and $O(n+h\log^{1/2} h2^{\sqrt{\log h}})$ space preprocessing)~\cite{ref:ChenTw16}. Since the sizes of both gateway sets are bounded by $O(\sqrt{\log h})$, we only need to consider $O(\log h)$ paths to extend to connect $s$ and $t$. Therefore, the query time becomes $O(\log h)$. If $s$ or $t$ is in a corridor, then we again need to first compute their corridor-connection points in $O(\log n)$ time, and then follow the approach we discussed before but with only $O(\log h)$ paths to consider. Hence, the total query time is $O(\log n)$.

In summary, with $O(n+h^2\log^{2} h4^{\sqrt{\log h}})$ time and $O(n+h^2\log h4^{\sqrt{\log h}})$ space preprocessing, each two-point query of the minimum-link shortest paths can be answered in $O(\log n)$ time. Note that $h^2\log^{2} h4^{\sqrt{\log h}}=O(h^{2+\epsilon})$ for any $\epsilon>0$.

The same approach is also applicable for other two types of optimal paths. For the minimum-cost paths, the preprocessing has one more $h$ factor on both the time and space. Specifically, with $O(n+h^3\log^{2} h4^{\sqrt{\log h}})$ time and $O(n+h^3\log h4^{\sqrt{\log h}})$ space preprocessing, each query can be answered in $O(\log n+h\log h)$ time. For the shortest minimum-link paths, the preprocessing complexities are the same as the above for the minimum-cost paths, but the query time is $O(\log n+\log^2 h)$. For minimum-link path queries, the complexities of the preprocessing and the query algorithm are all the same as those for the minimum-link shortest paths.

We summarize the two-point query results for all problems in the following theorem.

\begin{theorem}
For the two-point path queries, we have the following results.
\begin{enumerate}
\item
For minimum-link shortest paths or minimum-link paths,
we can build a data structure of $O(n+h^2\log^2 h)$ size in $O(n+h^2\log^2 h)$ time such that each query can be answered in $O(\log n+\log^2 h)$ time; alternatively, we can build a data structure of $O(n+h^2\log h4^{\sqrt{\log h}})$ size in $O(n+h^2\log^{2} h4^{\sqrt{\log h}})$ time such that each query can be answered in $O(\log n)$ time.

\item
For minimum-cost paths, we can build a data structure of $O(n+h^3\log^2 h)$ size in $O(n+h^3\log^2 h)$ time such that each query can be answered in $O(\log n+h\log^2 h)$ time; alternatively, we can build a data structure of $O(n+h^3\log h4^{\sqrt{\log h}})$ size in $O(n+h^3\log^{2} h4^{\sqrt{\log h}})$ time such that each query can be answered in $O(\log n+h\log h)$ time.

\item
For shortest minimum-link paths,
we can build a data structure of $O(n+h^3\log^2 h)$ size in $O(n+h^3\log^2 h)$ time such that each query can be answered in $O(\log n+\log^3 h)$ time; alternatively, we can build a data structure of $O(n+h^3\log h4^{\sqrt{\log h}})$ size in $O(n+h^3\log^{2} h4^{\sqrt{\log h}})$ time such that each query can be answered in $O(\log n+\log^2 h)$ time.
\end{enumerate}
\end{theorem}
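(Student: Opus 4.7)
The plan is to adapt the one-point query framework of Section~\ref{sec:onepoint} to the two-point setting uniformly across all three problems. In preprocessing I would run the single-source algorithm of Section~\ref{sec:new} from every vertex of a path-preserving graph on $\calB$, and at query time insert both $s$ and $t$ into that graph exactly as in Section~\ref{sec:onepoint}, so that each pair of gateways $(p,q)\in V_g(s)\times V_g(t)$ gives access to a set of precomputed paths from $p$ to $q$. A query then enumerates these gateway pairs, extends the stored path at each end via a dragging or concatenation operation, and returns the best resulting \st\ path. Since $R_{st}$ being empty coincides with the existence of an L-shaped \st\ path, I would first dispose of that single exceptional case in $O(\log n)$ time using the horizontal and vertical visibility decompositions of $\calP$, so that afterwards the needed analogue of Observation~\ref{obser:recreduced} applies to every pair in $\calB\cup\{s,t\}$.

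For the first (smaller) data structure I would use the reduced graph $G(\calB)$ of Section~\ref{sec:reduced}. Running the single-source algorithm from each of its $O(h\log h)$ vertices costs $O(n+h^2\log^2 h)$ time when each vertex stores $O(1)$ paths (min-link shortest paths and pure min-link paths) and $O(n+h^3\log^2 h)$ time when each vertex stores $\Theta(h)$ paths (min-cost and shortest min-link paths). Given a query, I compute $V_g(s)$ and $V_g(t)$ in $O(\log n)$ time via the cut-line tree augmented with a fractional cascading structure, yielding $O(\log^2 h)$ gateway pairs; a final dragging/concatenation sweep per pair gives query times $O(\log n+\log^2 h)$, $O(\log n+h\log^2 h)$, and, using Rule~($a_3'$) to keep the per-gateway path lists sorted so that the best continuation can be found by binary search, $O(\log n+\log^3 h)$ for shortest min-link paths.

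For the second (larger) data structure I would replace $G(\calB)$ with the enhanced graph $G_E(\calB)$ borrowed from~\cite{ref:ChenTw16}, which places $O(2^{\sqrt{\log h}})$ Steiner points per vertex in complete binary subtrees of height $\sqrt{\log h}$ spanning the cut-line tree in blocks. This guarantees that every query point has only $O(\sqrt{\log h})$ gateways, hence only $O(\log h)$ gateway pairs, shaving a $\log h$ factor off every query bound. Running the same single-source preprocessing from each of the $O(h\sqrt{\log h}\,2^{\sqrt{\log h}})$ vertices of $G_E(\calB)$ produces the claimed preprocessing bounds $O(n+h^2\log^2 h\,4^{\sqrt{\log h}})$ and $O(n+h^3\log^2 h\,4^{\sqrt{\log h}})$, and gives the advertised query times. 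Cases in which $s$ or $t$ lies inside a corridor $\calC$ are absorbed by first computing the corridor-connection point on each door of $\calC$, together with the intervals $I_1,I_2$ and the two admissible paths, using the histogram-partition data structure of~\cite{ref:SchuiererAn96} in $O(\log n)$ extra time, and then performing the gateway-pair enumeration with the concatenation procedure of Section~\ref{sec:general} substituted at the corridor end of the path.

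The main obstacle is justifying that enumerating only these few gateway pairs really captures an optimal \st\ path, i.e., proving the two-point analogue of Lemma~\ref{lem:target} for the augmented graph $G_{st}(\calB)$ and its enhanced variant. The crux is the structural property that for any $p,q\in\calB\cup\{s,t\}$ with $R_{pq}$ empty the augmented graph contains a staircase path from $p$ to $q$: for pairs with at least one endpoint in $\calB$ this is exactly what the gateway construction of Section~\ref{sec:onepoint} delivers, and the only remaining pair $(s,t)$ with $R_{st}$ empty is precisely the L-shaped case that has already been removed. The rest amounts to bookkeeping: the preprocessing costs follow from the graph sizes and the $O(1)$ or $O(h)$ bound on paths stored per vertex enforced by the discard rules, and the query costs follow from the sizes of the gateway sets and the constant-time dragging step of~\cite{ref:ChenOn01} together with the binary-search refinement for shortest min-link paths.
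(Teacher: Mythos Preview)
Your proposal is correct and follows essentially the same approach as the paper: precompute single-source information from every vertex of $G(\calB)$ (respectively $G_E(\calB)$), handle the L-shaped case separately so that the staircase property of Observation~\ref{obser:recreduced} extends to $\calB\cup\{s,t\}$, and answer a query by enumerating gateway pairs with dragging at both ends (plus corridor-connection handling and the Rule~($a_3'$) binary search for shortest minimum-link paths). The paper's argument matches yours point for point, including the use of the enhanced graph from~\cite{ref:ChenTw16} to shrink each gateway set to $O(\sqrt{\log h})$ and thereby shave the extra $\log h$ factor from the query time.
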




\bibliographystyle{plain}
\bibliography{reference}

%


\end{document}